\definecolor{myPurple}{RGB}{230,230,255}
\definecolor{interesting}{RGB}{140,0,60}
\definecolor{brown}{RGB}{210,180,140}
\tikzset{
  paramRed/.style={
    draw, rectangle, fill=gray!20, font=\small, rounded corners=3pt,
    minimum width=2cm, align=center},
  paramGreen/.style={
    draw, rectangle, fill={rgb,255:red,255; green,230; blue,204}, font=\small, rounded corners=3pt,
    minimum width=2cm, align=center},
  paramHalf/.style={
    draw, rectangle,
    minimum width=2cm, align=center, font=\small, rounded corners=3pt,
    path picture={
      \begin{scope}[sharp corners]
        \fill[gray!20]
          ($ (path picture bounding box.south west)!0.5!(path picture bounding box.north west) $)
          rectangle
          (path picture bounding box.north east);
        \fill[draw=none, fill=white]
          (path picture bounding box.south west)
          rectangle
          ($ (path picture bounding box.south east)!0.5!(path picture bounding box.north east) $);
      \end{scope}
    }
  }
}
\renewcommand{\epsilon}{\varepsilon}
\renewcommand{\phi}{\varphi}
\newcommand{\Omc}{\ensuremath{\mathcal{O}}\xspace}
\newcommand{\Cc}{\mathscr{C}}
\newcommand{\Oof}{\Omc}
\newcommand{\N}{\mathbb{N}}
\newcommand{\clique}{\textsc{Clique}\xspace}
\newcommand{\sol}{\texttt{Sol}}
\newcommand{\soli}{\sol_i}
\newcommand{\solj}{\sol_j}
\newcommand{\solh}{\sol_h}
\newcommand{\XNLP}{\ensuremath{\mathsf{XNLP}}}
\theoremstyle{remark}
\newtheorem{theorem}{Theorem}[section]
\newtheorem{definition}{Definition}[section]
\newtheorem{lemma}{Lemma}[section]
\newtheorem{claim}{Claim}
\Crefname{corollary}{Corollary}{Corollaries}
\Crefname{lemma}{Lemma}{Lemmas}
\Crefname{section}{Section}{Sections}
\newtheorem*{result*}{}
\newtheorem*{remark*}{Remark}
\newcommand{\MSO}{\textsf{MSO$_1$}\xspace}
\newcommand{\MSOT}{\textsf{MSO$_2$}\xspace}
\newcommand{\MSOD}{\textsf{MSO$_1$}-\textsc{Discovery}}
\newcommand{\MSOTD}{\textsf{MSO$_2$}-\textsc{Discovery}}
\newcommand{\FOD}{\textsf{FO}-\textsc{Discovery}}
\renewcommand{\phi}{\varphi}
\renewcommand{\emptyset}{\varnothing}
\begin{document}
\title{
On Algorithmic Meta-Theorems for Solution Discovery:\\ Tractability and Barriers}
\author{Nicolas Bousquet\thanks{Partly supported by ANR project ENEDISC (ANR-24-CE48-7768-01).}\\Université Claude Bernard Lyon 1, France \and
Amer E. Mouawad\\American University of Beirut, Lebanon \and 
Stephanie Maaz\thanks{Funded by a grant from the Natural Sciences and Engineering Research Council of Canada.}\\University of Waterloo, Canada \and
Naomi Nishimura\footnotemark[1]\\University of Waterloo, Canada \and 
Sebastian Siebertz\\ University of Bremen, Germany}
\date{}
\maketitle

\begin{abstract}
\noindent Solution discovery asks whether a given (infeasible) starting configuration to a problem can be transformed into a feasible solution using a limited number of transformation steps.
This paper investigates meta-theorems for solution discovery for graph problems definable in monadic second-order logic (\MSO and \MSOT) and first-order logic (\FO) where the transformation step is to slide a token to an adjacent vertex, focusing on parameterized complexity and structural graph parameters that do not involve the transformation budget $b$.
We present both positive and negative results. 
On the algorithmic side, we prove that \MSOTD\ is in \XP\ when parameterized by treewidth and that \MSOD\ is fixed-parameter tractable when parameterized by neighborhood diversity. 
On the hardness side, we establish that \FOD\ is \W[1]-hard when parameterized by modulator to stars, modulator to paths, as well as twin cover, numbers. 
Additionally, we prove that \MSOD\ is \W[1]-hard when parameterized by bandwidth.
These results complement the straightforward observation that solution discovery for the studied problems is fixed-parameter tractable when the budget $b$ is included in the parameter (in particular, parameterized by cliquewidth$+b$, where the cliquewidth of a graph is at most any of the studied parameters), and provide a near-complete (fixed-parameter tractability) meta-theorems investigation for solution discovery problems for \textsf{MSO}- and \FO-definable graph problems and structural parameters larger than cliquewidth.    
\end{abstract}

\section{Introduction}
\label{sec:intro}
Real-world systems rarely remain static. 
From automated warehouses adapting to shifting demand patterns~\cite{elementlogic2024} to quantum circuits reallocating qubits for optimal computation~\cite{koch2007charge,DBLP:conf/cgo/SiraichiSCP18,DBLP:journals/pacmpl/SiraichiSCP19}, systems must continuously transform their configurations through sequences of local modifications. 
This dynamic nature of real systems motivates the study of transformation problems, where we seek to change one configuration into another through a sequence of permitted operations.

Over the last few decades, considerable effort has focused on finding efficient transformations between solutions. 
For example, bounded transformation problems under the \emph{combinatorial reconfiguration} framework~\cite{nishimura2018introduction,van2013complexity} take as input two {\em configurations} (for example, solutions to a problem instance, such as independent sets of a graph) and determine whether the first can be transformed into the second via a bounded length sequence of small, local modifications, while maintaining feasibility at every intermediate step.
For vertex subset problems, such as \textsc{Independent Set}, we can view a configuration as a set of tokens placed on the vertices of a solution.
Each reconfiguration step corresponds to a simple change, such as adding, deleting, or moving a token.
The solution space and the complexity of deciding the existence of a transformation depend on the allowed modification rules such as in the \emph{token jumping model} where, at each step, exactly one token can be moved to any unoccupied vertex in the graph and in the \emph{token sliding model} where, at each step, exactly one token moves along an edge of the graph to an unoccupied vertex.

While the reconfiguration framework captures situations in which, starting with a solution, we need each transformation to result in another solution, many practical scenarios present a different challenge.
For instance, urban bike-sharing systems require the distribution of bicycles every few hours to match demand patterns at different stations.
When commuters leave many stations empty and others overflowing, the current distribution becomes invalid (that is, it fails to meet minimum inventory requirements).
Rebalancing trucks must then transform this infeasible configuration into any valid one while minimizing travel distance to reduce operational costs.
The key insight is that during redistribution, stations are allowed to have too few or too many bikes; what matters is reaching as efficiently as possible any final configuration where all minimum requirements are satisfied.
Similarly, in many other applications, if a system becomes corrupted with an invalid configuration, maintaining feasibility during transformation is not required, as it may often be permissible or even preferable to shut down the system during the transformation. 
We then simply wish to obtain a valid solution with minimum transformation cost (that is, time or other resources), preferring efficient restoration over maintaining functionality throughout the transformation.
Inspired by this observation, Fellows et al.~\cite{fellows2023solution} introduced \emph{solution discovery} where instead of seeking a feasibility-preserving transformation between two known solutions, we begin with one known infeasible (corrupted) configuration and ask whether a solution can be reached via a short transformation sequence.
Although similar ideas have been studied in various fields from graphs to geometry with discrete or continuous optimization~\cite{DBLP:conf/soda/DemaineHMSOZ07,DBLP:journals/jco/AnariFGS16}, solution discovery formalizes and further generalizes the approach.
The goal of this paper is to investigate meta-theorems for a class of solution discovery problems on graphs. 

Although the application of solution discovery extends far beyond graphs, in this paper, we confine our focus to the following problem: (fix/given) a vertex subset property $\Pi$ of graphs, given a graph $G$, an initial infeasible configuration $S\subseteq V(G)$, and an integer budget~$b$, does there exist a final feasible configuration $T \subseteq V(G)$ that satisfies $\Pi$ on $G$ such that there exists a transformation sequence of length at most~$b$ from~$S$ to $T$? 
To further specify the transformation sequence, we represent $S$ by placing a token on each vertex of $S$, and (mainly) consider transformation sequences where a transformation step consists of a token slide. 

Unlike problems in reconfiguration (where both endpoints of the reconfiguration sequence are part of the input and intermediate steps are solutions), solution discovery starts with an infeasible configuration, introduces existential quantification over the final configuration, and does not require intermediate configurations to be solutions.
These subtle changes significantly alter the computational landscape. 
Many reconfiguration problems are \PSPACE-complete~\cite{hearn2005pspace,wrochna2018reconfiguration}; moreover, the requirement that all configurations in the sequence be solutions can lead to exponential length transformations~\cite{BousquetDPT23,DBLP:journals/dam/BonamyDO21}. 
In sharp contrast, in solution discovery the number of steps needed to reach a final configuration, if one exists,  
is at most $kn$ steps, where $k$ is the number of tokens, or, in other words, the solution or configuration size, and~$n$ is the number of vertices of the instance graph.
Hence, the transformation sequence can serve as a polynomial-size certificate, demonstrating that the problems lie in \NP. 
Not surprisingly, the solution discovery variants of common \NP-complete problems (where~$\Pi$ is defined through the classical problem and the final configurations must satisfy the standard problem definition) such as \textsc{Vertex Cover, Independent Set}, and \textsc{Dominating Set}, are \NP-complete~\cite{fellows2023solution,DBLP:conf/esa/DemaineHM09}. 
More surprisingly, the solution discovery variants of polynomial-time solvable problems such as \textsc{Vertex Cut, Matching}, and \textsc{Shortest Path} are also \NP-complete~\cite{GroblerMMMRSS24}. 
A fine-grained parameterized classification of the above problems with respect to the parameters $k$ (number of tokens), $b$ (discovery budget), and structural parameters such as treewidth, pathwidth, and feedback vertex set number was established in several papers~\cite{DBLP:conf/esa/DemaineHM09,fellows2023solution, GroblerMMMRSS24, grobler2024kernelization}, to which we refer the reader for more details on solution discovery and related work.

\subsection{Meta-theorems for reconfiguration and solution discovery}
Where possible, instead of individual results on a wide variety of problems and settings, we seek unifying frameworks, so-called \emph{algorithmic meta-theorems}, that can systematically capture the tractability of problems. 
Algorithmic meta-theorems are often phrased in terms of logic (describing the type of problems) and structural parameters (where they can be solved efficiently). 
The logics most commonly studied in this context are first-order logic (\FO), monadic second-order logic \MSO, which extends \FO\ with quantification over vertex subsets, and monadic second-order logic \MSOT, which extends \MSO by quantification over edge subsets. 
For example, Courcelle’s famous meta-theorem states that every problem definable in \MSOT can be solved in linear time on every fixed class of graphs with bounded treewidth~\cite{courcelle1990monadic} (subsequently extended for~\MSO to the more general classes of bounded cliquewidth~\cite{courcelle2000linear}). 
These results led to a new field in optimization, trying to understand the limits of tractability depending on graph classes and logics; we refer the reader to the surveys~\cite{kreutzer2008algorithmic,SiebertzV24}. 
All formal definitions of the logics and the graph parameters relevant for our results are given in the preliminaries. 

Courcelle's theorem traditionally addresses formulas with no \emph{free variables}, that is, formulas that make complete, self-contained statements about whether a graph has a certain property, for example, whether it is $3$-colorable, Hamiltonian, etc.
A variable is `free' if it can appear in a formula without being quantified (bound by $\exists$ or $\forall$), meaning its value isn't specified within the formula itself (just like the graph vertex and edge sets).
In reconfiguration and solution discovery, we need to describe a property of a specific vertex set within a graph (that is, the set of vertices, each containing one token).
To do this, we use a formula~$\phi(X)$ with one free set variable $X$ that represents the vertex set that we are examining against the property described by the formula.
Given a graph $G$ and a specific set $T \subseteq V(G)$, we say that $T$ is a \emph{solution} if substituting $T$ for $X$ makes the formula true in~$G$.

This approach allows us to express many vertex-subset problems, such as \textsc{Independent Set}, \textsc{Vertex Cover}, and \textsc{Dominating Set}, using \FO.
Others, such as \textsc{Vertex Cut}, require the additional expressive power of \MSO.
Importantly, such a formula specifies the property that a set must have, not the constraints on its size.

Courcelle's meta-theorem immediately yields a meta-theorem for combinatorial reconfiguration, as observed by Mouawad et al.~\cite{mouawad2014reconfiguration}. 
If a problem (that is, its feasible solutions) can be defined in~\MSO, then the reconfiguration variants of the problem are fixed-parameter tractable (\FPT) with respect to the parameter $b+tw$, where~b is the length of the transformation sequence and~$tw$ is the treewidth of the input graph. 
Like Courcelle's theorem, this result unifies the tractability of a broad range of problems and modification rules, holding for \MSOT and parameter $b+tw$ as well as for \MSO and parameter $b+cw$, where $cw$ denotes the cliquewidth of the input graph (which is at most its treewidth) and using the results of Courcelle et al.~\cite{courcelle1990monadic,courcelle2000linear}.
Similarly, it is easy to see that if a problem is \FO-definable, then its reconfiguration variants are fixed-parameter tractable with respect to the parameter $b$ on every class of graphs that admits efficient \FO\ model checking~\cite{bousquet2024survey}. 
\emph{Model checking} is the task of determining whether a given formula is true in a given graph.
These results hold for both token sliding and token jumping and follow from the fact that the existence of reconfiguration sequences can be expressed as formulas of the respective logics. 

For solution discovery, we observe that if a problem is \MSOT-definable, then its solution discovery variants (in the token jumping and token sliding models) are fixed-parameter tractable when parameterized by $b+tw$, 
if a problem is \MSO-definable, then its solution discovery variants are fixed-parameter tractable when parameterized by $b+cw$, and if a problem is \FO-definable, then its solution discovery variants are fixed-parameter tractable parameterized by $b$ on every class that admits efficient \FO\ model checking. 
This is possible, for example, on nowhere dense graph classes~\cite{grohe2014deciding}, including all classes with excluded minors or excluded topological minors. 
These results again follow simply from the general results of model checking for the respective logics and the observation that we can write formulas that quantify transformation sequences with lengths depending only on $b$. 
We state these observations as theorems for future reference (\Cref{thm:mso-discovery-b} and~\Cref{thm:fo-discovery-b}). 

Wrochna~\cite{wrochna2018reconfiguration} revealed the critical role that the parameter $b$ plays in the tractability results of reconfiguration problems for general graph classes.
He showed that if the sequence length is not included as a parameter, then the reconfiguration variants of several \FO-definable problems become \PSPACE-complete even on extremely restricted graph classes, namely, on graphs with constant bandwidth (a restrictive subclass of bounded treewidth graphs). 
Undaunted by this result, Gima et al.~\cite{gima2024algorithmic} studied the possibility for meta-theorems for reconfiguration variants of \MSO-definable problems, with respect to parameters that do not involve~$b$, and which are large with respect to cliquewidth (but incomparable to bandwidth). 

\section{Main results}
\label{sec:main}
In the spirit of the implications of the work of Wrochna~\cite{wrochna2018reconfiguration}, Grobler et al.~\cite{grobler2024kernelization} showed that some solution discovery variants of several \FO-definable problems become \XNLP-hard, and hence unlikely to be in \FPT\ when parameterized by treewidth alone.
Our first positive result is an \XP\ algorithm for \MSOTD\ when parameterized by treewidth, which by the above hardness result is the best we can hope for for this parameter.  
Thus, we prove the following.

\begin{restatable*}{theorem}{thmXP}
\MSOTD\ is in \XP\ when parameterized by treewidth.
\end{restatable*}

Formulating solution discovery as a formula (that is, incorporating the movement of the tokens) and directly applying (an optimization variant of) Courcelle's theorem would only give a running time of $2^{O(tw \cdot k)}$. 
We thus have to go into the details of the dynamic programming to carefully keep track of the token movements, which allows us to reduce the exponential dependence to $O(k^{tw})$. 
The details are presented in \Cref{sec:treewidth}.

Note that this result contrasts with results on reconfiguration problems,  which are known to be \PSPACE-complete when restricted to graphs of constant treewidth (and even bandwidth)~\cite{BousquetDMMP25,wrochna2018reconfiguration}.
It is unclear whether \XP algorithms for \MSOD\ exist when the parameter is the cliquewidth of the graph. 
We leave this question open. 

\medskip
Inspired by the approach of Gima et al.~\cite{gima2024algorithmic} we then initiate an investigation of (fixed-parameter tractability) meta-theorems for the solution discovery variants of \MSO- and \FO-definable problems in the token sliding model with respect to structural graph parameters larger than cliquewidth.
We demonstrate that solution discovery is hard for some \FO-definable problems with respect to very large parameters and that the design of polynomial-time algorithms, when they exist for restricted graph classes, is far from simple.
Our main results are depicted in~\Cref{fig:results}. We obtain the following positive result:

\begin{figure}
    \centering
    \resizebox{0.8\textwidth}{!}{
      \begin{tikzpicture}[>=latex, thick, every node/.style={transform shape}]
        \node[paramGreen] (vc)        at (-3,1.5) {vertex cover};
        \node[draw, rectangle, font=\small, rounded corners=3pt, minimum width=2cm, align=center]  (maxleaf)   at (2,2) {max leaf number};
        \node[paramGreen] (nd)        at (-4,3.75) {neighborhood \\ diversity};
        \node[paramRed, draw, rectangle, font=\small, rounded corners=3pt, minimum width=2cm, align=center] (twc)       at (-7,3.5) {twin cover};
        \node[paramRed]   (mstars)    at (-1,3.5) {modulator \\ to stars};
        \node[paramRed]   (mpaths)    at (4.5,3.75) {modulator \\ to paths};
        \node[paramHalf, draw, rectangle, font=\small, rounded corners=3pt, minimum width=2cm, align=center]  (bw)        at (1.5,3.5) {bandwidth};
        \node[paramRed] (mcluster)  at (-7,5) {cluster deletion};
        \node[paramRed]   (mwidth)    at (-4,5) {modulator width};
        \node[paramRed]   (td)        at (-1,4.5) {treedepth};
        \node[paramRed]   (fvs)       at (4,5.25) {feedback \\ vertex set};
        \node[paramRed]   (pwidth)    at (0,5.5) {pathwidth};
        \node[paramRed]   (cw)        at (-4,7) {cliquewidth};
        \node[paramRed]   (tw)        at (1.5,6.5) {treewidth};
        \draw[->] (vc)      -- (nd);
        \draw[->] (vc)      -- (twc);
        \draw[->] (vc)      -- (mpaths);
        \draw[->] (twc)     -- (mcluster);
        \draw[->] (twc)     -- (mwidth);
        \draw[->] (nd)      -- (mwidth);
        \draw[->] (mcluster)-- (cw);
        \draw[->] (mwidth)  -- (cw);
        \draw[->] (tw)      -- (cw);
        \draw[->] (maxleaf) -- (mpaths);
        \draw[->] (mstars)  -- (td);
        \draw[->] (mstars)  -- (fvs);
        \draw[->] (td)      -- (pwidth);
        \draw[->] (maxleaf) -- (bw);
        \draw[->] (mpaths)  -- (fvs);
        \draw[->] (mpaths)  -- (pwidth);
        \draw[->] (bw)      -- (pwidth);
        \draw[->] (fvs)     -- (tw);
        \draw[->] (pwidth)  -- (tw);
        \draw[->] (vc)  -- (mstars);
      \end{tikzpicture}
    }
    \caption{Our \MSO and \FO\ results: beige boxes indicate tractability of \textsc{\MSO-Discovery} with respect to the corresponding parameter, gray boxes indicate hardness of \textsc{\FO-Discovery}, and white boxes indicate open problems.
    For parameter bandwidth, we have a hardness result only for \textsc{\MSO-Discovery}; the complexity of \textsc{\FO-Discovery}  remains an open problem.
    An arrow between two boxes indicates the existence of a function in the parameter above that is a lower bound on the parameter below.} 
    \label{fig:results}
\end{figure}
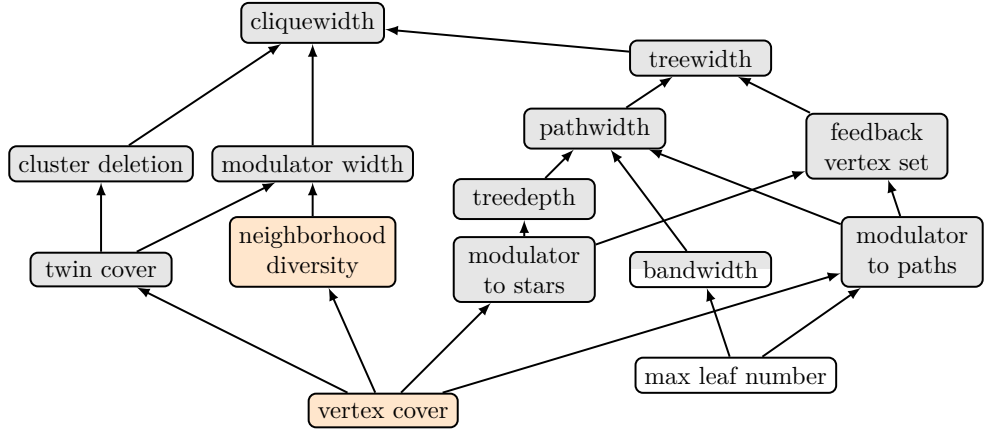

\begin{restatable*}{theorem}{thmND}\label{thm:logic-solution-discovery-neighborhood-diversity-fpt}
    \MSOD\ is in \FPT{} when parameterized by neighborhood diversity.
\end{restatable*}

The proof of our result depends critically on the concept of ``shapes'' of final configurations~\cite{DBLP:journals/lmcs/KnopKMT19}, which limits the number of possible shapes to a function of the formula and the neighborhood diversity number, combined with a result of Gima et al.~\cite{gima2024algorithmic} showing that all configurations sharing the same shape are feasible or infeasible.
We will prove that, using a flow algorithm, for each shape, we can find the shortest transformation to a solution of this shape (if any such transformation exists) in polynomial time. 
The proof of this result can be found in \Cref{sec:nd}.

We then complement this positive result by providing several hardness results. We first prove that the following holds:

\begin{restatable*}{theorem}{thmMTS}\label{thm:logic-solution-discovery-modulator-stars-hardness}
    \FOD\ is \W[1]-hard when parameterized by modulator to stars number.   
\end{restatable*}

\begin{restatable*}{theorem}{thmMTP}\label{thm:logic-solution-discovery-modulator-paths-hardness}
     \FOD\ is \W[1]-hard when parameterized by modulator to paths number.   
\end{restatable*}

These two results follow from two variations of the same reduction from \textsc{Multicolored Clique}. 
To avoid long and repetitive hardness proofs, we combine some of the arguments by abstracting out the commonalities. 
The proofs of these results are deferred to \Cref{sec:logic-solution-discovery-modulator-hardness}.

We then prove that the \FOD\ problem is also hard parameterized by twin cover, using a reduction from the \textsc{Planar Arc Supply} problem. In \textsc{Planar Arc Supply}, each arc is assigned a list of pairs of values (one value in each pair for each endpoint of the arc); the goal is the selection of a pair for each arc resulting in values being sent to endpoints that sum to specified demands. In our reduction, values are represented as cliques; a numerical encoding scheme is used to ensure that values are taken from a single pair.

\begin{restatable*}{theorem}{thmTC}\label{thm:logic-solution-discovery-twincover-hardness}
    \FOD\ is \textsf{W[1]}-hard when parameterized by twin cover.  
\end{restatable*}

Finally, we present our last hardness result in 
\Cref{sec:bandwidth}, which also entails a reduction from \textsc{Planar Arc Supply}, and which is the most technical contribution of the paper. 

\begin{restatable*}{theorem}{thmBW}\label{thm:logic-solution-discovery-bandwidth-hardness}
\MSOD\ is \W[1]-hard with respect to parameter bandwidth. 
\end{restatable*}

In the reduction, values in pairs correspond to numbers of tokens stored in arc-gadgets, and the selection of pairs for arcs follows a delicate choreography, resulting in tokens being moved to demand nodes in each vertex-gadget.

Note that, in contrast to the other hardness results, here we obtain hardness only for an \MSO- and not \FO-definable problem.
We believe that the hardness result can be extended to \FO\ and conjecture that \FOD\ parameterized by bandwidth is not in \FPT.
We also leave open the complexity of both \FOD\ and \MSOD\ parameterized by maximum leaf number. 
Even though this graph parameter seems very simple, the design of an \FPT\ algorithm parameterized by maximum leaf number is surprisingly challenging. 
Nevertheless, we conjecture that such an algorithm exists.

\section{Other Background and Related Work}
\label{sec:backgroundandrelatedwork}
Before the introduction of solution discovery for graph problems, a substantial body of work addressed related movement minimization problems in various settings. These problems, while not framed in the language of the discovery token sliding model, share the fundamental characteristic of transforming an initial configuration into a final configuration while minimizing some measure.

Demaine et al.~\cite{DBLP:conf/soda/DemaineHMSOZ07} established foundational results for movement minimization in discrete settings.
The authors focused mainly on final configurations satisfying the following properties: connectivity, independence, and perfect matchability, while minimizing the number, the sum, or the maximum of movements. 
Their movement minimization problems require that tokens induce a subgraph that satisfies the property, which means allowing multiple tokens to occupy a vertex in the final configuration unless this contradicts the property (for example, independence).
Their paper also introduces the \textsc{Mobile Facility Location Problem (MFLP)} in which existing facilities must be relocated and clients assigned to minimize the (weighted) sum of movement and service costs.
Significant progress in algorithms and heuristics have been made on this
problem and its variants in the uncapacitated~\cite{DBLP:conf/focs/FriggstadS08,DBLP:journals/cor/HalperRS15,DBLP:conf/soda/AhmadianFS13,DBLP:journals/jco/ArmonGS14}, capacitated~\cite{DBLP:journals/eor/RaghavanSS19} with a constraint of not more than one facility at a vertex in the final configuration, and even in the online setting~\cite{DBLP:journals/mst/GhodselahiK23,DBLP:conf/isaac/GhodselahiK15}.
While these and other papers~\cite{DBLP:conf/approx/BermanDZ11,DBLP:journals/tcs/BiloGLP16} focused on approximation and inapproximability results, Demaine et al.~\cite{DBLP:conf/esa/DemaineHM09} proved that with respect to the parameter number of tokens, the complexity of these graph problems depends on the treewidth of the minimal configurations.
The authors also studied the case where at most one token can end up on a vertex and obtained meta-theorems (with respect to the parameter number of tokens) for a subset of such problems.
Other results provide approximation and inapproximability results for these problems (with properties concerned with radius and distance) under explicit per token distance or energy constraints~\cite{DBLP:journals/tcs/BartschiBC0KM21}. 

Extending beyond discrete graph settings, Anari et al.~\cite{DBLP:journals/jco/AnariFGS16} and others~\cite{DBLP:conf/algosensors/BiloDGMPW13,DBLP:journals/dam/DumitrescuJ11,DBLP:conf/adhoc-now/CzyzowiczKKLNOSUY10,DBLP:journals/dam/0001NS20,DBLP:conf/aips/EnginI18} investigated these same movement problems in the geometric domain.
One specific example in the continuous setting is the problem \textsc{Spreading Points}, where points must be separated by a minimum distance while minimizing movement, which has been studied both classically~\cite{DBLP:conf/cccg/LiW15,DBLP:conf/cccg/GhadiriY16} and through parameterized complexity~\cite{DBLP:conf/esa/FominG00Z23}.
In the plane, Tekdas et al.~\cite{10.1007/978-3-642-05434-1_19} considered the problem of finding the final destinations of a subset of tokens that will form an $s$-$t$-path 
that minimizes the number of tokens between $s$ and $t$ and the distance traveled by the tokens. 

Note that in solution discovery, we do not care about the quality of partial solutions, and in reconfiguration, we continuously have a solution. 
A restrictive version of solution discovery coupled with the token jumping model, interesting on its own right, would be related to local search algorithms~\cite{deMeyerKLP25+,DBLP:journals/jcss/FellowsFLRSV12} tasked with moving from one configuration to another in the space of candidate configurations by applying local changes without increasing a potential function, until a solution deemed optimal is found or a time bound has elapsed.

\section{Preliminaries}
\label{sec:prelims}
\subsection{Colored graphs}
We use the symbol~$\N$ for the set of non-negative integers. 
For~$k \in \N$, we define~$[k] = \{1, \ldots, k\}$ with the convention that~$[0] = \varnothing$. 
We denote the set of vertices of a graph $G$ by $V(G)$ and the set of edges by $E(G)$,  using $n$ for $|V(G)|$ and $m$ for $|E(G)|$.
When we speak of a directed graph $D$, we use $V(D)$ to refer to its vertices and $A(D)$ to refer to its arcs.
With a slight abuse of notation, we often write $uv$ for the (undirected) edge
$\{u,v\}$ or for the (directed) arc $(u,v)$; the intended
meaning will always be clear from the surrounding statement.
For $X \subseteq V(G)$, we denote by $G[X]$ and $G - X$ the graphs induced by $X$ (that is, the graph with vertex set $X$ and edges of $E(G)$ with both endpoints in $X$) and $V(G) \setminus X$, respectively. 
We denote by $N(v)$ the set of vertices adjacent to $v$ and $N[v] = N(v) \cup \{v\}$.

Following standard practice in logic, we use \emph{colored graphs} where vertex colors encode unary predicates.
A colored graph is a tuple $(G,\Cc)$, where $\Cc=\{C_1,\ldots, C_c\}$ is a set of \emph{colors}, where each $C_i\subseteq V(G)$ is a subset of vertices. 
Note that a vertex may have multiple colors and for each $v \in V(G)$ we write $\Cc(v)$ to denote the set of colors that~$v$ belongs to. 
We sometimes omit information about colors (i.e., say graph $G$) when it is clear from context.

\subsection{Logic}
A \emph{free variable} is a variable that appears in a formula without being bound by a quantifier. 
For example, in the formula $\phi(X)$, the variable $X$ is free, whereas in $\exists X \subseteq V(G) : \phi(X)$, the variable~$X$ is bound.
The formulas of \emph{first-order logic} (\FO) over the language of colored graphs with one free vertex-set variable $X$ are constructed from the atoms $X(x)$, $C_i(x)$, where $x$ is a vertex variable, $E(x,y)$, where $x$ and $y$ are vertex variables, equality $x = y$ of vertex variables, and the usual connectives $\neg, \land, \lor, \rightarrow, \leftrightarrow$ and quantifiers $\exists, \forall$ operating on vertex variables. 
The atoms $X(x)$, $C_i(x)$, and $E(x,y)$ express, respectively, that vertex $x$ belongs to $X$,  vertex $x$ has color $C_i$, and  there is an edge between the vertices $x$ and $y$.
For \emph{monadic second-order logic} (\MSO), we also have atoms $Y(x)$, where~$Y$ is a vertex-set variable and $x$ is a vertex variable, and allow quantification of vertex-set variables and the atom represents that $x$ belongs to $Y$. 
For \MSOT we also have atoms~$Z(x,y)$, where~$Z$ is an edge-set variable and $x,y$ are vertex variables, representing that edge $xy$ belongs to~$Z$, and allow quantification over sets of edges. 

We often use syntactic sugar (for example, we write ``$\exists x \in Y : \psi$'' to mean ``$\exists x: Y(x) \wedge \psi$''). 
For a formula $\phi(X)$ with a free vertex-set variable $X$, a graph $G$ and a set $S\subseteq V(G)$ we write $G\models\phi(S)$ when $\phi$ is true for $G$ when the free set variable~$X$ is interpreted as $S$. 
In this work, we mainly consider formulas with one free set variable $X$ representing the vertices with tokens in the final configuration, although we occasionally use additional free vertex variables when needed for specific algorithmic purposes.
In that context, for a formula $\phi(X,x_1, \ldots, x_t)$ with a free set variable~$X$ and vertex variables $x_1, \ldots, x_t$, a graph $G$, a set $S \subseteq V(G)$ and vertices $v_1, \ldots, v_t$, we write $G\models\phi(S,v_1,\ldots,v_t)$ when $\phi$ is true for $G$ when the free set variable $X$ is interpreted as $S$ and the free vertex variables $x_1, \ldots, x_t$ are interpreted as $v_1,\ldots,v_t$, respectively. 
We call a formula without free variables a \emph{sentence}.
For more background on logic, we refer to~\cite{hodges1997shorter}. 
Since the number of colors referenced in a formula $\phi$ is bounded by $|\phi|$, which we will treat as a constant (or parameter), we assume that the number of colors $c$ is also a constant. 

\subsection{Solution discovery}
Let $\phi(X)$ be a formula with a free vertex-set variable~$X$, expressing a property of (colored) graphs. 
The input to the \textsc{$\phi$-Discovery} problem consists of a (colored) graph~$(G,\Cc)$, an \emph{initial configuration} $S \subseteq V(G)$ with $|S| = k$, and an integer parameter $b$. 
We view $S$ as a set of vertices each marked by a single \emph{token}; we can \emph{slide} a token from one vertex to an adjacent one in a single \emph{token slide} in the \emph{discovery token sliding model}. 
A token can move to an arbitrary (not necessarily adjacent) vertex in a \emph{token jump} in the \emph{discovery token jumping model}.
The \textsc{$\phi$-Discovery} problem asks whether we can, using at most $b$ token slides (or token jumps), move the tokens to a \emph{final configuration} $T \subseteq V(G)$ such that $G \models \phi(T)$. 
We view~$T$ similarly as a set of vertices each marked by a single token, that is, $|T| = k$. 
The sequence of token slides performed to get from an initial configuration~$S$ to a final configuration $T$ is called \emph{transformation sequence} $\vec{T}$.
We say that in a solution $(T, \vec{T})$ to a solution discovery problem, a token $t$ initially at a vertex $u \in S$ is \emph{destined} for a vertex $v \in T$ when~$\vec{T}$ includes a subsequence of token slides that slide the token from~$u$ to~$v$ and no other token slide in~$\vec{T}$ slides $t$. 

It is easy to see how in the solution discovery setting, the discovery token sliding (jumping) model is equivalent to its corresponding \emph{token sliding (jumping)} variant, where tokens cannot move to already occupied vertices.
However, for the sake of proof clarity, we employ the discovery token sliding model (which allows multiple tokens per vertex during the transformation but not in the initial and final configurations) throughout many of our proofs.
Henceforth, a \emph{configuration} is a placement of tokens on the vertices of the graph.

For a logic $\mathcal{L}$, in the $\mathcal{L}$-\textsc{Discovery} problem we require an $\mathcal{L}$-formula as an additional part of the input. 
Note that tractability of $\mathcal{L}$-\textsc{Discovery} implies tractability of $\mathcal{L}'$-\textsc{Discovery} when $\mathcal{L'}$ is weaker than $\mathcal{L}$, while, vice versa, the intractability of $\mathcal{L'}$-\textsc{Discovery} implies the intractability of $\mathcal{L}$-\textsc{Discovery}. 

\subsection{Parameterized complexity and structural graph parameters} 
We assume that the reader is familiar with parameterized complexity theory and refer to one of the standard textbooks for the basic definitions and more extensive background (for example, \cite{parameterizedalgorithms,downey2013fundamentals}). 

When parameterizing by treewidth plus the number of allowed token moves $b$, then \MSOTD\ both in the token sliding and in the token jumping model is in \FPT. 
When parameterizing by cliquewidth$+b$, then \MSOD\ both in the token sliding and in the token jumping model is in \FPT.
Although this is a straightforward observation, we present a small proof here for completeness.
To obtain these results, we can apply the theorem of Courcelle~\cite{courcelle1990monadic} and of Courcelle, Makowski and Rotics~\cite{courcelle2000linear} in combination with the approximation algorithm for cliquewidth from Korhonen and Soko{\l}owski~\cite{DBLP:conf/stoc/Korhonen024}.

\begin{theorem}[\cite{courcelle1990monadic}]\label{thm:courcelle}
    Let $\phi(X)$ be an \MSOT-formula on (colored) graphs. Then, given a graph~$G$ and $S \subseteq V(G)$ we can check whether $G\models\phi(S)$ in time $f(|\phi|, tw) \cdot n$, where $f$ is some computable function and $tw$ is the treewidth of $G$. 
\end{theorem}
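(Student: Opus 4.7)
The standard proof strategy is dynamic programming on a tree decomposition, packaged cleanly through the reduction of \MSOT\ on bounded-treewidth graphs to \MSO\ on finite labeled trees. The plan has three stages.

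First, I would compute a tree decomposition of $G$ of width $O(tw)$ in time $f_1(tw)\cdot n$, using Bodlaender's linear-time algorithm or, alternatively, the recent $2$-approximation by Korhonen, and then convert it into a \emph{nice} tree decomposition of size $O(tw\cdot n)$ with introduce-vertex, introduce-edge, forget, and join nodes. The nice form ensures each edge of $G$ is introduced at a unique node, which is what makes edge-set quantification manageable.

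Second, I would encode the triple $(G,\Cc,S)$ together with the nice decomposition as a finite labeled tree $\Tmc$: each decomposition node becomes a node of $\Tmc$ whose label records its type, its bag (viewed as a subset of an abstract universe of size $tw+1$), the color and $S$-membership of each bag vertex, and, for an introduce-edge node, the pair of bag positions being connected. I would then translate $\phi(X)$ into an \MSO-formula $\phi^\star$ over such labeled trees: vertex variables and vertex-set variables become (sets of) introduce-vertex nodes, edge-set variables become sets of introduce-edge nodes, and the atom $E(x,y)$ is rewritten as ``there exists an introduce-edge node on the path containing the nodes $x,y$ whose label pairs the corresponding bag positions.'' A set variable for $S$ is represented by marking the introduce-vertex nodes of $S$.

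Third, I would invoke the classical Doner–Thatcher–Wright theorem: \MSO\ model checking on finite labeled trees is decidable by a bottom-up deterministic tree automaton whose state set has size bounded by a computable function of $|\phi^\star|$ alone (equivalently, perform bottom-up DP carrying the MSO-type of quantifier rank $\mathrm{qr}(\phi^\star)$ of each subtree, of which there are only finitely many over the fixed label alphabet). Running this automaton on $\Tmc$ takes time $f_2(|\phi|,tw)\cdot n$, yielding the claimed bound with $f(|\phi|,tw)=f_1(tw)+f_2(|\phi|,tw)$.

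The main obstacle is the \MSOT-to-\MSO-on-trees translation, specifically the faithful representation of edge-set quantification: one must ensure that arbitrary subsets of $E(G)$ are in one-to-one correspondence with subsets of introduce-edge nodes of $\Tmc$, and that the atom $E(x,y)$ (and its occurrence inside $Z(x,y)$) is expressible locally on the tree. This is precisely the point where the extension to cliquewidth succeeds for \MSO\ but fails for \MSOT. A second, more bookkeeping-heavy issue is bounding the number of automaton states by a \emph{computable} function of $|\phi|$; this follows from the finiteness of $\mathrm{qr}(\phi^\star)$-types over the finite label alphabet induced by $tw$, but writing out the effective bound requires care with the alphabet size blowup in $tw$.
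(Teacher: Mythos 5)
The paper does not prove this statement at all: it is quoted as a black box from Courcelle~\cite{courcelle1990monadic}, so there is no in-paper argument to measure you against. Your sketch is a correct rendition of the classical proof: compute a (nice) tree decomposition in \FPT-linear time, interpret the structure together with the decomposition as a labeled tree in which vertices and edges correspond to introduce nodes, translate the \MSOT-formula into an \MSO-formula over such trees, and finish with the Doner--Thatcher--Wright tree-automaton theorem (or, equivalently, bottom-up computation of quantifier-rank-bounded types over a fixed finite alphabet). You also correctly identify the delicate point, namely that edge-set quantification is handled via introduce-edge nodes, which is exactly why the argument works for \MSOT\ at bounded treewidth but only for \MSO\ at bounded cliquewidth. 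One remark on routes: when this paper does open up Courcelle's proof for its own purposes (the \XP\ algorithm of \Cref{sec:treewidth}), it uses the other standard formulation --- explicit dynamic programming over $q$-types combined with the Feferman--Vaught composition lemma (\Cref{thm:feferman-vaught}) --- rather than the automaton/interpretation route. The two are interchangeable for the black-box statement you were asked to prove, but the type-based phrasing is the one that the paper then augments with token-movement bookkeeping, which is why it is the more convenient scaffold in this context.
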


\begin{theorem}[\cite{courcelle2000linear,DBLP:conf/stoc/Korhonen024}]\label{thm:courcelle-cw}
    Let $\phi(X)$ be an \MSO-formula on (colored) graphs. Then, given a graph~$G$ and $S \subseteq V(G)$ we can check whether $G\models\phi(S)$ in time $f(|\phi|, cw) \cdot n^2$, where $f$ is some computable function and $cw$ is the cliquewidth of $G$.
\end{theorem}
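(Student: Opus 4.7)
The plan is to reduce the claim to two known results: the model-checking theorem of Courcelle, Makowski and Rotics for \MSO on graphs of bounded cliquewidth when a clique-expression is given, and the recent approximation algorithm of Korhonen and Soko{\l}owski that computes an approximate clique-expression in quadratic time.

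First, I would handle the free set variable $X$ by a standard encoding trick: enrich the colored graph $(G,\Cc)$ by adding a fresh color $C_S$ and setting $C_S = S$, obtaining a colored graph $(G', \Cc')$. Replace every occurrence of the atom $X(x)$ in $\phi(X)$ by $C_S(x)$ to obtain an \MSO-sentence $\phi'$ over the language of $(G',\Cc')$. By construction, $G \models \phi(S)$ if and only if $G' \models \phi'$, and this rewriting is linear in $|\phi|$.

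Second, I would invoke the approximation algorithm of Korhonen and Soko{\l}owski to compute, in time $g(cw)\cdot n^2$ for a computable function $g$, a clique-expression for $G'$ of width $cw' \leq h(cw)$ for some computable $h$ (the additional color $C_S$ only increases the label alphabet by one and can be absorbed into the function $h$). Third, I would apply the theorem of Courcelle, Makowski and Rotics, which, given a clique-expression of width $cw'$ for $G'$, decides $G' \models \phi'$ in time $f_0(|\phi'|, cw') \cdot n$ for a computable $f_0$.

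Combining the three steps, the overall running time is bounded by $g(cw)\cdot n^2 + f_0(|\phi|, h(cw))\cdot n$, which fits the form $f(|\phi|, cw)\cdot n^2$ for a suitable computable function $f$. The only subtlety worth double-checking is that the Korhonen--Soko{\l}owski algorithm indeed applies to colored graphs (which follows because colors can be encoded as additional labels in the clique-expression without changing the asymptotics) and that the rewriting of $\phi(X)$ into a sentence does not inflate the formula beyond a linear factor; both are routine.
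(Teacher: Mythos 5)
Your proposal is correct and matches the route the paper intends: the paper states this as a cited result, obtained exactly by combining the Courcelle–Makowsky–Rotics model-checking theorem with the Korhonen–Soko{\l}owski quadratic-time cliquewidth approximation, and your handling of the free variable $X$ by a fresh color is the standard (and valid) way to reduce to the sentence case.
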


We note that the running time in this theorem can be improved to $f(|\phi|,cw)\cdot n^{1+o(1)}+\Oof(m)$, since clique expressions can be approximated in this time~\cite{DBLP:conf/stoc/Korhonen024},
Consequently, the running time in the following theorem can be improved to $g(|\phi|,b,cw) \cdot n^{1+o(1)}+\Oof(m)$, but such optimizations are not the focus of this work. 

\begin{theorem}\label{thm:mso-discovery-b}
    \MSOTD\ on colored graphs in the token sliding and the token jumping models can be solved in time $g(|\phi|,b,tw)\cdot n$, for some computable function $g$ and where $tw$ is the treewidth of the input graph. 
    \MSOD\ on colored graphs in the token sliding and the token jumping models can be solved in time $g(|\phi|,b,cw)\cdot n^{2}$, for some computable function~$g$ and where $cw$ is the cliquewidth of the input graph.  
\end{theorem}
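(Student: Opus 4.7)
The plan is to reduce \MSOTD\ and \MSOD\ to ordinary model checking by internalizing the entire transformation sequence inside the formula itself, and then invoke Theorems~\ref{thm:courcelle} and~\ref{thm:courcelle-cw} as black boxes. Given the input formula $\phi(X)$ describing the target property, I will construct a new formula $\psi(X)$, in the same logic as $\phi$, such that $G\models\psi(S)$ iff there is a final configuration $T$ reachable from $S$ in at most $b$ token slides (respectively jumps) with $G\models\phi(T)$. Interpreting the single free set variable of $\psi$ as the initial configuration $S$, Courcelle's theorem applied to $\psi$ and $(G,S)$ will then directly yield the stated running times.

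The construction of $\psi(X)$ proceeds as follows. I existentially quantify $b$ additional vertex-set variables $X_1,\dots,X_b$ representing the configurations after each successive move, with the convention $X_0:=X$, and impose $\phi(X_b)$ at the end. For each $i\in[b]$, a first-order sub-formula $\mathsf{step}(X_{i-1},X_i)$ enforces that $X_i$ is obtained from $X_{i-1}$ either (i) as a no-op $X_i=X_{i-1}$, so that transformation sequences strictly shorter than $b$ are also captured, or (ii) by a single token move: there exist $x,y$ with $x\in X_{i-1}$, $y\notin X_{i-1}$, and $\forall v:\bigl(X_i(v)\leftrightarrow((X_{i-1}(v)\wedge v\neq x)\vee v=y)\bigr)$, together with the atom $E(x,y)$ in the token sliding variant (omitted in the token jumping variant). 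Each $\mathsf{step}$ uses only constantly many first-order quantifiers over vertices and the vertex-set atoms $X_{i-1}(\cdot),X_i(\cdot)$; hence $|\psi|=O(b\cdot|\phi|)$, and $\psi$ stays in \MSO\ whenever $\phi$ is in \MSO\ and in \MSOT\ whenever $\phi$ is in \MSOT, since no edge-set quantifier is introduced by the wrapper.

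The one modelling subtlety is that the intermediate $X_i$ are vertex \emph{sets} rather than multisets; this is legitimate because, as noted in \Cref{sec:prelims}, the discovery token sliding (respectively jumping) model is equivalent to the variant in which no two tokens ever share a vertex. Everything else is bookkeeping. Feeding $\psi$ into Theorem~\ref{thm:courcelle} gives model-checking time $f(|\psi|,tw)\cdot n=g(|\phi|,b,tw)\cdot n$ for \MSOT-definable problems on inputs of treewidth $tw$, and feeding $\psi$ into Theorem~\ref{thm:courcelle-cw} gives $g(|\phi|,b,cw)\cdot n^{2}$ for \MSO-definable problems on inputs of cliquewidth $cw$, for suitable computable functions $g$. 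The main (and mild) obstacle is simply writing $\mathsf{step}$ in a way that uniformly handles both movement models and admits shorter transformations; once it is written, Courcelle's theorem does all the real work.
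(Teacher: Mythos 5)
Your proposal is correct and follows essentially the same route as the paper: both existentially quantify $b$ set variables encoding the successive configurations (with the no-op/equal-consecutive-sets trick to capture shorter sequences), require $\phi$ of the last one, and then invoke Theorems~\ref{thm:courcelle} and~\ref{thm:courcelle-cw} as black boxes. No substantive difference.
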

\begin{proof}
    Given a formula $\phi(X)$ of \MSOT or \MSO, we write an \MSOT- or \MSO-formula~$\psi(X)$, respectively, stating the existence of (at most) $b$ sets\footnote{It is possible to either create one formula for each $b' \leq b$ and define $\psi$ as the disjunction of those formulas or, alternatively, we can  allow consecutive sets to be equal.} $X_1,\ldots, X_b$, where $X_1=X$, $X_{i+1}$ differs from~$X_i$ by the slide or jump of exactly one token (depending on the model), and stating that~$\phi(X_b)$ holds. 
    It is immediate by construction that $G\models\psi(S)$ if and only if a final configuration $T$ can be discovered in (at most) $b$ transformation steps starting from~$S$. 
    The length of the formula~$\psi$ depends only on~$\phi$ and $b$.
    Hence, applying \Cref{thm:courcelle} or \Cref{thm:courcelle-cw} we can test whether $G\models \psi(S)$ in time $f(|\psi|,tw)\cdot n=g(|\phi|,b,tw)\cdot n$, or 
    $f(|\psi|,cw)\cdot n^{2}=g(|\phi|,b,cw)\cdot n^2$, for some computable function $g$, respectively.
\end{proof}

We obtain a similar theorem for \FO. 

\begin{theorem}\label{thm:fo-discovery-b}
    Let $\phi(X)$ be an \FO-formula on (colored) graphs. 
    Let $\mathcal{G}$ be a class of (colored) graphs on which \FO\ model checking can be solved in time $f(|\phi|)\cdot n^{\Oof(1)}$, for some computable function~$f$. 
    Then \FOD\ in the token sliding and the token jumping models can be solved in time $g(|\phi|,b)\cdot n^{\Oof(1)}$ on $\mathcal{G}$, for some computable function $g$. 
\end{theorem}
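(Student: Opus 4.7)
The plan is to imitate the proof of \Cref{thm:mso-discovery-b}, constructing an \FO-formula $\psi(X)$, of size depending only on $|\phi|$ and $b$, that expresses the existence of a transformation of length at most $b$ from the set encoded by $X$ to some final configuration satisfying $\phi$. The main conceptual hurdle is that \FO\ does not admit set quantification, so we cannot directly quantify intermediate configurations $X_1, \ldots, X_b$ as set variables as in the \MSO{} proof. The workaround is to exploit the fact that the budget $b$ bounds the number of token moves: each intermediate configuration therefore differs from the initial one by at most $b$ swaps, and can be described symbolically from $X$ together with at most $2b$ vertex variables.

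Concretely, I would existentially quantify vertex variables $u_1, v_1, \ldots, u_b, v_b$, where $(u_i, v_i)$ represents the source and destination of the $i$-th token move, with the convention $u_i = v_i$ encoding a null move so that sequences shorter than $b$ are also captured. I would then inductively define formulas $\chi_i(x)$ with $\chi_0(x) := X(x)$ and $\chi_i(x) := \bigl(\chi_{i-1}(x) \wedge x \neq u_i\bigr) \vee x = v_i$, so that $\chi_i(x)$ expresses membership of $x$ in the configuration $C_i$ obtained after the first $i$ moves. To enforce validity I would add, for each $i$, the conjuncts $\chi_{i-1}(u_i)$ (there is a token to move), $u_i = v_i \vee \neg\chi_{i-1}(v_i)$ (no collision, guaranteeing $|C_i| = |S|$ throughout), and either $E(u_i, v_i) \vee u_i = v_i$ in the token sliding case, or no edge constraint at all in the token jumping case.

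Finally, I would set $\psi(X)$ to be the existential closure of the conjunction of all the above with $\phi[X / \chi_b]$, namely $\phi$ in which every atom $X(x)$ is textually replaced by $\chi_b(x)$. By construction, for a graph $G \in \mathcal{G}$ and a set $S \subseteq V(G)$, treating $X$ as a monadic predicate interpreted by $S$, we have $G \models \psi(S)$ iff there is a transformation from $S$ of length at most $b$ leading to some $T$ with $G \models \phi(T)$. The formulas $\chi_i$ have size $O(i)$, the validity block has size $O(b^2)$, and the substitution inflates $\phi$ by a factor of $O(b)$, so $|\psi| \leq g(|\phi|, b)$ for a computable function $g$. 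Since $\psi$ is still \FO\ (with $X$ as a unary predicate, which is handled natively by the colored-graph model-checking algorithm on $\mathcal{G}$), invoking the assumed model-checking procedure on $\psi$ and $(G,S)$ yields the runtime $f(|\psi|) \cdot n^{\Oof(1)} = g'(|\phi|, b) \cdot n^{\Oof(1)}$.

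The step I expect to require the most care is verifying that the inductively defined $\chi_i$, together with the no-collision clauses, correctly preserve the invariant $|C_i| = |S|$ throughout and thus that $\chi_b$ describes a genuine $k$-element target configuration; beyond this bookkeeping, the argument is a direct adaptation of the \MSO\ case.
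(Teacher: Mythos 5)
Your proposal is correct and follows essentially the same route as the paper: existentially quantify the $O(b)$ vertices involved in the transformation (the paper phrases this as quantifying the at most $b$ slide paths of total length $b$, you phrase it as $b$ source--destination pairs per step), describe the final configuration symbolically as a bounded modification of $X$, and substitute it into $\phi$ before invoking the assumed \FO\ model-checking algorithm with $S$ encoded as a color. The step-by-step $\chi_i$ bookkeeping you spell out is just a more explicit rendering of the paper's two-sentence argument.
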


\begin{proof}
    Similarly to the proof of~\Cref{thm:mso-discovery-b}, we quantify over the elements involved in the transformation sequence and verify that the solution reached satisfies the formula $\phi$.
    In particular, in the formula $\psi(X)$, we quantify over the (at most $b$) paths (of total length at most $b$) that tokens will slide on to reach the target solution. 
\end{proof}

\Cref{thm:fo-discovery-b} applies to several graph classes including nowhere dense classes~\cite{grohe2014deciding}, monadically stable classes~\cite{dreier2023first,dreier2024first}, and classes with bounded twinwidth~\cite{bonnet2021twin}, assuming that a contraction sequence certifying the twinwidth is given as part of the input. 

We now define the structural parameters used in our proofs.\smallskip

In a graph $G$, two vertices $u$ and $v$ are \emph{twins} if $N(u) = N(v)$ or $N[u] = N[v]$. 
The \emph{neighborhood diversity} of a graph $G$, denoted by $nd(G)$, is the number of subsets $V_1, \dots, V_d$ in the unique partition of~$V(G)$ into maximal sets of twin vertices. 
Neighborhood diversity and the corresponding partition of a graph can be computed in linear time \cite{DBLP:journals/dm/McConnellS99,tedder2008simpler}.

A \emph{twin cover} of a graph $G$ is a set $X \subseteq V(G)$ whose deletion leaves a disjoint union of cliques such that the vertices of each clique are twins in $G$.
The \emph{twin cover number} of $G$, denoted by $tc(G)$, is the minimum size of a twin cover of $G$ and can be computed in \FPT\ time with respect to~$tc(G)$~\cite{DBLP:conf/iwpec/Ganian11}. 

A \emph{modulator to paths} is a set of vertices $X \subseteq V(G)$ such that $G - X$ is a disjoint union of paths. 
The size of a minimum such modulator, denoted by $mp(G)$, is called \emph{modulator to paths number}. 
Similarly, a \emph{modulator to stars} is a set of vertices $X \subseteq V(G)$ such that $G - X$ is a disjoint union of stars. 
The size of a minimum such modulator, denoted by $ms(G)$, is referred to as \emph{modulator to stars number}.
It has been shown that one can compute a modulator to paths or stars of size~$k$ (if it exists) in \FPT\ time with respect to $k$ \cite{DBLP:journals/dam/NishimuraRT05}.

The \emph{bandwidth} of a graph $G$ is the minimum integer $bw(G)$ such that there exists a bijection $f: V(G) \rightarrow \{1, \ldots, |V(G)|\}$ such that for every edge $\{u,v\} \in E(G)$, $|f(u) - f(v)| \leq bw(G)$. 
Computing bandwidth of a graph $G$ is \W[1]-hard when parameterized by $bw(G)$~\cite{DBLP:conf/stoc/BodlaenderFH94}. 

A \emph{tree decomposition} of a graph $G$ is a pair $(\mathcal{T}, \mathcal{B})$ where $\mathcal{T}$ is a tree and $\mathcal{B} = \{B_i : i \in V(\mathcal{T})\}$ is a collection of subsets of $V(G)$ (called \emph{bags}) such that:
\begin{enumerate}
\item $\bigcup_{i \in V(\mathcal{T})} B_i = V(G)$ (every vertex appears in some bag),
\item for every edge $uv \in E(G)$, there exists $i \in V(\mathcal{T})$ with $\{u,v\} \subseteq B_i$ (every edge appears in some bag), and
\item for every vertex $v \in V(G)$, the set $\{i \in V(\mathcal{T}) : v \in B_i\}$ induces a connected subgraph of $\mathcal{T}$.
\end{enumerate}
The \emph{width} of a tree decomposition is $\max_{i \in V(T)} |B_i| - 1$. 
The treewidth of $G$, denoted~$tw(G)$, is the minimum width over all tree decompositions of $G$.
A \emph{nice tree decomposition} of $G$ is one whose root and leaves correspond to empty bags and such that each other node in the decomposition corresponds to a bag that either \emph{introduces} a vertex $v \in V(G)$ ($B_i = B_{i-1} \cup \{v\}$ for $v \not\in B_i$), \emph{forgets} one ($B_i = B_{i-1} \setminus \{v\}$ for $v \in B_i$), or is a \emph{join node} with two children $j$ and $h$ (where $B_i = B_j = B_h$). 
Computing the treewidth of a graph $G$ and the associated nice tree decomposition can be done in \FPT\ time when the parameter is $tw(G)$~\cite{DBLP:journals/siamcomp/Bodlaender96,parameterizedalgorithms}.

Finally, some section-specific preliminaries appear instead in their relevant sections.
For clarity, in reductions, we refer to \emph{vertices} in the original problem and \emph{nodes} in our constructions.
From here onward, all results pertain to the (discovery) token sliding model.

\section{\MSOT-Discovery Parameterized by Treewidth is in \XP}\label{sec:treewidth}
In this section, we show that \MSOTD\ is in \XP\ when parameterized by treewidth, or equivalently, that the discovery problem is polynomial-time solvable for every fixed $\MSOT$-definable property and fixed class of bounded treewidth.

Our proof is an extension of the proof of Courcelle's theorem via dynamic programming. 
Before we present our algorithm, let us explain why we cannot simply apply this theorem, and then give an intuition for our approach. 
We aim to compute by dynamic programming a final configuration $T\subseteq V(G)$ that satisfies $\phi(T)$ and such that $T$ can be reached from our initial configuration $S$ (which we assume is marked by a color in the input graph) with at most~$b$ token slides. 
One possible approach is to express the token movement in an extended \MSOT-formula (with multiple free set variables), where $k$ free set variables each represents the path for the movement of a token, and apply the optimization version of Courcelle's theorem~\cite{arnborg1991easy}, which allows us to find minimum weight sets satisfying a formula in \FPT\ time with respect to treewidth. 
Our extended \MSOT-formula would on a very high level be: 
$\psi(M_1,\ldots, M_k, T) := \text{``$T$ is the result of applying movements $M_1,\ldots, M_k$ to $S$''}\wedge \phi(T)$. 
The optimization version of Courcelle's theorem now permits us to find sets $M_1,\ldots, M_k$ and $T$ such that $\sum_{1\leq i\leq k}|M_i|$ is minimized, and we simply check if this sum is at most $b$. 
However, with this approach, the dependence on $k$ in the running time is too large; each free set variable adds one membership bit per bag vertex, and thus the dynamic programming state space is multiplied by about $2^{tw}$ for each such set variable. 
With the above formula, we get a running time of $2^{\Oof(tw\cdot k)}\cdot f(|\phi|)\cdot n^{\Oof(1)}$, which has an exponential dependence on~$k$.

In our proof, we follow the proof of Courcelle's theorem, but track the token movement shape and cost explicitly to avoid a blow-up in the time complexity.  
The intuitive reason why this is possible is that when multiple tokens pass the same vertex, it is not important which token continues its way in which direction; we only care about the number of tokens that pass in a certain direction. 
We are hence able to compress the information about the movement to a size of $\Oof(k)^{tw}$ in each of our dynamic programming tables. 
We additionally track the (achievable) and satisfiable logical properties in the subgraph induced by all bags in the subtree rooted at a bag (this follows standard techniques). 

Fix a formula $\phi(X)$ for which we want to solve $\phi$-\textsc{Discovery}, say of quantifier rank~$q$.  
Recall that the \emph{quantifier rank} of a formula is the nesting depth of the quantifiers in the formula. 
Also fix a nice tree decomposition $(\mathcal{T}, \{B_i\}_{i \in V(\mathcal{T})})$ of width~$tw$ of our input graph~$G$.  
Let $i$ be a node of $\mathcal{T}$. 
We write $G_i$ for the subgraph induced by the vertices of the bags $B_j$ for $j$ below~$i$ (including $i$) in $\mathcal{T}$. 

We intuitively view our dynamic program as a nondeterministic procedure that guesses the solution $T$. 
It goes from the bottom up in $\mathcal{T}$, that is, from the leaves to the root, and for each node~$i$, it will have guessed the partial final configuration $T_i=T\cap V(G_i)$. 
Formally, we will dynamically maintain all sets $T_i'\subseteq V(B_i)$ that can be extended to sets $T_i$ satisfying certain logical properties (the logical properties will be captured by logical types, which we will explain shortly) that can be reached with a certain token sliding cost. 
We will at the same time maintain, among other things (such as the number of tokens in $G_i \cap T$, the number of token sliding steps in the subproblem on~$G_i$), the number of superfluous tokens that must slide out of $G_i$ and the number of tokens that must slide into $G_i$.
To update the logical type of the sets~$T_i'$, we introduce additional variables to our logical properties that are initiated with the vertices of the bag $B_i$, so that we can capture all information about the newly introduced vertex at an introduce node. 

Let us formally introduce logical types. 
For a quantifier-rank $q$, a \emph{$q$-type} $\Phi(X,x_1, \ldots, x_t)$ is a set of formulas $\psi(X,x_1,\ldots, x_t)$ of quantifier rank at most $q$ such that there exists a graph $H$, $A \subseteq V(H)$ and $v_1,\ldots, v_t\in V(H)$ such that $H \models \psi(A,v_1,\ldots, v_t)$ for all $\psi\in \Phi$. 
We will often abbreviate tuples $x_1,\ldots, x_t$ and $v_1,\ldots, v_t$ by~$\bar x$ and~$\bar v$, respectively, and leave it to the context to determine the size of the tuples. 
We sometimes abuse notation and treat~$\bar v_i$ as a set; for example, we write $v\in \bar v_i$ to denote $v \in \{v_1,\ldots,v_t\}$.
When we write $\psi(X,\bar x)$ we mean that the free variables of $\psi$ are among~$X$ and~$\bar x$; not all variables of $\bar x$ actually have to appear in $\psi$. 
The \emph{$q$-type of $A$ and~$\bar v$ in a graph $H$} is the set $tp_q(H,A,\bar v)$ of all formulas of quantifier rank at most $q$ such that $H\models\psi(A,\bar v)$ for all $\psi\in \Phi$. 
We say that a $q$-type $\Phi$ is \emph{realized in $H$} if there are $A$ and~$\bar v$ with $tp_q(H,A,\bar v)=\Phi$. 

In the dynamic programming algorithm, when we are dealing with node $i$,
we will instantiate~$H$ with $G_i$ and $A$ with possible $T_i\subseteq V(G_i)$, and hence compute the realizable types $tp_q(G_i,T_i,\bar v_i)$, where $\bar v_i$ is the tuple of the vertices of $B_i$.
More precisely, we will consider $T_i'\subseteq B_i$ and compute all types $tp_q(G_i,T_i,\bar v_i)$ that are realized for some extension $T_i\supseteq T_i'$ of $T_i'$ with $T_i\subseteq V(G_i)$ (and that satisfy the additional token sliding cost constraints). 
Since up to equivalence there are only finitely many \MSOT-formulas of quantifier rank at most $q$ (and we can choose a normalized representative of each formula), $q$-types are finite, and we can represent a type by the set of normalized formulas it contains, see, for example, Proposition 7.5 of~\cite{libkin2004elements}. 
Thus, let us write $\Gamma(q,t)$ for the set of normalized $q$-types with~$t$ free variables and let $\gamma(q,t)=|\Gamma(q,t)|$, which is a computable function. 
In the following, when we speak of types, we will always mean $q$-types with at most $tw+1$ free variables (with size dependent on $q$ and $tw$), and by slightly abusing notation we now write $\gamma$ to refer to $\gamma(q,tw+1)$.

Recall that we assume that the root node $r$ of $\mathcal{T}$ corresponds to an empty bag.
Also, recall our treatment of free vertex variables, which in the root bag leads to types of the form $\Phi(X,\bar x)=\Phi(X)$. 
All sets $T_i$ that we consider throughout the dynamic process may satisfy very different logical properties, and in particular possibly do not satisfy $\phi$, but all that matters is that in the root bag in one of our states we have a set $T_r=T$ with $G_r=G\models\phi(T)$. 
This is the case if and only if $\phi(X)$ is contained in one of the types in the root bag. 

We collect one more technical lemma that we will require to update types. 

For tuples $\bar v=(v_1,\ldots, v_t)$ and $\bar w=(w_1,\ldots, w_\ell)$ we write $\bar v\bar w$ for their concatenation $(v_1,\ldots, v_t, \\w_1,\ldots, w_\ell)$. 
We use the following decomposition theorem due to Feferman and Vaught~\cite{Feferman1959} adapted from the form presented in~\cite{DBLP:conf/birthday/Grohe08}.

\begin{lemma}[\cite{DBLP:conf/birthday/Grohe08}, adapted]\label{thm:feferman-vaught}
Let $G_i,G_j$ be colored graphs, $T_i\subseteq V(G_i), T_j\subseteq V(G_j)$, $\bar u\in V(G_i)^h$, \\$\bar v\in V(G_i)^t$, $\bar w\in V(G_j)^\ell$ s.\ th.\ $V(G_i)\cap V(G_j)=\bar v$ and $T_i\cap V(G_i)\cap V(G_j)={T_j\cap V(G_i)\cap V(G_j)}$. 
Let $T=T_i\cup T_j$. 
Then for all $q\geq 0$, $tp_q(G_i\cup G_j,T,\bar u \bar v \bar w)$ is determined by $tp_q(G_i,T_i,\bar u \bar v)$ and $tp_q(G_j,T_j,\bar v \bar w)$.
Furthermore, there is an algorithm that computes $tp_q({G_i\cup G_j},T,\bar u \bar v \bar w)$ from $tp_q(G_i,T_i,\bar u \bar v)$ and $tp_q(G_j,T_j,\bar v \bar w)$ in time $f(q, t,h,\ell)$ for some computable function $f$.
\end{lemma}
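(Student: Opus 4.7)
My plan is to prove the statement by induction on the quantifier rank $q$, establishing the stronger algorithmic form: there is an effective procedure whose runtime depends only on $q, h, t, \ell$ that outputs $tp_q(G_i \cup G_j, T, \bar u \bar v \bar w)$ on input $tp_q(G_i, T_i, \bar u \bar v)$ and $tp_q(G_j, T_j, \bar v \bar w)$. So that the induction can be reapplied after stripping a quantifier, I would state the claim uniformly in all tuple lengths simultaneously. Working with a fixed finite set $\Gamma(q,\cdot)$ of normalized representatives of $q$-types (as used in the paper via Proposition~7.5 of Libkin) is what makes the composition procedure algorithmic rather than merely existential.

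For the base case $q = 0$, the claim reduces to the observation that atomic formulas over the combined free tuple are decided locally. The hypothesis $T_i \cap V(G_i) \cap V(G_j) = T_j \cap V(G_i) \cap V(G_j)$ forces $T_i \cap \bar v = T_j \cap \bar v$, so membership atoms $X(x)$, color atoms $C(x)$ and equality $x = y$ agree on the overlap and are consistent across the two component $0$-types; an edge atom $E(x, y)$ is decided by whichever component (possibly both, if both endpoints lie in $\bar v$) witnesses the edge, which is recorded in each component $0$-type.

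For the inductive step, consider a formula $\exists y\, \psi(y)$ of rank $q + 1$. The union satisfies it iff some $a \in V(G_i) \cup V(G_j)$ makes $\psi$ true; either $a \in V(G_i)$ or $a \in V(G_j) \setminus V(G_i)$, and in each case the extended tuple splits cleanly across the two sides (the already-present overlap $\bar v$ being shared). The $(q + 1)$-type of a single component records, for each $q$-type $\Phi$ of the extended tuple, whether some element on that side realizes $\Phi$. So we enumerate all $\Phi \in \Gamma(q, h + t + \ell + 1)$, compose each with the other component's $q$-type by the induction hypothesis at smaller rank, and check whether the resulting union type contains $\psi$. Vertex-set quantifiers $\exists Y\, \psi(Y)$ are handled identically after additionally guessing $Y \cap \bar v$ (at most $2^t$ cases), using that the component types record, for each such boundary, whether a compatible extension of $Y$ exists on that side. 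Edge-set quantifiers $\exists Z\, \psi(Z)$ of \MSOT are analogous: since $E(G_i \cup G_j) = E(G_i) \cup E(G_j)$ and the only edges that can appear in both components are those with both endpoints in $\bar v$, we guess $Z$ restricted to $E(G[\bar v])$ among at most $2^{\binom{t}{2}}$ possibilities and split the rest cleanly.

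The algorithmic bound then follows by unrolling the composition inductively: at each of the $q$ levels, the number of cases and types to enumerate is bounded by a computable function of $q, t, h, \ell$, giving the overall bound $f(q, t, h, \ell)$ independent of $|G_i|, |G_j|$. The main obstacle is the bookkeeping around second-order variables crossing the shared boundary $\bar v$: the guessed boundary data on $\bar v$ must be propagated consistently to both sides, and the induction hypothesis must be instantiated at the correct augmented tuple length. The cleanest way to handle this is to encode currently active second-order variables as new unary (respectively, binary) relations on the colored graph, so that each inductive layer reduces to a first-order-style case split on at most one new element, after which the induction hypothesis at the next lower rank applies directly.
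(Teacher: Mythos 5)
Your proposal is correct in outline, but note that the paper does not prove this lemma at all: it is imported as a black box, adapted from Feferman--Vaught via Grohe's survey, and only its statement (determinacy of the union type plus computability of the composition in time $f(q,t,h,\ell)$) is used in the treewidth dynamic program. What you have written is essentially a reconstruction of the standard proof of that cited result: induction on quantifier rank, with each component $(q+1)$-type recording, via characteristic (Hintikka-style) formulas of the normalized type set $\Gamma(q,\cdot)$, which extended $q$-types are realized on its side, and with existential first-order, vertex-set, and edge-set quantifiers handled by a bounded case split over the shared boundary $\bar v$ (at most $2^t$ boundary traces for vertex sets, $2^{\binom{t}{2}}$ for edge sets within $\bar v$), followed by composition at rank $q$. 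The places you flag as delicate are indeed the real content: cross-boundary atoms (equalities and edges between a $\bar u$-variable and a $\bar w$-variable) are decided only because membership of each free variable in $\bar v$ is itself recorded in the component types, and the growing number of free set variables must be absorbed by widening the claim, either by proving it for arbitrary relational vocabularies or by your trick of promoting active set variables to fresh unary (or, for \MSOT, binary/incidence) relations --- this steps slightly outside the paper's colored-graph formalism but is the standard way to make the induction go through, and it adds at most $q$ relations, so the bound $f(q,t,h,\ell)$ survives (with the same implicit dependence on the constant number of colors that the paper already assumes). So your argument supplies a proof the paper deliberately outsources; it is compatible with, rather than divergent from, the paper's treatment.
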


We are now ready to present our \XP\ algorithm. 

\thmXP

\begin{proof}
Fix a formula $\phi(X)$ with quantifier rank $q$, and let $b$ be the number of token slides allowed.
We first compute a nice tree decomposition $(\mathcal{T},(B_i)_{i\in V(\mathcal{T})})$ of $G$ of width $tw$ and with $\Oof(n)$ nodes in \FPT\ time. 

We compute the following tables for the dynamic programming: for each node $i \in V(\mathcal{T})$, we compute a set of potential subsolutions that we denote by $\soli$. 
Every entry of $\soli$ is a six-tuple that we refer to as a \emph{state}. 
A state $s\in \soli$ is hence a tuple $(\kappa_s, \ell_s, T'_s, A_s, \Phi_s, f_s)$, where $0 \leq \kappa_s \leq k$ is an integer, $0 \leq \ell_s \leq b$ is an integer, $T'_s, A_s\subseteq B_i$, $\Phi_s$ is a $q$-type, $f_s: B_i \to [-k,k]$ is a function. 
The meaning of these entries is the following. 

\begin{itemize}
    \item $\kappa_s$ specifies the size of a partial configuration (placement) of tokens $T_i\subseteq V(G_i)$, 
    \item $\ell_s$ specifies the number of token sliding steps  used in the sub-problem on $G_i$ (we consider a more general sub-problem where multiple tokens may move to each vertex of~$B_i$), 
    \item $T'_s$ specifies the intersection of $T_i$ with $B_i$, 
    \item $A_s \subseteq T'_s$ denotes the set of vertices to which tokens initially located on vertices of $G_i$ are destined, 
    \item $\Phi_s$ specifies the type $tp_q(G_i, T_i, \bar v_i)$, and 
    \item $f_s: B_i \to [-k,k]$ is a function that specifies the number of tokens that slide \emph{through} each vertex in $B_i$, where a positive value indicates that more tokens slide from~$G_i$ to the rest of the graph, while a negative value indicates the converse. 
    We let $f^+_s = \{v \in B_i \mid f_s(v) > 0\}$ and $f^-_s = \{v \in B_i \mid f_s(v) < 0\}$. 

    \smallskip
    Intuitively, for each vertex $v \in f^+_s$ we have collected $f_s(v)$ tokens from $G_i$ that we imagine to lie on $v$ and now need to be sent outside of $G_i$. 
    A negative value indicates that $v$ has already given away tokens for the configuration $T_i$ that were not yet there, and now it still needs to collect $f_s(v)$ tokens from outside. 
\end{itemize}

\pagebreak
We will hence call a state $s=(\kappa_s, \ell_s, T'_s, A_s, \Phi_s, f_s)$ \emph{valid} if and only if 

\begin{itemize}
    \item there exists a set $T_i \subseteq V(G_i)$ of size~$\kappa_s$ such that
    \item $T'_s=T_i\cap B_i$, \item $A_s\subseteq T'_s$, 
    \item $\Phi_s=tp(G_i, T_i, \bar v_i)$, and 
    \item in $G_i$, the configuration placing a token on each vertex of $T_i\setminus (T'_s\setminus A_s)$ and $f_s(v)$ additional tokens on each vertex of $f^+_s$ can be reached from the configuration placing a token on each vertex of $S$ and additional~$|f_s(v)|$ tokens on each vertex of $f^-_s$ using at most $\ell_s$ token sliding steps.
\end{itemize}

We denote by $\soli$ the set of all valid states of the node $i$, which we aim to compute by dynamic programming. 
We accept the \textsc{$\phi$-Discovery} instance $(G,\phi,b)$ if we reach a valid state $s$ in the root node with $\kappa_s=k$, $\ell_s\leq b$ and such that $\Phi_s$ contains the formula $\phi$.

\subparagraph*{Leaf node.}
Let $i$ be a leaf node with bag $B_i = \emptyset$.
There is only one valid state $s = (0, 0, \emptyset, \emptyset, \Phi_{G_\emptyset}, \emptyset \to [-k,k])$, where $\Phi_{G_\emptyset}$ is the $q$-type of the empty graph $G_\emptyset$, which is efficiently computable (by brute force) in time depending only on $q$.

\subparagraph*{Introduce node.} Let $i$ be an introduce node with child $i+1$ such that $B_i = B_{i+1} \cup \{u\}$, for some $u \notin B_{i+1}$. 
Let $\bar v_{i+1}$ be the tuple of the vertices $B_{i+1}$ and $\bar v_i=\bar v_{i+1}u$. 

We generate all possible states of $G_i$ and for each potential state 
${s'=(\kappa_{s'}, \ell_{s'}, T'_{s'}, A_{s'}, \Phi_{s'}, f_{s'})}$ we verify whether it is valid as follows. 
For each state $s \in \sol_{i+1}$ we test wether $s$ can give rise to $s'$. 

Fix some $s=(\kappa_s, \ell_s, T'_s, A_s,\Phi_s, f_s)$. 
We now distinguish the following cases, depending on whether $T'_{s'}=T'_s$ (and in this case, we verify that $\kappa_{s'} = \kappa_{s}$) or $T'_{s'}=T'_s\cup \{u\}$ (and in this case, we verify that $\kappa_{s'} = \kappa_{s} + 1$) and on whether $u\in S$ or $u\not\in S$. 
We present the first case in detail; the other cases are similar.

\begin{enumerate}
    \item Assume $T'_{s'}=T'_s$ and $u \notin S$.
    In this case, no token is placed on $u$ in the initial, nor in the final configuration, we use $u$ only to slide tokens through it.
    We check whether $A_{s'}=A_s$.
    Observe that $N[u] \cap V(G_i) \subseteq B_i$ and let $N_u = N(u) \cap B_i$. 
    We check whether $f_s(v) = f_{s'}(v)$ for every $v \not\in N[u]$ and let $g(v) = f_s(v) - f_{s'}(v)$ for every $v \in N[u]$ and check whether $\sum_{v \in N_u}g(v) = f_{s'}(u)$ and $0 \leq g(v) \leq f_s(v)$ if $v\in f_s^+$, $0 \geq g(v) \geq f_s(v)$ if $v\in f_s^-$, and $g(v)=0$ otherwise.
    We also let $co(g) = \sum_{v \in N_u} |g(v)|$, the number of token sliding steps required to slide tokens through or to $u$.  
    We check whether $\ell_{s'}=\ell_s+co(g)$ and $\ell_{s'}\leq b$. 
    Finally, we verify that~$\Phi_{s'}$ results from $\Phi_s$ by \Cref{thm:feferman-vaught} where $G_j = G_{i+1}$, $G_i = G[B_i]$, $T_i = T'_{s'}$, $\bar u=u$, $\bar v=\bar v_{i+1}, \bar w=\emptyset$.
    \item If $T'_{s'}=T'_s$ and $u \in S$, then we proceed as in Case 1.
    However, the only difference is that we require $f_{s'}(u)=\big(\sum_{v\in N_u}g(v)\big)+1$, since we have one token on $u$. 
    \item When $T'_{s'}=T'_s \cup \{u\}$ and $u \in S$, we check whether $A_{s'}=A_s\cup \{u\}$ and proceed exactly the same as in Case 1.
    \item When $T'_{s'}=T'_s \cup \{u\}$ and $u \notin S$, we proceed as in Case 1 if $A_{s'}=A_s$ but require $f_{s'}(u)=\big(\sum_{v\in N_u}g(v)\big)-1$, and if $A_{s'}=A_s\cup \{u\}$ we require $f_{s'}(u)=\big(\sum_{v\in N_u}g(v)\big)-1$ and for at least one $v \in N_u$, $g(v) > 0$, as we need one additional token on $u$ from $G_i$. 
\end{enumerate} 

Finally, after verifying all potential states, in a cleaning step, for each obtained $\kappa_{s'}, T'_{s'}, A_{s'},$ $\Phi_{s'},$ and $f_{s'}$, we keep only one state with these values and the minimum value for $\ell$.
Let us estimate the running time. 
We need to generate all possible states and test whether they are valid. 
The number of possible states in $\soli$ or $\sol_{i+1}$ is bounded by $k\cdot b\cdot 2^{tw+2}\cdot \gamma\cdot (2k+1)^{tw+1}$ and each generated state can be validated in time that depends polynomially on $k^{tw}, b$, and $f'(q,tw)$ for some computable function~$f'$ from \Cref{thm:feferman-vaught} (we subsume here also dependence on the size of a type, which is bounded by a function depending only $q$ and $tw$). 
Thus, the running time is bounded by $h'(q,tw)\cdot k^{tw} n^{\Oof(1)}\subseteq h'(q,tw)\cdot n^{\Oof(tw)}$ for some computable function $h'$. 

\subparagraph*{Forget node.} Let $i$ be a forget node with child $i+1$ such that $B_i = B_{i+1} \setminus \{u\}$, for some $u \in B_{i+1}$. 
For each state $s\in \sol_{i+1}$, $s=(\kappa_s, \ell_s, T'_s, A_s, \Phi_s, f_s)$ and $f_s(u) = 0$, we compute a list of states to include in $\soli$. 
Observe that $N[u] \subseteq V(G_{i+1})$. 
Therefore, tokens can no longer enter $u$ from $G_i$ or leave $u$ to outside of $G_i$, so all states $s$ in $\sol_{i+1}$ that do not satisfy $f_s(u) = 0$ and $\{u\} \cap T'_{s} = \{u\} \cap A_{s}$ are no longer valid and are excluded from further consideration.
For each other state $s \in \sol_{i+1}$, we compute the follow-up state $s'=(\kappa_{s'}, \ell_{s'}, T'_{s'}, A_{s'}, \Phi_{s'}, f_{s'})$, where $\kappa_{s'}=\kappa_{s}, \ell_{s'}=\ell_{s}, T'_{s'}= T'_{s} \setminus \{u\}$, $A_{s'}= A_{s} \setminus \{u\}$, $f_{s'}(v) = f_s(v)$ for each $v \in B_i$, and~$\Phi_{s'}$ is verified by \Cref{thm:feferman-vaught} (in a time bounded by a function of $q$ and $tw$) by setting $G_j = G_{i+1}$, $G_i = G[B_i]$, $\bar u=\emptyset, \bar v=\bar v_i$, and $\bar w=\emptyset$. 

Finally, in a cleaning step, for each obtained $\kappa_{s'}, T'_{s'}, A_{s'}, \Phi_{s'},$ and $f_{s'}$, we keep only one state with these values and the minimum value for $\ell$.
The cleaning step can be performed in time polynomial in the number of such states.
We again obtain a running time of $h''(q,tw)\cdot n^{\Oof(tw)}$ for some computable function $h''$. 

\subparagraph*{Join node.}
Let $i$ be a join node with children $j$ and $h$ such that $B_i = B_j = B_h$. 
For each two states $s_j=(\kappa_{s_j}, \ell_{s_j}, T'_{s_j}, A_{s_j}, \Phi_{s_j}, f_{s_j})\in \sol_j$ and $s_h=(\kappa_{s_h}, \ell_{s_h}, T'_{s_h}, A_{s_h}, \Phi_{s_h}, f_{s_h})\in \sol_h$, we compute a potential state for $s'=(\kappa_{s'}, \ell_{s'}, T'_{s'}, A_{s'}, \Phi_{s'}, f_{s'})$ of $\soli$ as follows.

We check whether $T'_{s'}=T'_{s_j}= T'_{s_h}$, $A_{s_j}\cap A_{s_h} =\emptyset$, $A_{s'}=A_{s_j}\cup A_{s_h}$ and $\kappa_{s'}=\kappa_{s_j}+\kappa_{s_h}-|T'_{s'}|$ (we do not want to double count the shared part of the configuration).
We update the cost for the crossing of tokens via $B_i$ as follows. 
Note that a crossing of tokens occurs if $f_{s_j}(v)$ is positive and $f_{s_h}(v)$ is negative, or vice versa (when one vertex requests tokens from outside in one subgraph and wants to provide them in the other subgraph, it can satisfy its own needs, and the function~$f_{s'}$ is adapted accordingly). 
This is the case whenever $f_{s_j}(v) \cdot f_{s_h}(v) < 0$ for a vertex $v \in B_i$.
Thus, we check whether $f_{s'}=f_{s_j}+f_{s_h}$ and $\ell_{s'}=\ell_{s_j}+\ell_{s_h}$. 
Finally, we verify that $\Phi_{s'}$ is obtained from $\Phi_{s_j}$ and $\Phi_{s_h}$ using \Cref{thm:feferman-vaught} where $G_i = G_j$ and $G_j = G_h$, $\bar u=\bar w=\emptyset$ and $\bar v=\bar v_i$. 

When the loop ends, in a cleaning step, for each obtained $\kappa_{s'}, T'_{s'}, A_{s'}, \Phi_{s'},$ and $f_{s'}$, we keep only one state with these values and the minimum value for $\ell$.
The running time is dominated by the number of states in each of $\solj$ and $\solh$ (which is bounded by $k \cdot 2^{tw+2}\cdot \gamma\cdot (2k+1)^{tw+1}$) and the time to verify the types by \Cref{thm:feferman-vaught}. 
We again obtain a running time of $h'''(q,tw)\cdot n^{\Oof(tw)}$ for some computable function $h'''$. 

Let~$r$ be the root node. 
We obtain the final answer from the table entry at the root node. 
That is, we have a yes-instance if and only if $\phi(X) \in \Phi_s$ for some state $s\in \sol_r[(k,\ell, \emptyset, \emptyset, \Phi_s, \emptyset\to[-k,k])]$ for some $\ell \leq b$. 

It remains to add up the running time for all nodes. 
As we have $\Oof(n)$ nodes, this gives an additional factor of $\Oof(n)$, which in total sums up to $h(q,tw)\cdot n^{\Oof(tw)}$ for a final computable function~$h$. 
\end{proof}

\section{\MSO-Discovery Parameterized by Neighborhood Diversity is in \FPT}\label{sec:nd}
In this section, we show that \MSOD\ is in \FPT\ when parameterized by the parameter neighborhood diversity.

Let $V_1,\ldots,V_\ell$ be the partition of $V$ into sets of vertices such that any two vertices~$u$ and $v$ in~$V_i$ for $i \in [\ell]$ are twins and have the same colors (that is, $\mathcal{C}(u) = \mathcal{C}(v)$).
We refer to each $V_i$ as a \emph{vertex-type} and say that two vertices $u, v \in V_i$ are of the \emph{same vertex-type}. 
This partition can be computed in polynomial time and the number $\ell$ depends only on $nd(G)$ and on the number of colors in $\mathcal{C}$.

For this \FPT\ algorithm, we will require the concept of \emph{shapes} of vertex subsets~\cite{DBLP:journals/lmcs/KnopKMT19}.
The key insight is that, when dealing with logical formulas, many different vertex subsets behave identically from the formula's perspective.
Intuitively, the shape records how many vertices a vertex subset contains from each vertex-type, but with a crucial optimization.
When the subset $R$ contains a ``medium'' number of vertices from a vertex-type $V_i$ (not too few, not too many relative to some number), the exact count becomes irrelevant for the truth value of the formula.
The shape abstracts this by marking such counts as $\bot$.

\begin{definition}[Shape]
Let $\phi(X)$ be an \MSO-formula with a free set variable~$X$, and let $q_s$ and $q_v$ be the numbers of set and vertex quantifiers in $\phi$, respectively. 
Let~$G$ be a graph with $\ell$ vertex-types $V_1,\ldots,V_\ell$, and define $q(\phi) = 2^{q_s} \cdot q_v$. 
Let $R \subseteq V(G)$ and let $\sigma_R(i) = |R \cap V_i|$. 
The \emph{shape} of $R$ is defined as follows:
\[
\overline{\sigma}_R(i) =
\begin{cases}
\bot & \text{if } q(\phi) \le \sigma_R(i) \le |V_i| - q(\phi), \\
\sigma_R(i) & \text{otherwise},
\end{cases}
\]
\end{definition}

The following result of Gima et al.~\cite{gima2024algorithmic} is central to our algorithm.

\begin{lemma}\label{lem:vertex-shapes-feasible}
If two size-$k$ sets $R, R' \subseteq V(G)$ have the same shape, then $G\models\phi(R)$ if and only if~$G\models\phi(R')$.   
\end{lemma}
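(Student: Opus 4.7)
The plan is to prove this equivalence via an Ehrenfeucht--Fra\"\i ss\'e (EF) game argument for \MSO. I would treat $R$ and $R'$ as additional unary predicates, encoded as new colors, so that $(G,R)$ and $(G,R')$ become colored structures, and the target biconditional $G\models\phi(R)\Leftrightarrow G\models\phi(R')$ becomes an \MSO-sentence equivalence on these structures. By the standard EF characterization, it then suffices to exhibit a winning strategy for Duplicator in the \MSO-EF game with $q_s$ set rounds and $q_v$ vertex rounds played between $(G,R)$ and $(G,R')$.

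Duplicator's strategy rests on a refinement invariant. After $t \le q_s$ set moves have been played, the Boolean atoms formed from the chosen set predicates refine each vertex-type $V_i$ into at most $2^{t}$ \emph{refined classes} on each side. Duplicator maintains a bijective correspondence between the refined classes of the two sides such that, for each corresponding pair $(W, W')$ lying inside a common $V_i$, either $|W| = |W'|$ exactly, or both $|W|$ and $|W'|$ are at least $2^{q_s - t} \cdot q_v$; additionally, each already-pebbled vertex on one side lies in the refined class matched to its partner's class on the other side. This invariant immediately implies that at the end of the game the induced partial map is a partial isomorphism preserving the original colors together with $R$ and $R'$, which is exactly what Duplicator needs to win.

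To propagate the invariant across a set move, when Spoiler plays a set $S$ on, say, the left side, Duplicator constructs $S'$ class by class. In each refined class $W$ with counterpart $W'$, either $|W| = |W'|$, in which case Duplicator copies $|S \cap W|$ arbitrary unused vertices into $S' \cap W'$; or both sides have at least $2^{q_s - t} \cdot q_v$ vertices and Duplicator selects $|S' \cap W'|$ so that each of the two resulting child classes either has exact equality across the sides, or both sides have at least $2^{q_s - t - 1} \cdot q_v$ vertices. A short case analysis on $|S \cap W|$ versus the threshold $2^{q_s - t - 1}\cdot q_v$ shows this is always achievable, halving the threshold as expected. For a vertex move, Duplicator picks an unused vertex of the corresponding refined class on the other side; this is possible because at most $q_v$ vertex pebbles are ever placed and every non-equal-size refined class retains at least $q_v$ vertices on both sides throughout the game.

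The main subtlety, and the base case of the induction, is verifying the invariant at round $0$, which is precisely where the shape equality is used. For each $V_i$, the equality $\overline{\sigma}_R(i) = \overline{\sigma}_{R'}(i)$ either gives $|R \cap V_i| = |R' \cap V_i|$ exactly (in which case the complements within $V_i$ also agree, since $|V_i|$ is fixed), or asserts that both $|R \cap V_i|,|R' \cap V_i|$ and both $|V_i \setminus R|,|V_i \setminus R'|$ are at least $q(\phi) = 2^{q_s} \cdot q_v$. In either case, the two initial refined partitions $\{R \cap V_i,\, V_i \setminus R\}$ and $\{R' \cap V_i,\, V_i \setminus R'\}$ satisfy the required invariant with starting threshold $q(\phi)$, from which the strategy described above runs to the end of the game and yields the desired \MSO-equivalence of $(G,R)$ and $(G,R')$, and hence the lemma.
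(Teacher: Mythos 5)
The paper never proves this lemma internally: it is imported from Gima et al.~\cite{gima2024algorithmic}, so there is no in-paper argument to compare against. Your proposal supplies a self-contained proof, and it is essentially the standard route behind such shape lemmas: an Ehrenfeucht--Fra\"iss\'e/composition argument over the twin classes in which the ``largeness'' threshold halves with every set move, which is precisely why the quantity $q(\phi)=2^{q_s}\cdot q_v$ appears in the definition of shape. Your invariant (each matched pair of refined classes is either exactly equal in size or both of size at least $2^{q_s-t}q_v$), the base case from shape equality (exact equality of $|R\cap V_i|$ and of the complements, or both at least $q(\phi)$), and the three-way case analysis for splitting a large class are all correct and yield a winning Duplicator strategy.

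Two points should be tightened, though neither is a fundamental gap. First, in a set move you cannot place ``arbitrary unused vertices'' into $S'\cap W'$: the partners of already-pebbled vertices of $W$ must land on the correct side of the split (in $S'$ exactly when their mates lie in $S$), since your invariant requires pebbled pairs to stay in matched child classes; the cardinality bookkeeping still goes through because at most $q_v$ pebbles exist while every non-exact class keeps size at least $2^{q_s-t}q_v\ge q_v$, but the strategy description has to enforce this explicitly. Second, in a vertex move, if Spoiler replays an already-pebbled vertex Duplicator must answer with its existing partner rather than a fresh ``unused'' vertex, and to conclude that the final map is a partial isomorphism you should note that adjacency between two chosen vertices inside one vertex-type is automatically matched because each type induces either a clique or an independent set, while adjacency across types depends only on the pair of types. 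With these routine repairs your argument is a valid proof of the lemma, morally the same counting-threshold argument as in the cited source, just phrased game-theoretically.
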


This abstraction dramatically reduces the search space when searching for a set that satisfies the formula.
Instead of considering all $\binom{n}{k}$ possible vertex subsets of size~$k$, we only need to examine one subset of each distinct shape, and the number of shapes depends on the formula size and the graph structure (particularly its neighborhood diversity), rather than the graph size.

Additionally, we will require an \FPT\ algorithm for the \textsc{Minimum‐Cost Flow with Interval Demand and Supply} problem (\textsc{MinMCF}) and an \FPT\ algorithm for model checking.
The latter exists for parameter neighborhood diversity and $|\phi|$ by a corollary from Lampis~\cite{DBLP:journals/algorithmica/Lampis12}.

\textsc{MinMCF} is solvable in polynomial time~\cite{DBLP:journals/networks/Hoppmann-Baum22} and is defined on a digraph \(D\) as follows.  
Each arc~$a$ in $D$ has a maximum capacity \(ca(a)\) and a unit of flow cost~\(co(a)\).  
Each vertex \(v\) in $D$ is assigned an interval \([I_v^{\min},\,I_v^{\max}]\).  
A flow is the assignment of a value \(fl(a)\) to each arc such that:
\begin{itemize}
  \item \(0 \le fl(a) \le ca(a)\) for all $a \in A(D)$;
  \item the \emph{net balance} for each $v \in V(D)$, denoted $ba(v) = \sum_{u:(u,v) \in A} fl(u,v)$ $- \sum_{w:(v,w) \in A} fl(v,w)$, is in the interval $[I_v^{\min},I_v^{\max}]$ (a positive net balance creates a sink, a negative net balance creates a source), and 
  \item \(\sum_{v\in V(D)} ba(v) = 0\).
\end{itemize}
The goal of the problem is to answer whether there exists a flow with a total cost \(\sum_{a\in A(D)} co(a) fl(a)\) at most $\ell$.

\pagebreak
We are now ready to present the \FPT\ algorithm.

\thmND

\begin{proof}
Let $((G,\mathcal{C}),S,b,\phi)$ be an instance of \MSOD\ where $C$ is the set of colors, $S$ is the initial set, $b$ is the budget, and $\phi$ is the formula. 
Let $\ell$ be the number of vertex-types in $(G,\mathcal{C})$; then $\ell \le nd(G) \cdot 2^{|\mathcal{C}|} \le nd(G) \cdot 2^{|\phi|}$.
We enumerate every admissible shape $\overline{\sigma}_{T}: [\ell] \rightarrow \{0, \ldots, q(\phi) - 1, \bot, |V_i| - q(\phi) + 1, |V_i| - q(\phi) + 2, \ldots,|V_i|\}$ of the final solution $T$ (of size $k$).
There are only $\ell$ vertex-types and $q(\phi)$ is a parameter, so this yields a bounded number ($(2q(\phi)+1)^\ell$) of shapes, which we call \emph{guesses}. 
Using a flow argument, we will now prove that, for each guess, we can compute in \FPT\ (in fact polynomial) time the minimum budget needed for a configuration of that shape (if any exists).

Let $\overline{\sigma}_{T}$ be a guess.
Let us first give an algorithm simulating the token sliding process using a flow in a digraph.
For each vertex-type $V_i$ of $(G,\mathcal{C})$, we create a vertex~$v_i$ in the digraph $D$ and connect the vertices $v_i,v_j$ (in both directions) if and only if all vertices in $V_i$ are adjacent to all vertices in $V_j$ (recall that vertices of the same type are twins in the graph).
Arcs of $D$ have infinite capacity and a unit of flow cost of $1$. 
Each vertex $v_i$ in $D$ has:
\begin{itemize}
    \item an interval equal to $[-\big(|V_i \cap S| - {\sigma}_{T}(i)\big)], -\big(|V_i \cap S| - {\sigma}_{T}(i)\big)]$ if $\overline{\sigma}_{T}(i) \neq \bot$ and $|V_i \cap S| > {\sigma}_{T}(i)$;
    \item an interval equal to $[{\sigma}_{T}(i) - |V_i \cap S|, {\sigma}_{T}(i) - |V_i \cap S|]$ if $\overline{\sigma}_{T}(i) \neq \bot$ and $|V_i \cap S| < {\sigma}_{T}(i)$;
    \item an interval equal to $[q(\phi) - |V_i \cap S|, |V_i| - q(\phi) - |V_i \cap S|]$ if $\overline{\sigma}_{T}(i) = \bot$ and $|V_i \cap S| < q(\phi)$;
    \item an interval equal to $[-\big(|V_i \cap S| - |V_i| + q(\phi)\big), \big(|V_i \cap S| - q(\phi)\big)]$ if $\overline{\sigma}_{T}(i) = \bot$ and $|V_i \cap S| > |V_i| - q(\phi)$; and
    \item an interval equal to $[0,0]$, otherwise.
\end{itemize}

We now ask whether there exists a flow in $D$ with a total cost of at most $b$. 
This can be done in \FPT-time (using the algorithm for \textsc{MinMCF}~\cite{DBLP:journals/networks/Hoppmann-Baum22}) since the number of vertex-types, thus vertices in~$D$, is bounded.
Then, for each guess $\overline{\sigma}_{T}$ with a flow of total cost at most $b$, we construct a $k$-size set of vertices $T \subseteq V(G)$ of shape $\overline{\sigma}_{T}$ in linear time.
We check whether $G \models \phi(T)$, which can be done in \FPT\ time with respect to $nd(G)$ and $|\phi|$, and return YES only when $G \models \phi(T)$. 
If we have exhausted all enumerated shapes and did not return YES, we return NO.

Since all steps involve a bounded number of guesses (equal to $(2q(\phi) + 1)^\ell$ with $\ell \le nd(G) \cdot 2^{|\mathcal{C}|} \le nd(G) \cdot 2^{|\phi|}$ since we are forming every possible mapping of $[\ell]$ to $(2q(\phi) + 1)$ numbers), and for each guess we incur \FPT\ running time with respect to $(nd(G), |\phi|)$, the overall algorithm is \FPT\ with respect to $(nd(G), |\phi|)$.

The algorithm is correct since, if for a fixed \MSO-formula $\phi$, a set $S$, and a budget $b$, a solution~$T^\star$ exists along with a transformation sequence $\vec{T}^\star$ of at most $b$ token slides that realizes $T^\star$ from $S$, then for at least one guess $T$ we will have  $\overline{\sigma}_{T} = \overline{\sigma}_{T^\star}$.
Given that the sequence $\vec{T}^\star$ encounters at most $b$ token slides between vertex-types, and achieves the set $T^\star$ of shape $\overline{\sigma}_{T^\star}$, routing one unit of flow for every inter‑vertex-type slide in $\vec{T}^\star$ yields a flow of cost at most $b$. 
In the final checking phase, the algorithm will be able to construct a set of shape $\overline{\sigma}_{T} = \overline{\sigma}_{T^\star}$, and since $T^\star$ is a solution, the set constructed by the algorithm will also be a solution by \Cref{lem:vertex-shapes-feasible} and therefore, the algorithm will return YES.

Conversely, if the algorithm accepts, then there exists a flow that satisfies the interval and cost constraints imposed by some guessed fixed shape $\overline{\sigma}_{T}$ and the budget~$b$.
A unit of flow sent along an arc in $D$ represents sliding one token across an edge of~$G$ that joins the two corresponding vertex-types; since each arc in $D$ between vertices representing vertex-types has a unit of flow cost of~$1$, a flow of total cost at most~$b$ encodes $b$ token slides between vertex-types. 
The intervals guarantee that, after performing those slides imposed by the flow, every type $V_i$ contains exactly ${\sigma}_{T}(i)$ tokens if $\overline{\sigma}_{T}(i) \neq \bot$ and between $q(\phi)$ and $|V_i| - q(\phi)$ tokens otherwise. 
Thus, the set~$T^\star$, resulting from performing those slides imposed by the flow on $G$, has the guessed shape $\overline{\sigma}_{T}$ and is a solution as is any other set of the shape $\overline{\sigma}_{T}$ which has lead to acceptance (\Cref{lem:vertex-shapes-feasible}).
\end{proof}
\section{Hardness for Modulator to Stars and Paths Numbers}\label{sec:logic-solution-discovery-modulator-hardness}
In this section, we show that \FOD\ is \W[1]-hard when parameterized by the parameters modulator to stars and modulator to paths numbers.
Before giving the proofs, we explain the different gadgets we use in each of the reductions.

For both results, we reduce from (a simplification of) the \textsc{Multicolored Clique} problem.
In this problem, we are given a graph $G$ where $V(G)$ is partitioned into $\kappa$ color classes, which we denote by $\{V_1, \ldots, V_\kappa\}$ (to avoid confusion with the color classes in our construction). 
The goal is to decide whether there exists a clique that contains a vertex from each color class.
The problem is \textsf{W[1]}-hard when parameterized by the solution size~$\kappa$ \cite{parameterizedalgorithms}. 
Without loss of generality, we will assume, throughout our reductions, that for each $i \in [\kappa]$, $|V_i| = n$ (we can add vertices of degree $0$ as needed).
For each $i \in [\kappa]$, we assign a distinct index in $[n]$ to each vertex in $V_i$ and we refer to this index for each $v \in V$ as \emph{$\iota(v)$}.
In an instance~$(G, \kappa)$ of the \textsc{Multicolored Clique} problem, the edge set~$E(G)$ can be partitioned into $\binom{\kappa}{2}$ sets $E_{i,j} = \{uv \mid u \in V_i, v \in V_j, uv\in E(G)\}$ for $1 \le i < j \le \kappa$. 

Let $(G, \kappa)$ be an instance of \textsc{Multicolored Clique}.
At a high level, token movements in the instance $((H, \mathcal{C}), S, b, \phi)$ of \FOD\ will encode both the selection of one vertex from each color class and one edge between each unordered pair of color classes, and the budget will force edge verification (that is, that the selected vertices form a clique).

The construction creates a \emph{vertex-block} for each color class $V_i$ and an \emph{edge-block} for each unordered pair of color classes.
A vertex-block for color class $V_i$, denoted \emph{vertex-block for $i$}, contains $n$ \emph{vertex-gadgets} (one for each vertex in $V_i$), plus a special \emph{block node} $b_i$ that connects all vertex-gadgets in the block. 
Similarly, an edge-block for an unordered pair of color classes $V_i$ and $V_j$, denoted \emph{edge-block for $\{i,j\}$}, contains \emph{edge-gadgets} (one for each edge between $V_i$ and $V_j$), plus a special block node $b_{i,j}$.
The gadgets themselves have the appropriate structure.
That is, they are stars in the modulator to stars reduction, and paths in the modulator to paths reduction, which ensures that they are not included in the modulator.
All vertex-blocks are identical, all edge-blocks are identical, all vertex-gadgets are identical, and all edge-gadgets are identical; information is encoded entirely in how gadgets are connected to \emph{connector nodes}. 
Initially, the leaves of the stars and the paths constituting vertex-gadgets are filled with tokens, and nodes constituting edge-gadgets are empty.

The \FO-formula specifies that for each vertex-block, exactly one vertex-gadget releases its tokens (encoding vertex selection) and for each edge-block exactly one edge-gadget, or all its leaves in the case of the modulator to stars reduction, is filled with tokens (encoding edge presence between selected vertices).

The reduction employs connector nodes that act as controlled gateways between vertex-blocks and edge-blocks.
For each unordered pair of color classes $V_i$ and $V_j$, we introduce four \emph{connector nodes for $\{i,j\}$}.
Equivalently, we introduce two pairs of connector nodes, \emph{connector nodes from color $i$ to $\{i,j\}$} and \emph{connector nodes from color~$j$ to $\{i,j\}$};
\begin{enumerate}
    \item the \emph{index connector node from color $i$ to $\{i,j\}$} which we call for brevity \emph{index$(i$,$\{i,j\})$},
    \item the \emph{remainder connector node from color $i$ to $\{i,j\}$} which we call for brevity \emph{remainder\\$(i$,$\{i,j\})$},
    \item the \emph{index connector node from color $j$ to $\{i,j\}$} which we call for brevity \emph{index$(j$,$\{i,j\})$}, and
    \item the \emph{remainder connector node from color $j$ to $(i,j)$} which we call for brevity \emph{remainder\\$(j$,$\{i,j\})$},
\end{enumerate}
with carefully designed adjacencies.
A connector node from color $i$ to $\{i,j\}$ indicates that the vertex-block for $i$ will serve as the origin of the tokens and the edge-block for $\{i,j\}$ will serve as the destination of the tokens that will traverse this connector.

The connections from a vertex-gadget for $v$ in $V_i$ to the connector nodes for $\{i,j\}$ encode the index of $v$ by having $\iota(v)$ (non-block) nodes of the gadget adjacent only to index$(i$,$\{i,j\})$ among all connector nodes and $n - \iota(v)$ other (non-block) nodes of the gadget adjacent only to remainder$(i$,$\{i,j\})$ among all connector nodes.
Connections from an edge-gadget representing edge $uv$ between $V_i$ and $V_j$ to the connector nodes for $\{i,j\}$ encode both indices $\iota(u)$ and $\iota(v)$, by having $\iota(u)$ (non-block) nodes of the gadget adjacent only to index$(i$,$\{i,j\})$ among all connector nodes, $n - \iota(u)$ other (non-block) nodes of the gadget adjacent only to remainder$(i$,$\{i,j\})$ among all connector nodes, $\iota(v)$ other (non-block) nodes of the gadget adjacent only to index$(j$,$\{i,j\})$ among all connector nodes, and $n - \iota(v)$ other (non-block) nodes of the gadget adjacent only to remainder$(j$,$\{i,j\})$ among all connector nodes.
This creates a ``lock-and-key'' mechanism: only the vertex-gadget representing $u$ can send tokens from its nodes connected to the connector nodes from color $i$ to $\{i,j\}$ to the edge-gadget for edge~$uv$ using only two token slides per token, and only the vertex-gadget for $v$ can send tokens from its nodes connected to the connector nodes from color $j$ to $\{i,j\}$ to the edge-gadget for edge $uv$ using only two token slides per token.
In other words, tokens can only flow efficiently (using two token slides per token, which is imposed by the budget) from any two vertex-gadgets to an edge-gadget (through connectors) if the indices match (that is, an edge exists between these two vertices in $G$).
Thus, a valid token movement pattern exists if and only if the selected vertices form a multicolored clique.

The modulator consists of all the block nodes and connector nodes, which total at most $\kappa + \binom{\kappa}{2} + 4\binom{\kappa}{2}$, and removing it leaves a disjoint union of stars (or paths), ensuring that the parameters are bounded by a function of $\kappa$.

We now present in greater detail the gadgets, blocks, and budget. 
In the subsequent subsections, we give the formula and proofs specific to each parameter.

As mentioned earlier, the construction consists of identical vertex-blocks (made of identical vertex-gadgets) and identical edge-blocks (made of identical edge-gadgets) which are joined together by connector vertices. 

To form the colored graph $(H, \mathcal{C})$, we first create $\kappa$ vertex-blocks, one for each color class $V_i$ in the instance $(G, \kappa)$ of \textsc{Multicolored Clique}.
\subparagraph*{Star vertex-block.} A \emph{star vertex-block for $i$} consists of a \emph{vertex block node $b_i$} that has as children the roots of $n$ stars (that is, star vertex-gadgets), one for each vertex $v$ in $V_i$.
The \emph{star vertex-gadget for $v$} is a star with a \emph{vertex root node for $v$}, and $\kappa-1$ \emph{groups} of $n$ leaves, where each group corresponds to one of the $\kappa-1$ color classes that differ from the color class $V_i$.
See \Cref{fig:logic-solution-discovery-modulator-hardness-star-vertex-gadget}.

We assign the following colors to the nodes in a star vertex-block:
\begin{itemize}          
\item{} purple ($C_1$) to all vertex root nodes, and 
\item{} teal ($C_2$) to all vertex block nodes.
\end{itemize} 

\noindent We also create $\binom{\kappa}{2}$ edge-blocks, one for each unordered pair of color classes $V_i$ and $V_j$.

\begin{figure}[ht]
    \centering
    \begin{tikzpicture}[every node/.style={inner sep=1.5pt}]
      \node[circle,draw,fill=violet,label={[font=\scriptsize]below right:\shortstack{vertex root node\\for $v$}}] (pix) at (0,-1.5) {};
      \node[circle,draw,fill=violet,label={[font=\scriptsize]below left:vertex root node for $u$}] (pux) at (-3,-1.5) {};
      \node[circle,draw,fill=violet,label={[font=\scriptsize]below right:vertex root node for $w$}] (pwx) at (3,-1.5) {};
      \node[circle,draw,fill=teal,label=below:{$b_3$}] (bi) at (0,-2) {};
      \draw (pix)--(bi);
      \draw (pux)--(bi);
      \draw (pwx)--(bi);

      \foreach \X [count=\n from 1] in {1,x,x+1,n}{
        \node[circle,draw] (q1\n) at (\n-7.0,0) {};
      }
      \node at ($(q11)!0.5!(q12)$) {$\cdots$};
      \node at ($(q13)!0.5!(q14)$) {$\cdots$};

      \foreach \X [count=\n from 1] in {1,x,x+1,n}{
        \node[circle,draw] (q2\n) at (\n-2.5,0) {};
      }
      \node at ($(q21)!0.5!(q22)$) {$\cdots$};
      \node at ($(q23)!0.5!(q24)$) {$\cdots$};

      \foreach \X [count=\n from 1] in {1,x,x+1,n}{
        \node[circle,draw] (q3\n) at (\n+2.0,0) {};
      }
      \node at ($(q31)!0.5!(q32)$) {$\cdots$};
      \node at ($(q33)!0.5!(q34)$) {$\cdots$};
      \foreach \g in {1,2,3}{
        \foreach \n in {1,2,3,4}{
          \draw (pix)--(q\g\n);
        }
      }
      \draw[decorate,decoration={brace,amplitude=5pt}] 
        ($(q11)+(0,0.4)$) -- ($(q14)+(0,0.4)$) 
        node[midway,above=5pt,font=\scriptsize,align=center] {group of $n$ nodes \\ corresponding to color class $V_{1}$};
      \draw[decorate,decoration={brace,amplitude=5pt}] 
        ($(q21)+(0,0.4)$) -- ($(q24)+(0,0.4)$) 
        node[midway,above=5pt,font=\scriptsize,align=center] {group of $n$ nodes \\ corresponding to color class $V_{2}$};
      \draw[decorate,decoration={brace,amplitude=5pt}] 
        ($(q31)+(0,0.4)$) -- ($(q34)+(0,0.4)$) 
        node[midway,above=5pt,font=\scriptsize,align=center] {group of $n$ nodes \\ corresponding to color class $V_{4}$};
    \end{tikzpicture}
    \caption{A star vertex-gadget for vertex $v \in V_3$ with $\iota(v)=x$, for an instance with $\kappa=4$, the vertex root nodes for $u, w \in V_3$, and the vertex block node $b_3$.}
    \label{fig:logic-solution-discovery-modulator-hardness-star-vertex-gadget}
\end{figure}
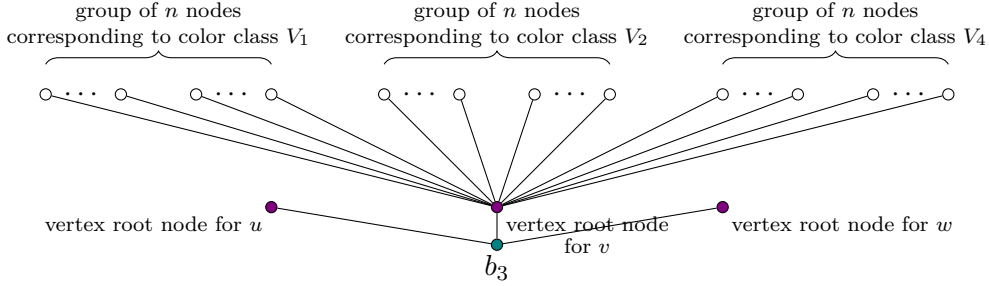
\subparagraph*{Star edge-block.} A \emph{star edge-block for $\{i,j\}$} consists of an \emph{edge block node $b_{i,j}$} that has as children the roots of multiple stars (that is, star edge-gadgets), one for each edge between $V_i$ and $V_j$.
The \emph{star edge-gadget for $e$} is a star with a \emph{edge root node for~$e$}, and two \emph{groups} of $n$ leaves, each corresponding to one endpoint of $e$.
See \Cref{fig:logic-solution-discovery-modulator-hardness-star-edge-gadget}.

We assign the following colors to the nodes in a star edge-block:
\begin{itemize}        
\item{} black ($C_3$) to all edge root nodes, and 
\item{} gray ($C_4$) to all edge block nodes.
\end{itemize} 

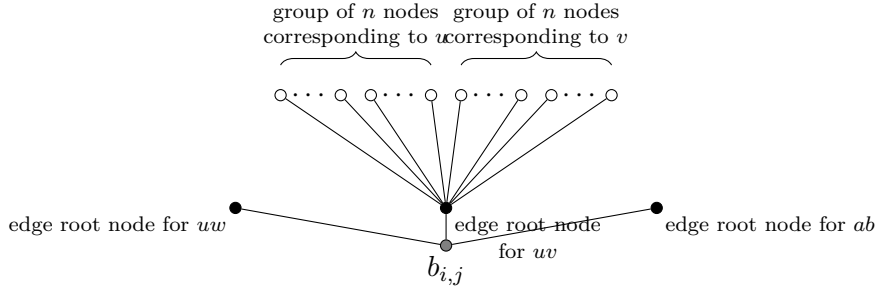
\begin{figure}[h]
  \centering
\begin{tikzpicture}[every node/.style={inner sep=1.5pt}]
   \node[circle,draw,fill=black, label={[font=\scriptsize]below right:\shortstack{edge root node\\for $uv$}}] (pe) at (0,-1.5) {};
   \node[circle,draw,fill=black,label={[font=\scriptsize]below left:edge root node for $uw$}] (uw) at (-2.8,-1.5) {};
   \node[circle,draw,fill=black,label={[font=\scriptsize]below right:edge root node for $ab$}] (ab) at (2.8,-1.5) {};
   \node[circle,draw,fill=gray,label=below:{$b_{i,j}$}] (bi) at (0,-2) {};
   \draw (pe)--(bi);
   \draw (uw)--(bi);
   \draw (ab)--(bi);
  
  \node[circle,draw] (g1n1) at (-2.2,0) {};
  \node at (-1.8,0) {$\cdots$};
  \node[circle,draw] (g1n3) at (-1.4,0) {};
  \node[circle,draw] (g1n4) at (-1.0,0) {};
  \node at (-0.6,0) {$\cdots$};
  \node[circle,draw] (g1n6) at (-0.2,0) {};
  
  \node[circle,draw] (g2n1) at (0.2,0) {};
  \node at (0.6,0) {$\cdots$};
  \node[circle,draw] (g2n3) at (1.0,0) {};
  \node[circle,draw] (g2n4) at (1.4,0) {};
  \node at (1.8,0) {$\cdots$};
  \node[circle,draw] (g2n6) at (2.2,0) {};
  
  \draw (pe)--(g1n1) (pe)--(g1n3) (pe)--(g1n4) (pe)--(g1n6);
  \draw (pe)--(g2n1) (pe)--(g2n3) (pe)--(g2n4) (pe)--(g2n6);
  
  \draw[decorate,decoration={brace,amplitude=5pt}] 
    ($(g1n1)+(0,0.4)$) -- ($(g1n6)+(0,0.4)$) 
    node[midway,above=5pt,font=\scriptsize,align=center] {group of $n$ nodes \\ corresponding to $u$};
  \draw[decorate,decoration={brace,amplitude=5pt}] 
    ($(g2n1)+(0,0.4)$) -- ($(g2n6)+(0,0.4)$) 
    node[midway,above=5pt,font=\scriptsize,align=center] {group of $n$ nodes \\ corresponding to $v$};
\end{tikzpicture}
  \caption{A star edge-gadget representing $e = uv$ where $u \in V_i$ and $v \in V_j$ with $\iota(u) = x$ and $\iota(v) = y$, the edge root nodes for $uw$ and $ab$ (edges between $V_i$ and $V_j$), and the edge block node $b_{i,j}$. The edge-gadget for $uv$ is formed of $2n$ leaf nodes, $n$ corresponding to $u$ and $n$ corresponding to $v$.}
  \label{fig:logic-solution-discovery-modulator-hardness-star-edge-gadget}
\end{figure}

In the case of the modulator to paths reduction, the vertex-blocks and edge-blocks are constructed as follows.
\subparagraph*{Path vertex-block.} A \emph{path vertex-block for $i$} consists of a \emph{vertex block node $b_i$} adjacent to the nodes of $n$ paths (that is, path vertex-gadgets), one for each vertex $v$ in $V_i$.
The \emph{path vertex-gadget for $v$} is a path with $n(\kappa-1)$ nodes, divided into $\kappa-1$ \emph{groups} of $n$ consecutive nodes, where each group corresponds to one of the $\kappa-1$ colors that differ from the color class $V_i$.
See \Cref{fig:logic-solution-discovery-modulator-hardness-path-vertex-gadget}.

We assign teal ($C_2$) to all vertex block nodes in a path vertex-block.

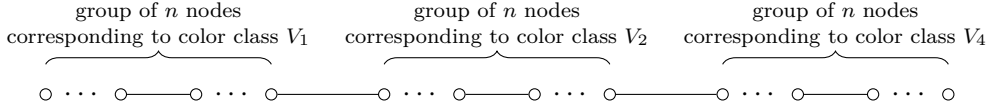
\begin{figure}[h]
    \centering
\begin{tikzpicture}[every node/.style={inner sep=1.5pt}]
  \node[circle,draw] (g1n1) at (-8,0) {};
  \node at (-7.5,0) {$\cdots$};
  \node[circle,draw] (g1n3) at (-7,0) {};
  \node[circle,draw] (g1n4) at (-6,0) {};
  \node at (-5.5,0) {$\cdots$};
  \node[circle,draw] (g1n6) at (-5,0) {};
  
  \node[circle,draw] (g2n1) at (-3.5,0) {};
  \node at (-3,0) {$\cdots$};
  \node[circle,draw] (g2n3) at (-2.5,0) {};
  \node[circle,draw] (g2n4) at (-1.5,0) {};
  \node at (-1,0) {$\cdots$};
  \node[circle,draw] (g2n6) at (-0.5,0) {};
  
  \node[circle,draw] (g3n1) at (1,0) {};
  \node at (1.5,0) {$\cdots$};
  \node[circle,draw] (g3n3) at (2,0) {};
  \node[circle,draw] (g3n4) at (3,0) {};
  \node at (3.5,0) {$\cdots$};
  \node[circle,draw] (g3n6) at (4,0) {};
  
  \draw (g1n3)--(g1n4);
  \draw (g1n6)--(g2n1);
  \draw (g2n3)--(g2n4);
  \draw (g2n6)--(g3n1);
  \draw (g3n3)--(g3n4);
  
  \draw[decorate,decoration={brace,amplitude=5pt}] 
    ($(g1n1)+(0,0.4)$) -- ($(g1n6)+(0,0.4)$) 
    node[midway,above=5pt,font=\scriptsize,align=center] {group of $n$ nodes \\ corresponding to color class $V_{1}$};
  \draw[decorate,decoration={brace,amplitude=5pt}] 
    ($(g2n1)+(0,0.4)$) -- ($(g2n6)+(0,0.4)$) 
    node[midway,above=5pt,font=\scriptsize,align=center] {group of $n$ nodes \\ corresponding to color class $V_{2}$};
  \draw[decorate,decoration={brace,amplitude=5pt}] 
    ($(g3n1)+(0,0.4)$) -- ($(g3n6)+(0,0.4)$) 
    node[midway,above=5pt,font=\scriptsize,align=center] {group of $n$ nodes \\ corresponding to color class $V_{4}$};
\end{tikzpicture}
  \caption{A path vertex-gadget for vertex $v \in V_3$ with $\iota(v)=x$, for an instance with $\kappa=4$.
  The vertex block node $b_3$ is adjacent to all nodes of path vertex-gadgets for all vertices in $V_3$, but it is omitted from the figure for clarity.}
  \label{fig:logic-solution-discovery-modulator-hardness-path-vertex-gadget}
\end{figure}

\subparagraph*{Path edge-block.} A \emph{path edge-block for $\{i,j\}$} consists of an \emph{edge block node $b_{i,j}$} adjacent to the nodes of multiple paths (that is, path edge-gadgets), one for each edge between $V_i$ and $V_j$.
The \emph{path edge-gadget for $e$} is a path with $2n$ nodes, with the \emph{first group} of $n$ nodes corresponding to one endpoint of $e$ and the \emph{last group} of $n$ nodes corresponding to the other endpoint.
See \Cref{fig:logic-solution-discovery-modulator-hardness-path-edge-gadget}.

We assign gray ($C_4$) to all edge block nodes in a path edge-block.

\begin{figure}[h]
    \centering
    \begin{tikzpicture}[every node/.style={inner sep=1.5pt}]
      \node[circle,draw] (g1n1) at (-5,0) {};
      \node at (-4.5,0) {$\cdots$};
      \node[circle,draw] (g1n3) at (-4,0) {};
      \node[circle,draw] (g1n4) at (-3,0) {};
      \node at (-2.5,0) {$\cdots$};
      \node[circle,draw] (g1n6) at (-2,0) {};
      
      \node[circle,draw] (g2n1) at (-0.5,0) {};
      \node at (0,0) {$\cdots$};
      \node[circle,draw] (g2n3) at (0.5,0) {};
      \node[circle,draw] (g2n4) at (1.5,0) {};
      \node at (2,0) {$\cdots$};
      \node[circle,draw] (g2n6) at (2.5,0) {};
      
      \draw (g1n3)--(g1n4);
      \draw (g1n6)--(g2n1);
      \draw (g2n3)--(g2n4);
      
     \draw[decorate,decoration={brace,amplitude=5pt}] 
     ($(g1n1)+(0,0.4)$) -- ($(g1n6)+(0,0.4)$) 
     node[midway,above=5pt,font=\scriptsize,align=center] {first group of $n$ nodes \\ corresponding to $u$};
     \draw[decorate,decoration={brace,amplitude=5pt}] 
     ($(g2n1)+(0,0.4)$) -- ($(g2n6)+(0,0.4)$) 
     node[midway,above=5pt,font=\scriptsize,align=center] {last group of $n$ nodes \\ corresponding to $v$};
    \end{tikzpicture}
    \caption{A path edge-gadget for edge $e=uv$ where $u \in V_i$ and $v \in V_j$ with $\iota(u) =x$ and $\iota(v)=y$.
    It is formed of $2n$ nodes, $n$ corresponding to $u$ and $n$ corresponding to $v$.
    The edge block node $b_{i,j}$ is adjacent to all nodes of path vertex-gadgets for all edges between $V_i$ and $V_j$, but it is omitted from the figure for clarity.}
    \label{fig:logic-solution-discovery-modulator-hardness-path-edge-gadget}
\end{figure}
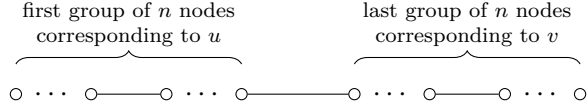

Then, for each unordered pair of color classes $V_i$ and $V_j$, we create four connector nodes; index connector node from color $i$ to $\{i,j\}$ which we call for brevity index$(i$,$\{i,j\})$, remainder connector node from color $i$ to $\{i,j\}$ which we call for brevity remainder$(i$,$\{i,j\})$, index connector node from color $j$ to $\{i,j\}$ which we call for brevity index$(j$,$\{i,j\})$, remainder connector node from color $j$ to $\{i,j\}$ which we call for brevity remainder$(j$,$\{i,j\})$. 
See \Cref{fig:logic-solution-discovery-modulator-hardness}.
We assign green ($C_5$) to all connector nodes.

We connect vertex-blocks and edge-blocks to green connector nodes as follows.

\begin{itemize}
\item{} For the vertex-gadget for the vertex $v \in V_i$ with $\iota(v) = x$, the $n$ nodes corresponding to the color class $V_j$ are split as follows: the first $x$ are adjacent to index$(i$,$\{i,j\})$, and the last $n-x$ are adjacent to remainder$(i$,$\{i,j\})$.
\item{} For the edge-gadget for the edge $e$ with endpoints $u \in V_i$ and $v \in V_j$ with $\iota(u) = x$ and $\iota(v) = y$, the $n$ nodes corresponding to $x$ are split as follows: the first $x$ are adjacent to index$(i$,$\{i,j\})$, and the last $n-x$ are adjacent to remainder$(i$,$\{i,j\})$. 
The $n$ nodes corresponding to $y$ are split as follows: the first $y$ are adjacent to index$(j$,$\{i,j\})$, and the last $n-y$ are adjacent to remainder$(j$,$\{i,j\})$. 
\end{itemize}

As described before, initially, the leaves of the stars and the paths constituting vertex-gadgets are filled with tokens. 

So that tokens can travel from vertex-gadgets to fill the nodes of one edge-gadget for each edge-block using exactly two token slides per token (that is, while checking that the indices match between the vertices and the edges), we set $b = 2n(\kappa-1)\kappa$.
We are now ready to give the formula and proofs specific to each parameter.

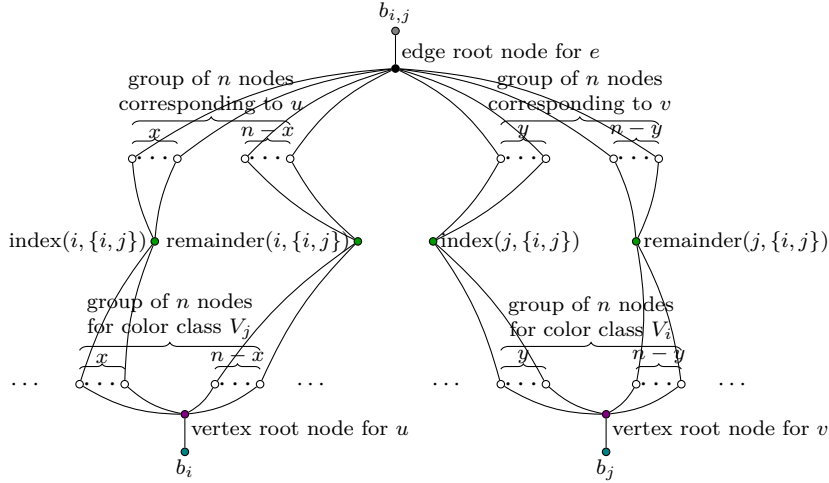
\begin{figure}[h]
    \centering
\begin{tikzpicture}[every node/.style={inner sep=1pt}, scale=1]
  \begin{scope}[yshift=1.5cm]
    \node[circle,draw,fill=black,label={[font=\scriptsize]above right: edge root node for $e$}] (oa) at (0,1.2) {};
    \node[circle,draw,fill=gray,label={[font=\scriptsize]above:$b_{i,j}$}] (bij) at (0,1.7) {};
    \draw (oa)--(bij);
    
    \node[circle,draw] (qa1) at (-3.5,0) {};
    \node at (-3.2,0) {$\cdots$};
    \node[circle,draw] (qa3) at (-2.9,0) {};
    \node[circle,draw] (qa4) at (-2,0) {};
    \node at (-1.7,0) {$\cdots$};
    \node[circle,draw] (qa6) at (-1.4,0) {};
    
    \node[circle,draw] (qa7) at (1.4,0) {};
    \node at (1.7,0) {$\cdots$};
    \node[circle,draw] (qa9) at (2,0) {};
    \node[circle,draw] (qa10) at (2.9,0) {};
    \node at (3.2,0) {$\cdots$};
    \node[circle,draw] (qa12) at (3.5,0) {};
    
    \draw[decorate,decoration={brace,amplitude=2pt}] 
      ($(qa1)+(0,0.2)$) -- ($(qa3)+(0,0.2)$) 
      node[midway,above=1pt,font=\scriptsize] {$x$};
    \draw[decorate,decoration={brace,amplitude=2pt}] 
      ($(qa4)+(0,0.2)$) -- ($(qa6)+(0,0.2)$) 
      node[midway,above=1pt,font=\scriptsize] {$n-x$};
    \draw[decorate,decoration={brace,amplitude=3pt}] 
      ($(qa1)+(0,0.45)$) -- ($(qa6)+(0,0.45)$) 
      node[midway,above=3pt,font=\scriptsize,align=center] {group of $n$ nodes\\corresponding to $u$};
      
    \draw[decorate,decoration={brace,amplitude=2pt}] 
      ($(qa7)+(0,0.2)$) -- ($(qa9)+(0,0.2)$) 
      node[midway,above=1pt,font=\scriptsize] {$y$};
    \draw[decorate,decoration={brace,amplitude=2pt}] 
      ($(qa10)+(0,0.2)$) -- ($(qa12)+(0,0.2)$) 
      node[midway,above=1pt,font=\scriptsize] {$n-y$};
    \draw[decorate,decoration={brace,amplitude=3pt}] 
      ($(qa7)+(0,0.45)$) -- ($(qa12)+(0,0.45)$) 
      node[midway,above=3pt,font=\scriptsize,align=center] {group of $n$ nodes\\corresponding to $v$};
  \end{scope} 
  
  \begin{scope}[yshift=-1.5cm, xshift=-2.8cm]
    \node[circle,draw,fill=violet,label={[font=\scriptsize]below right:vertex root node for $u$}] (pix) at (0,-0.4) {};
    \node[circle,draw,fill=teal,label={[font=\scriptsize]below:$b_i$}] (bi) at (0,-0.9) {};
    \draw (pix)--(bi);
    
    \node[circle,draw] (qb1) at (-1.4,0) {};
    \node at (-1.1,0) {$\cdots$};
    \node[circle,draw] (qb3) at (-0.8,0) {};
    \node[circle,draw] (qb4) at (0.4,0) {};
    \node at (0.7,0) {$\cdots$};
    \node[circle,draw] (qb6) at (1,0) {};
    
    \node[font=\footnotesize] at (-2.1,0) {$\cdots$};
    \node[font=\footnotesize] at (1.7,0) {$\cdots$};
    
    \draw[decorate,decoration={brace,amplitude=2pt}] 
      ($(qb1)+(0,0.2)$) -- ($(qb3)+(0,0.2)$) 
      node[midway,above=1pt,font=\scriptsize] {$x$};
    \draw[decorate,decoration={brace,amplitude=2pt}] 
      ($(qb4)+(0,0.2)$) -- ($(qb6)+(0,0.2)$) 
      node[midway,above=1pt,font=\scriptsize] {$n-x$};
    \draw[decorate,decoration={brace,amplitude=3pt}] 
      ($(qb1)+(0,0.45)$) -- ($(qb6)+(0,0.45)$) 
      node[midway,above=3pt,font=\scriptsize,align=center] {group of $n$ nodes\\for color class $V_j$};
  \end{scope}
  
  \begin{scope}[yshift=-1.5cm, xshift=2.8cm]
    \node[circle,draw,fill=violet,label={[font=\scriptsize]below right:vertex root node for $v$}] (pjy) at (0,-0.4) {};
    \node[circle,draw,fill=teal,label={[font=\scriptsize]below:$b_j$}] (bj) at (0,-0.9) {};
    \draw (pjy)--(bj);
    
    \node[circle,draw] (qc1) at (-1.4,0) {};
    \node at (-1.1,0) {$\cdots$};
    \node[circle,draw] (qc3) at (-0.8,0) {};
    \node[circle,draw] (qc4) at (0.4,0) {};
    \node at (0.7,0) {$\cdots$};
    \node[circle,draw] (qc6) at (1,0) {};
    
    \node[font=\footnotesize] at (-2.1,0) {$\cdots$};
    \node[font=\footnotesize] at (1.7,0) {$\cdots$};
    
    \draw[decorate,decoration={brace,amplitude=2pt}] 
      ($(qc1)+(0,0.2)$) -- ($(qc3)+(0,0.2)$) 
      node[midway,above=1pt,font=\scriptsize] {$y$};
    \draw[decorate,decoration={brace,amplitude=2pt}] 
      ($(qc4)+(0,0.2)$) -- ($(qc6)+(0,0.2)$) 
      node[midway,above=1pt,font=\scriptsize] {$n-y$};
    \draw[decorate,decoration={brace,amplitude=3pt}] 
      ($(qc1)+(0,0.45)$) -- ($(qc6)+(0,0.45)$) 
      node[midway,above=3pt,font=\scriptsize,align=center] {group of $n$ nodes\\for color class $V_i$};
  \end{scope}
  
  \begin{scope}
    \node[circle,draw,label={[font=\scriptsize]left:index$(i,\{i,j\})$}, fill=green!60!black] (sij) at (-3.2,0.4) {};
    \node[circle,draw,label={[font=\scriptsize]left:remainder$(i,\{i,j\})$}, fill=green!60!black] (rij) at (-0.5,0.4) {};
    \node[circle,draw,label={[font=\scriptsize]right:index$(j,\{i,j\})$}, fill=green!60!black] (sji) at (0.5,0.4) {};
    \node[circle,draw,label={[font=\scriptsize]right:remainder$(j,\{i,j\})$}, fill=green!60!black] (rji) at (3.2,0.4) {}; 
    
    \draw[bend right=15] (oa) to (qa1);
    \draw[bend right=15] (oa) to (qa3);
    \draw[bend right=15] (oa) to (qa4);
    \draw[bend right=15] (oa) to (qa6);
    
    \draw[bend left=15] (oa) to (qa7);
    \draw[bend left=15] (oa) to (qa9);
    \draw[bend left=15] (oa) to (qa10);
    \draw[bend left=15] (oa) to (qa12);
    
    \draw[bend left=10] (sij) to (qa1);
    \draw[bend left=10] (sij) to (qa3);
    \draw[bend left=10] (rij) to (qa4);
    \draw[bend left=10] (rij) to (qa6);
    
    \draw[bend right=10] (sji) to (qa7);
    \draw[bend right=10] (sji) to (qa9);
    \draw[bend right=10] (rji) to (qa10);
    \draw[bend right=10] (rji) to (qa12);
    
    \draw[bend left=15] (pix) to (qb1);
    \draw[bend left=15] (pix) to (qb3);
    \draw[bend right=15] (pix) to (qb4);
    \draw[bend right=15] (pix) to (qb6);
    
    \draw[bend left=15] (pjy) to (qc1);
    \draw[bend left=15] (pjy) to (qc3);
    \draw[bend right=15] (pjy) to (qc4);
    \draw[bend right=15] (pjy) to (qc6);
    
    \draw[bend right=10] (sij) to (qb1);
    \draw[bend right=10] (sij) to (qb3);
    \draw[bend right=10] (rij) to (qb4);
    \draw[bend right=10] (rij) to (qb6);
    
    \draw[bend left=10] (sji) to (qc1);
    \draw[bend left=10] (sji) to (qc3);
    \draw[bend left=10] (rji) to (qc4);
    \draw[bend left=10] (rji) to (qc6);
  \end{scope}
\end{tikzpicture}
    \caption{Part of the colored graph $(H,\mathcal{C})$ formed by the reduction corresponding to the modulator to stars number. 
    It shows the star edge-gadget of an edge $e=uv$ where $u \in V_i$ and $v \in V_j$ with $\iota(u)=x$ and $\iota(v)=y$, and relevant parts of the two star vertex-gadgets for $u$ and $v$.
    Additionally, the block nodes $b_i$, $b_j$, and $b_{i,j}$ are shown connecting to their respective root nodes.
    For $u$, the first $x$ nodes of the group corresponding to $V_j$ are adjacent to index$(i$,$\{i,j\})$ and the remaining $n-x$ nodes are adjacent to remainder$(i$,$\{i,j\})$.
    Similarly, for $v$, the first $y$ nodes of the group corresponding to~$V_i$ are adjacent to index$(j$,$\{i,j\})$ and the remaining $n-y$ nodes are adjacent to remainder$(j$,$\{i,j\})$.}
    \label{fig:logic-solution-discovery-modulator-hardness}
\end{figure}

\subsection{Hardness of \FOD\ for Modulator to Stars Number}\label{subsec:logic-solution-discovery-modulator-stars-hardness}
We start with the parameter modulator to stars number. 
More formally, the goal of this section is to prove the following: 

\thmMTS 

The rest of this subsection is devoted to the proof of \Cref{thm:logic-solution-discovery-modulator-stars-hardness}. 
Let $(G, \kappa)$ be an instance of the \textsc{Multicolored Clique} problem.
We consider the graph $(H, \mathcal{C})$ described at the start of \Cref{sec:logic-solution-discovery-modulator-hardness} where the edge- and vertex-blocks of $H$ are star edge-blocks and star vertex-blocks, the set $S$ consisting of all leaves of all star vertex-gadgets, and the budget $b=2n(\kappa-1)\kappa$. 
We complete $((H, \mathcal{C}), S, b)$ with an \FO-formula $\phi$ to create an instance of \FOD. 
We form the \FO-formula $\phi(X)$ to express that the set $X$ satisfies the following conditions:

S1. No colored node contains a token. (That is, the block nodes, the root nodes, and the connector nodes do not contain tokens.) 

S2. For each teal (vertex block) node, there exists exactly one adjacent purple (vertex root) node such that all the uncolored neighbors (leaves) of this purple node do not contain tokens. 
Moreover, all the other purple (vertex root) nodes adjacent to this teal node have tokens on all the uncolored nodes (leaves) in their neighborhoods.
(That is, in a vertex-block, exactly one vertex-gadget has released all the tokens on its leaves, while all other vertex-gadgets retain tokens on all their leaves.)
    
S3. For each gray (edge block) node, there exists exactly one adjacent black (edge root) node such that all the uncolored neighbors (leaves) of this black node contain tokens. 
Moreover, all the other black (edge root) nodes adjacent to this gray node are such that all the uncolored nodes (leaves) in their neighborhoods are token-free.
(That is, in an edge-block, all but one edge-gadget have no tokens on their leaves and one edge-gadget has tokens on all its leaves.)
    
The construction of these expressions in \FO\ is presented in Appendix~\ref{appsubsec:modulators}. 
It is easy to see that this reduction can be performed in polynomial time.
Before diving into the proof, let us reiterate some of the intuition. 
Each vertex-gadget initially contains $n \cdot (\kappa-1)$ tokens.
This corresponds to $n$ tokens for each of the other color classes that the vertex does not belong to. 
To transform the current set of tokens into a solution, we proceed as follows. 
For every $i \in [\kappa]$, we must empty exactly one vertex-gadget in the vertex-block for $i$ ((S1)+(S2)).
We must also fill the leaves of exactly one edge-gadget in each edge-block ((S1)+(S3)).
The transformation works by sliding tokens through green connector nodes. 
To consume at most $2n \cdot \kappa(\kappa-1)$ token slides (equivalently, two token slides per token), we must for each pair $\{i,j\}$, slide $n$ tokens from the selected vertex-gadget in the vertex-block for $i$ to the selected edge-gadget in edge-block for $\{i,j\}$ through adjacent green connector nodes. 
Similarly, we must slide $n$ tokens from the selected vertex-gadget in the vertex-block for $j$ to the same edge-gadget through adjacent green connector nodes. 
This completely fills the $2n$ leaves of that edge-gadget and empties the selected vertex- and edge-gadgets. 

We now present the lemmas necessary to complete the proof and start by showing that the modulator to stars number is a function of $\kappa$.

\begin{lemma}\label{lem:logic-solution-discovery-modulator-hardness-modulator-bounded}
$ms(H) \le 5\binom{\kappa}{2} + \kappa$.    
\end{lemma}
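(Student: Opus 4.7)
The proof is essentially a counting argument once we exhibit an explicit modulator of the claimed size. The natural candidate is the set $X$ consisting of all block nodes together with all connector nodes, i.e.\
\[
X = \{b_i : i \in [\kappa]\} \;\cup\; \{b_{i,j} : 1 \le i < j \le \kappa\} \;\cup\; \bigcup_{1 \le i < j \le \kappa} \bigl\{\mathrm{index}(i,\{i,j\}),\, \mathrm{remainder}(i,\{i,j\}),\, \mathrm{index}(j,\{i,j\}),\, \mathrm{remainder}(j,\{i,j\})\bigr\}.
\]
Then $|X| = \kappa + \binom{\kappa}{2} + 4\binom{\kappa}{2} = \kappa + 5\binom{\kappa}{2}$, which matches the bound exactly.

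What remains is to verify that $H - X$ is a disjoint union of stars. The plan is simply to inspect the construction edge by edge. Every edge of $H$ is incident to at least one node of one of the following four types: a vertex block node $b_i$, an edge block node $b_{i,j}$, a connector node, or a root node of some gadget (vertex root or edge root). Edges of the first three kinds are destroyed by removing $X$, so the only edges surviving in $H - X$ are those incident to a root node and whose other endpoint is a leaf of the same gadget. Thus each star vertex-gadget (resp.\ star edge-gadget) survives in $H - X$ exactly as a star centered at its root, and these stars are pairwise vertex-disjoint since the only nodes that were shared between distinct gadgets (block nodes and connector nodes) have been deleted. Hence $H - X$ is a disjoint union of stars and $X$ is a valid modulator.

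The only step that requires a moment of care is the verification that after deleting $X$ there are no edges between different gadgets; this is immediate from the description of the construction, where the only cross-gadget adjacencies go through block nodes (within a block) or connector nodes (between a vertex-block and an edge-block). Because every such potential bridging node lies in $X$, no such edges remain. Combining the two observations yields $ms(H) \le |X| = \kappa + 5\binom{\kappa}{2}$, as claimed. No part of this argument is technically hard; the main thing to be careful about is simply bookkeeping of the node types to make sure no category of adjacency is overlooked.
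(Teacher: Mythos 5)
Your proposal is correct and matches the paper's proof: the paper takes exactly the same modulator (all vertex block nodes, edge block nodes, and the four connector nodes per pair, totalling $\kappa + 5\binom{\kappa}{2}$) and observes that its removal leaves only the star gadgets. Your additional edge-by-edge verification is just a more explicit writeup of the same one-line argument.
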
 

\begin{proof}
Let $M$ be the set containing:
\begin{itemize}
    \item for each unordered pair of color classes $V_i$ and $V_j$, the green connector nodes $\text{index}(i,\{i,j\})$, $\text{remainder}(i,\{i,j\})$, $\text{index}(j,\{i,j\})$, $\text{remainder}(j,\{i,j\})$, and the gray edge block node $b_{i,j}$;
    \item for each color class $V_i$, the teal vertex block node $b_i$.
\end{itemize}
Then $G - M$ is a disjoint union of stars, since removing $M$ from $G$ leaves only the vertex-gadgets and edge-gadgets, each of which forms a star.
\end{proof}

We then prove correctness via the following lemmas.

\begin{lemma}\label{lem:logic-solution-discovery-modulator-stars-hardness-forward}
If $(G, \kappa)$ is a yes-instance of \textsc{Multicolored Clique}, then $((H,\mathcal{C}), S, b, \phi)$ is a yes-instance of \FOD.
\end{lemma}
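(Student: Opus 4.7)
The plan is to start from a multicolored clique $K = \{v_1, \ldots, v_\kappa\}$ with $v_i \in V_i$ (which exists by assumption) and use it as a blueprint both for the final configuration $T$ and for the transformation sequence from $S$ to $T$. For each $i \in [\kappa]$, designate the star vertex-gadget for $v_i$ as the one to be emptied; for each unordered pair $\{i,j\}$, designate the star edge-gadget for the clique edge $v_iv_j$ (which exists in $G$ because $K$ is a clique) as the one to be filled. The target set $T$ will consist of, for each $i\in[\kappa]$, the $n(\kappa-1)$ leaves of every vertex-gadget in the vertex-block for $i$ except those of the gadget for $v_i$, together with, for each pair $\{i,j\}$, the $2n$ leaves of the edge-gadget for $v_iv_j$.

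The transformation sequence is built pair by pair. For a fixed pair $\{i,j\}$, let $x = \iota(v_i)$ and $y = \iota(v_j)$. In the vertex-gadget for $v_i$, the group of $n$ leaves associated to color class $V_j$ splits into $x$ leaves adjacent to $\text{index}(i,\{i,j\})$ and $n-x$ leaves adjacent to $\text{remainder}(i,\{i,j\})$; in the edge-gadget for $v_iv_j$, the group of $n$ leaves associated to $v_i$ splits analogously into $x$ and $n-x$ leaves adjacent to exactly the same two connector nodes. So each of the $x$ tokens in the first subgroup of the vertex-gadget can reach its chosen destination in the edge-gadget via $\text{index}(i,\{i,j\})$ in two token slides, and each of the remaining $n-x$ tokens reaches its destination via $\text{remainder}(i,\{i,j\})$ in two token slides. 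The symmetric argument, using indices from $V_j$ and the connectors $\text{index}(j,\{i,j\})$ and $\text{remainder}(j,\{i,j\})$, routes $n$ further tokens from the vertex-gadget for $v_j$ into the $v_j$-side of the edge-gadget for $v_iv_j$. Altogether, pair $\{i,j\}$ consumes $4n$ token slides and simultaneously removes the $n$ designated tokens from each of the two vertex-gadgets while filling all $2n$ leaves of the edge-gadget; performing this over all $\binom{\kappa}{2}$ pairs uses $4n\binom{\kappa}{2} = 2n\kappa(\kappa-1) = b$ slides, so the budget is met exactly.

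It remains to verify that the resulting $T$ satisfies $\phi$. Colored nodes (block, root, connector) were never targeted, so S1 holds. In the vertex-block for $i$, the vertex-gadget for $v_i$ loses, across the $\kappa-1$ pairs $\{i,j\}$, precisely its $n$ leaves of each of the $\kappa-1$ color-class groups, i.e.\ all of its leaves, while every other vertex-gadget in that block never had any token moved off of it, giving S2. In the edge-block for $\{i,j\}$, exactly the edge-gadget for $v_iv_j$ receives the $2n$ tokens and every other edge-gadget remains empty of tokens on its leaves, giving S3.

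The routine but somewhat delicate part will be making the ``two slides per token'' routing explicit and showing it can be serialized without two tokens ever needing to occupy the same leaf at the same moment: one slides the tokens out of the vertex-gadgets onto the connector nodes first (the discovery sliding model allows multiple tokens on a vertex during the transformation) and then slides them onward to distinct, previously unoccupied leaves of the edge-gadget. The index-matching across each pair $\{i,j\}$ is the core combinatorial point and is exactly what makes $2n$ slides per pair suffice, mirroring the ``lock-and-key'' intuition described in the construction.
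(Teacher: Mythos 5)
Your proposal is correct and follows essentially the same argument as the paper: use the clique to select one vertex-gadget per block and one edge-gadget per pair, route each of the $2n$ tokens per pair through its uniquely matching connector node in exactly two slides, and check S1--S3 with the budget count $4n\binom{\kappa}{2}=b$. The serialization remark is a harmless addition; the paper handles this implicitly via the discovery sliding model.
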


\begin{proof}
Let $K \subseteq V(G)$ be a multicolored clique in $G$ of size $\kappa$. 
Let us now provide a transformation from the initial set of tokens $S$ to a valid solution in $b$ token sliding steps.
We repeat the following procedure for each unordered pair of color classes $V_i$ and $V_j$.

For $V_i$ and $V_j$, let $\iota(u) = x \in [n]$ (index of $u$ is $x$) be such that $u$ is the vertex in $K \cap V_i$ and $\iota(v) = y \in [n]$ (index of $v$ is $y$) be such that $v$ is the vertex in $K \cap V_j$. 
We know that the edge $uv$ exists since $K$ is a clique. 
We slide tokens as follows.
\begin{itemize}
    \item From the vertex-gadget for $u$ in the vertex-block for $i$: 
    \begin{itemize}
        \item we slide $x$ tokens from the first $x$ leaves in the group corresponding to color class $V_j$ to $\text{index}(i,\{i,j\})$ and then to the first $x$ leaves of the edge-gadget for $uv$ in the edge-block for $\{i,j\}$,
        \item we slide $n - x$ tokens from the remaining $n-x$ leaves in the same group to $\text{remainder}(i,\{i,j\})$ and then to the next $n-x$ leaves of the edge-gadget for $uv$.
    \end{itemize} 
    \item From the vertex-gadget for $v$ in vertex-block for $j$:
    \begin{itemize}
        \item we slide $y$ tokens from the first $y$ leaves in the group corresponding to color class $V_i$ to $\text{index}(j,\{i,j\})$ and then to the next $y$ leaves of the edge-gadget for $uv$ in the edge-block for $\{i,j\}$;
        \item we slide $n - y$ tokens from the remaining $n-y$ leaves in the same group to $\text{remainder}(j,\{i,j\})$ and then to the final $n-y$ leaves of the edge-gadget for $uv$.
    \end{itemize}
\end{itemize}

Let $T$ be the final configuration. 
Then it is easy to see that $T$ satisfies conditions (S1), (S2), and (S3), since in each vertex-block, exactly one vertex-gadget loses all its tokens ($n(\kappa-1)$ tokens in total), in each edge-block exactly one edge-gadget gains tokens on all its leaves ($2n$ tokens), all tokens are on star leaves, and the total number of token slides used during the transformation is $2n(\kappa-1)\kappa$ (for each $i < j \in [\kappa]$, $2n$ tokens slide exactly twice).  
Hence, $((H,\mathcal{C}), S, b, \phi)$ is a yes-instance of \FOD.
\end{proof}

To prove the backward direction, we will need the following lemma.

\begin{lemma}\label{lem:logic-solution-discovery-modulator-stars-hardness-implications-of-conditions}
If $((H,\mathcal{C}), S, b, \phi)$ is a yes-instance of \FOD, then 
\begin{enumerate}
    \item[i.] each token that moves is moved exactly two token sliding steps and passes through a green connector node,
    \item[ii.] for each edge-block, the solution $T$ contains $2n$ leaves of exactly one edge-gadget in the edge-block, and
    \item[iii.] exactly $n(\kappa-1)$ tokens slide from exactly one vertex-gadget in each vertex-block.
\end{enumerate}
\end{lemma}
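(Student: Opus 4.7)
My plan is to address the three claims together: (ii) and a weak form of (iii) will follow directly from unpacking $\phi$ on the final configuration $T$, and then the full statements of both (iii) and (i) will follow simultaneously from a tight transportation bound on the number of slides.

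For (ii), condition (S1) ensures that no token in $T$ sits on a colored node, so every token in $T$ lies on a leaf of some vertex- or edge-gadget; condition (S3) then states that, inside the star edge-block for $\{i,j\}$, exactly one black edge root has all $2n$ of its uncolored leaves occupied and every other edge root in the block has none, which is exactly (ii). For the weak form of (iii), condition (S2) gives that in the vertex-block for $i$ exactly one purple vertex root has all $n(\kappa-1)$ of its uncolored leaves empty in $T$, and since all those leaves carried tokens in $S$ while (S1) forbids tokens on the purple root itself, at least $n(\kappa-1)$ tokens must have slid off that selected vertex-gadget. Combining (ii) with this weak (iii) and $|S|=|T|$ yields $|S \setminus T| = |T \setminus S| = n(\kappa-1)\kappa$, since $2n\cdot\binom{\kappa}{2} = n(\kappa-1)\kappa$.

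For (i), together with the remaining part of (iii) that no other vertex-gadget in the block loses tokens, I would use a distance lower bound. In $H$ every vertex-gadget leaf is at graph distance at least~$2$ from every edge-gadget leaf: the only non-connector neighbor of a vertex-gadget leaf is its purple vertex root, whose other neighbors all lie inside the same vertex-gadget or are its teal block node, none of which is adjacent to any edge-gadget leaf, and an entirely symmetric argument applies on the edge-gadget side. Hence every length-$2$ path between a vertex-gadget leaf and an edge-gadget leaf passes through a green connector. Because the discovery token sliding model allows several tokens to share a vertex in intermediate steps, each token can be routed along its own shortest path independently of the others, so the minimum number of slides transforming $S$ into $T$ equals the minimum-cost matching between $S$ and $T$ weighted by graph distance. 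Every pair in $(S \setminus T) \times (T \setminus S)$ contributes at least~$2$ to such a matching, so this minimum is at least $2 n(\kappa-1)\kappa = b$. Since the given transformation uses at most $b$ slides, equality must hold, forcing the tokens of $S \cap T$ (the leaves of the non-selected vertex-gadgets) to stay put and every token in $S \setminus T$ to traverse exactly two slides along a length-$2$ path through a green connector, giving both (i) and the missing part of (iii).

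The main delicacy I anticipate is the identification of the minimum slide count with the minimum-cost transportation plan between $S$ and $T$. The direction ``slides $\geq$ matching cost'' follows by labelling the tokens of $S$ arbitrarily, tracking each one through the sequence, and observing that this produces a bijection $S \to T$ whose total pairwise graph distance is at most the number of slides. The reverse direction uses intermediate multi-occupancy in the discovery model: a minimum-cost matching can be realized by executing its pairs one at a time along shortest paths, never causing one token to block another.
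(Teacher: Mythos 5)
Your proposal is correct and takes essentially the same route as the paper: items (ii) and (iii) are read off from conditions (S2)/(S3) via (S1), and item (i) comes from the tight budget count that the $2n\binom{\kappa}{2}$ tokens which must reach edge-gadget leaves each need at least two slides, with the green connectors being the only possible intermediate nodes, so equality with $b=2n(\kappa-1)\kappa$ forces exactly two slides per moved token. The one point to tighten is your matching lower bound, since a minimum-cost bijection $S\to T$ may involve chains through $S\cap T$ rather than pairing $S\setminus T$ directly with $T\setminus S$; this is repaired immediately by noting that (by (S1)) every token starts and ends on an uncolored leaf, no two leaves are adjacent, and at least $|T\setminus S|=n(\kappa-1)\kappa$ tokens must move, which gives the same bound of $2n(\kappa-1)\kappa$ and then the same equality analysis.
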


\begin{proof}
Since $T$ is a solution and satisfies Condition S3, each edge-block must contain $2n$ tokens on the leaves of exactly one edge-gadget, which proves (ii).
Since the leaves of each edge-gadget are not adjacent to any node in $S$, each token that ends up on a leaf of this edge-gadget requires at least two token slides. 
So, at least $4n$ token slides are needed for each edge-block.
Since $b=2n \kappa (\kappa-1)= 4n\binom{\kappa}{2}$ and the above argument holds for all $\binom{\kappa}{2}$ edge-blocks, if there is a solution, every token that moves slides exactly twice. 
Since green connector nodes are the only nodes adjacent to both leaves of vertex-gadgets and leaves of edge-gadgets, the intermediate positions of all moved tokens must be green connector nodes, which proves (i).
From Condition~S2, for each vertex-block, exactly one vertex-gadget releases all its tokens (which are $n(\kappa-1)$ tokens), which proves (iii).
\end{proof}

\begin{lemma}\label{lem:logic-solution-discovery-modulator-stars-hardness-backward}
If $((H,\mathcal{C}), S, b, \phi)$ is a yes-instance of \FOD, then $(G, \kappa)$ is a yes-instance of \textsc{Multicolored Clique}. 
\end{lemma}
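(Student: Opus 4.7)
The plan is to extract a multicolored clique of size $\kappa$ from the transformation witnessing the yes-instance. By \Cref{lem:logic-solution-discovery-modulator-stars-hardness-implications-of-conditions}(iii), for each color class $V_i$ there is a unique vertex $w_i^\star \in V_i$ whose vertex-gadget releases all its tokens; the goal is to show that $\{w_1^\star,\dots,w_\kappa^\star\}$ is a clique in $G$, which, by \Cref{lem:logic-solution-discovery-modulator-stars-hardness-implications-of-conditions}(ii), amounts to showing, for every pair $\{i,j\}$, that the unique edge-gadget in edge-block $\{i,j\}$ that is filled in $T$ represents exactly the edge $w_i^\star w_j^\star$.

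First I would establish that the transformation is maximally efficient. Condition~S3 requires $n\kappa(\kappa-1)$ new tokens to appear on edge-gadget leaves, none of which lie in $S$ and all of which are at distance at least $2$ from any vertex-gadget leaf; hence at least $2n\kappa(\kappa-1) = b$ slides are needed. Equality forces exactly $n\kappa(\kappa-1)$ tokens to move, each sliding exactly twice through a green connector (\Cref{lem:logic-solution-discovery-modulator-stars-hardness-implications-of-conditions}(i)). Since Condition~S2 requires every leaf of a non-selected vertex-gadget to hold a token in $T$, and since every moving token must terminate on an edge-gadget leaf (the total token count leaves no room elsewhere), no moving token can start from a non-selected vertex-gadget. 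Consequently, all $n\kappa(\kappa-1)$ moving tokens originate in the $\kappa$ selected vertex-gadgets, which jointly contain exactly $n(\kappa-1)\cdot\kappa$ leaves, and terminate on leaves of the $\binom{\kappa}{2}$ selected edge-gadgets.

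The crucial step is the flow accounting at each pair of connectors. Fix $\{i,j\}$ and let the selected edge-gadget represent $u_{ij}v_{ij}$ with $u_{ij}\in V_i$, $v_{ij}\in V_j$. The $n$ leaves of the group for $V_j$ in the vertex-gadget for $w_i^\star$ split into $\iota(w_i^\star)$ leaves adjacent to $\mathrm{index}(i,\{i,j\})$ and $n-\iota(w_i^\star)$ leaves adjacent to $\mathrm{remainder}(i,\{i,j\})$; by the two-slide constraint their tokens must exit through exactly these two connectors. Each such token must then land on a leaf of the selected edge-gadget of edge-block $\{i,j\}$ (Condition~S3 rules out other edge-gadgets in the block, and the previous paragraph rules out vertex-gadget leaves), and that edge-gadget demands $\iota(u_{ij})$ tokens at $\mathrm{index}(i,\{i,j\})$ and $n-\iota(u_{ij})$ at $\mathrm{remainder}(i,\{i,j\})$. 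Balancing supply and demand at the two connectors forces $\iota(w_i^\star) = \iota(u_{ij})$, and since indices within $V_i$ are unique, $w_i^\star = u_{ij}$; symmetrically $w_j^\star = v_{ij}$. Hence the edge $w_i^\star w_j^\star$ exists in $G$, and ranging over all pairs yields the desired multicolored clique.

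The main obstacle is making the flow accounting fully rigorous: one must rule out the possibility that tokens ``spill'' from a connector into a vertex-gadget leaf or into a non-selected edge-gadget leaf, and confirm that distinct $\{i,j\}$-connectors never share traffic (which they do not, since each connector is indexed by a single pair $\{i,j\}$). Both restrictions are forced by the saturation of the budget together with Conditions~S1--S3, but the bookkeeping deserves a careful case analysis in the spirit of \Cref{lem:logic-solution-discovery-modulator-stars-hardness-implications-of-conditions}.
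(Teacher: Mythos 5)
Your proposal is correct and follows essentially the same route as the paper: it invokes the auxiliary lemma on the implications of conditions (S1)--(S3), uses budget saturation to force exactly two slides per moving token through the unique adjacent connectors, and then matches the index/remainder counts at the connector pair for each $\{i,j\}$ to conclude that the emptied vertex-gadgets and the filled edge-gadget encode the same endpoints. The paper's own proof performs exactly this supply--demand matching ($x=x'$, $y=y'$) at the four connectors, so no substantive difference remains.
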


\begin{proof}
Let $T$ be a solution to $((H,\mathcal{C}), S, b, \phi)$. 
From (S1) no block, root, or connector node contains a token.
From \Cref{lem:logic-solution-discovery-modulator-stars-hardness-implications-of-conditions}, for each edge-block, exactly $2n$ leaves of one edge-gadget must contain tokens in the final configuration $T$. 
Each of these tokens slides through one of the green connector nodes and uses exactly two token slides.
So each of the $2n$ tokens that end up in the edge-block for $\{i,j\}$ must come from one of the green connector nodes $\text{index}(i,\{i,j\})$, $\text{remainder}(i,\{i,j\})$, $\text{index}(j,\{i,j\})$, or $\text{remainder}(j,\{i,j\})$ after leaving one of the vertex-blocks for $i$ or $j$ (since these green connectors are not adjacent to leaves from other vertex-blocks). 

By \Cref{lem:logic-solution-discovery-modulator-stars-hardness-implications-of-conditions}, there exist vertices $u \in V_i$ and $v \in V_j$ such that the vertex-gadget for $u$ and the vertex-gadget for $v$ are completely emptied of tokens (and all other vertex-gadgets in the vertex-blocks for $i$ or $j$ retain all their tokens).
Similarly, there exists an edge $u'v'$ with $u' \in V_i$ and $v' \in V_j$ such that all leaves of the edge-gadget for~$e$ are in $T$ (and no other edge-gadget in the edge-block for $\{i,j\}$ gets a token). 

We show that $u = u'$ and $v = v'$.
Let $\iota(u) = x$ and $\iota(v) = y$. 
Recall from the construction that the vertex-gadget for $u$ has exactly $n$ leaves in the group corresponding to color class $V_j$, where the first $x$ leaves are adjacent only to $\text{index}(i,\{i,j\})$ among all green connector nodes, and the remaining $n-x$ leaves are adjacent only to $\text{remainder}(i,\{i,j\})$ among all green connector nodes.
Similarly, the vertex-gadget for $v$ has exactly $n$ leaves in the group corresponding to color class~$V_i$, where the first $y$ leaves are adjacent only to $\text{index}(j,\{i,j\})$ among all green connector nodes, and the remaining $n-y$ leaves are adjacent only to $\text{remainder}(j,\{i,j\})$ among all green connector nodes.

Since $2n$ tokens must reach the edge-gadget for $u'v'$ through the green connector nodes from the vertex-gadgets for $u$ and $v$ using exactly two token slides each, each token from these vertex-gadgets moves to its unique adjacent green connector node during the first token slide. 
Thus, during the transformation, $x$ tokens from the vertex-gadget for $u$ slide to $\text{index}(i,\{i,j\})$ and $n - x$ tokens slide to $\text{remainder}(i,\{i,j\})$. 
Similarly, $y$ tokens from the vertex-gadget for $v$ slide to $\text{index}(j,\{i,j\})$ and $n - y$ tokens slide to $\text{remainder}(j,\{i,j\})$.

All these tokens must next slide to the same edge-gadget for $u'v'$. 
Let $\iota(u') = x'$ and $\iota(v') = y'$. 
The edge-gadget for $u'v'$ has $x'$ leaves adjacent to $\text{index}(i,\{i,j\})$, $n-x'$ leaves adjacent to $\text{remainder}(i,\{i,j\})$, $y'$ leaves adjacent to $\text{index}(j,\{i,j\})$, and $n-y'$ leaves adjacent to $\text{remainder}(j,\{i,j\})$.
For all $2n$ tokens to successfully slide from the green connector nodes to fill all leaves of this edge-gadget using only one additional token slide per token, we must have $x = x'$ and $y = y'$.
Thus $u = u'$ and $v = v'$, which means the edge $uv$ exists in $G$.

This equality holds for every unordered pair of color classes. 
Thus, for each $i < j \in [\kappa]$, the vertices whose vertex-gadgets are emptied in the vertex-block for $i$ and the vertex-block for $j$ are adjacent in $G$, which establishes a multicolored clique in $(G, \kappa)$.
\end{proof}

\subsection{Hardness of \FOD\ for Modulator to Paths Number}\label{subsec:logic-solution-discovery-modulator-paths-hardness}
We now present the hardness for the parameter modulator to paths number. 
More formally, the goal of this section is to prove the following: 

\thmMTP

The rest of this subsection is devoted to the proof of \Cref{thm:logic-solution-discovery-modulator-paths-hardness}.
Let $(G, \kappa)$ be an instance of the \textsc{Multicolored Clique} problem.
We consider the graph $(H, \mathcal{C})$ described at the beginning of \Cref{sec:logic-solution-discovery-modulator-hardness} where the vertex-blocks and edge-blocks of $H$ are path vertex-blocks and path edge-blocks, the set $S$ consisting of all nodes of all path vertex-gadgets, and the budget $b=2n(\kappa-1)\kappa$. 
We complete $((H, \mathcal{C}), S, b)$ with an \FO-formula $\phi$ to create an instance of \FOD. 
We form the \FO-formula~$\phi(X)$ to express that the set $X$ satisfies the following conditions:

P1. No colored node contains a token. (That is, the block nodes and the connector nodes do not contain tokens.)

P2. For each teal (vertex block) node, there exists at least one adjacent uncolored node (in a path vertex-gadget) that is token-free.

P3. For each gray (edge block) node, there exists at least one adjacent uncolored node (in a path edge-gadget) that contains a token.

P4. Every uncolored node $v$ (in a gadget) contains a token if and only if all uncolored nodes (in a gadget) at distance at most two from $v$ in $H[V \setminus (C_2 \cup C_4 \cup C_5)]$ also contain tokens. (That is, a node of a gadget contains a token if and only if the gadget nodes at distance two from it in the subgraph excluding all block and connector nodes contain tokens.)

The construction of these expressions in \FO\ is presented in Appendix~\ref{appsubsec:modulators}.
It is easy to see that this reduction can be performed in polynomial time.

Before proving the theorem, let us explain some of the intuition behind the conditions.
Condition P4 ensures that in each path gadget, either all nodes contain tokens or all nodes are token-free. Combined with conditions (P2) and (P3), this means that in each vertex-block, exactly one path vertex-gadget becomes token-free, while in each edge-block, exactly one path edge-gadget becomes filled with tokens.

We first show that the modulator to paths number is bounded.

\pagebreak
\begin{lemma}\label{lem:logic-solution-discovery-modulator-paths-hardness-modulator-bounded}
$mp(H) \le 5\binom{\kappa}{2} + \kappa$.
\end{lemma}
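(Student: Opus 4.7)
The plan is to mirror the argument of \Cref{lem:logic-solution-discovery-modulator-hardness-modulator-bounded}, exhibiting an explicit modulator of the claimed size and verifying that its removal leaves a disjoint union of paths. Concretely, I would take $M$ to consist of:
\begin{itemize}
    \item for each unordered pair $\{i,j\}$ with $1 \le i < j \le \kappa$, the four green connector nodes $\text{index}(i,\{i,j\})$, $\text{remainder}(i,\{i,j\})$, $\text{index}(j,\{i,j\})$, $\text{remainder}(j,\{i,j\})$, together with the gray edge block node $b_{i,j}$, contributing $5\binom{\kappa}{2}$ nodes;
    \item for each color class $V_i$, the teal vertex block node $b_i$, contributing $\kappa$ nodes.
\end{itemize}
Summing the two contributions yields $|M| \le 5\binom{\kappa}{2} + \kappa$.

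Next I would verify that $H - M$ is a disjoint union of paths by inspecting the adjacencies in the construction. Every non-modulator node of $H$ lies in some path vertex-gadget or path edge-gadget, and its neighbours in $H$ are precisely (i) its at most two neighbours inside its own gadget path, (ii) the corresponding block node $b_i$ or $b_{i,j}$, and (iii) at most one green connector node from its block. Items (ii) and (iii) are exactly the nodes placed in $M$, so deleting $M$ removes every edge connecting a gadget to a block node or a connector node, as well as all edges between different gadgets (since no two distinct gadgets share any non-modulator neighbour in $H$). The only edges that survive in $H - M$ are therefore the internal edges of each path gadget, and each such gadget induces a path by construction.

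Since distinct gadgets become vertex-disjoint after deleting $M$ and each gadget remains a path, $H - M$ is a disjoint union of paths, which proves the claimed bound on $mp(H)$.
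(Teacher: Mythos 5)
Your proposal is correct and takes essentially the same approach as the paper: it uses exactly the same modulator $M$ (the five connector/block nodes per pair $\{i,j\}$ plus the $\kappa$ vertex block nodes) and verifies that its removal leaves only the internal edges of the path gadgets. The extra adjacency check you spell out is just a more detailed version of the paper's one-line justification.
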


\begin{proof}
Let $M$ be the set containing:
\begin{itemize}
    \item for each unordered pair of color classes $V_i$ and $V_j$, the green connector nodes $\text{index}(i,\{i,j\})$, $\text{remainder}(i,\{i,j\})$, $\text{index}(j,\{i,j\})$, $\text{remainder}(j,\{i,j\})$, and the gray edge block node $b_{i,j}$;
    \item for each color class $V_i$, the teal vertex block node $b_i$.
\end{itemize}
Then $G - M$ is a disjoint union of paths, since removing $M$ from $G$ leaves only the vertex-gadgets and edge-gadgets, each of which forms a path.
\end{proof}

\begin{lemma}\label{lem:logic-solution-discovery-modulator-paths-hardness-forward}
If $(G, \kappa)$ is a yes-instance of \textsc{Multicolored Clique}, then $((H,\mathcal{C}), S, b, \phi)$ is a yes-instance of \FOD.
\end{lemma}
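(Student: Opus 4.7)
The plan is to mimic the construction in \Cref{lem:logic-solution-discovery-modulator-stars-hardness-forward}, which handled the stars case, adapting it to the path gadgets and to the stronger formula condition P4. Let $K=\{v_1,\ldots,v_\kappa\}$ with $v_i\in V_i$ be a multicolored clique in $G$. For each pair $i<j$, write $u=v_i$, $v=v_j$ with $\iota(u)=x$, $\iota(v)=y$, and note that $uv\in E_{i,j}$ since $K$ is a clique.

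For each such pair $\{i,j\}$, I would slide the $n$ tokens located on the group of the path vertex-gadget for $u$ corresponding to color class $V_j$ as follows: the first $x$ of these tokens slide to $\text{index}(i,\{i,j\})$ and then to the first $x$ nodes of the first group of the path edge-gadget for $uv$, while the remaining $n-x$ slide to $\text{remainder}(i,\{i,j\})$ and then onto the remaining $n-x$ nodes of that first group. Symmetrically, the $n$ tokens of the path vertex-gadget for $v$ in the group corresponding to $V_i$ are routed via $\text{index}(j,\{i,j\})$ and $\text{remainder}(j,\{i,j\})$ into the last group of the same edge-gadget. This uses two slides per token, for a total of $2\cdot 2n\cdot\binom{\kappa}{2}=2n(\kappa-1)\kappa=b$ slides, and it relies on the ``lock-and-key'' adjacencies: the split of the groups of size $n$ according to $\iota(u)$ and $\iota(v)$ matches exactly the split of the target edge-gadget groups, which is why the second slide is possible.

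Call $T$ the resulting configuration. I then have to verify that $H\models\phi(T)$. Condition P1 holds because at the end no token sits on any block node or connector node (the connectors are only transit points). For P2, in the vertex-block for $i$ the single vertex-gadget for $v_i$ is now empty while all other vertex-gadgets in that block are untouched, so each $b_i$ has an empty adjacent uncolored node. For P3, in the edge-block for $\{i,j\}$ the edge-gadget for $v_iv_j$ now carries $2n$ tokens while all other edge-gadgets remain empty, so $b_{i,j}$ has a filled adjacent uncolored node.

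The main thing to check, and the only potentially delicate point, is P4. The graph $H[V\setminus (C_2\cup C_4\cup C_5)]$ is the disjoint union of the uncolored parts of the gadgets, which are paths (each path vertex-gadget is a path on $n(\kappa-1)$ nodes and each path edge-gadget is a path on $2n$ nodes). On a path, the bi-conditional in P4 forces, by the short argument that a node without a token implies its neighbor has no token (since that neighbor would otherwise require its own distance-$2$ neighborhood, including our empty node, to be full), the tokens in each connected component to be all-present or all-absent. In our final $T$, every path vertex-gadget is either completely emptied (exactly $\kappa$ of them, one per vertex-block) or left completely full, and every path edge-gadget is either completely filled (exactly $\binom{\kappa}{2}$ of them, one per edge-block) or left completely empty, so P4 is satisfied. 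Hence $((H,\Cc),S,b,\phi)$ is a yes-instance.
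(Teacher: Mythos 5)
Your proposal is correct and follows essentially the same route as the paper: reuse the star-case transformation verbatim (the vertex-to-connector-to-edge-gadget adjacencies are identical in both constructions), count $2n\kappa(\kappa-1)=b$ slides, and check P1--P4, with P4 holding because each path gadget ends up entirely full or entirely empty. Your explicit all-or-nothing argument for P4 on paths is a welcome elaboration of what the paper only asserts, but it is not a different approach.
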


\begin{proof}
The proof follows the same transformation as in \Cref{lem:logic-solution-discovery-modulator-stars-hardness-forward}. 
The token movements use only edges between the gadgets nodes and the green connector nodes, which are identical in both constructions. 
The final configuration $T$ satisfies conditions (P1)-(P4): no colored nodes contain tokens, each vertex-block has one empty path vertex-gadget, each edge-block has one full path edge-gadget, and Condition P4 is satisfied since tokens fill or empty entire paths.
\end{proof}

To prove the backward direction, we need the following lemma.

\begin{lemma}\label{lem:logic-solution-discovery-modulator-paths-hardness-implications-of-conditions}
If $((H,\mathcal{C}), S, b, \phi)$ is a yes-instance of \FOD, then 
\begin{itemize}
    \item[i.] each token that moves is moved exactly two token sliding steps and passes through a green connector node,
    \item[ii.] for each edge-block, the solution $T$ contains tokens on all $2n$ nodes of exactly one edge-gadget in the edge-block, and
    \item[iii.] exactly $n(\kappa-1)$ tokens slide from exactly one vertex-gadget in each vertex-block.
\end{itemize}
\end{lemma}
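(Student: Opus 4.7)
The plan is to mimic the strategy of \Cref{lem:logic-solution-discovery-modulator-stars-hardness-implications-of-conditions}: first establish a structural ``all-or-nothing'' property of every gadget in the final configuration $T$ from Condition~P4, then combine it with P2, P3, token-count conservation, and a shortest-path/budget-tightness argument to pin down $V_e$ and $E_f$ exactly.

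The first step is to interpret Condition~P4. Let $H'$ denote $H[V(H) \setminus (C_2 \cup C_4 \cup C_5)]$; removing all block nodes and connector nodes turns each path vertex-gadget and each path edge-gadget into an isolated path in $H'$. Condition~P4 states that a gadget node $v$ contains a token if and only if every node at distance at most two from~$v$ in $H'$ also contains one. Applying the forward direction iteratively along such a path shows that a single token in a gadget forces the whole connected component (the gadget) to be full; hence every gadget in~$T$ is either entirely full or entirely empty. Combined with P2, every vertex-block contains at least one emptied vertex-gadget, so the total number $V_e$ of emptied vertex-gadgets satisfies $V_e \ge \kappa$. Symmetrically, P3 yields $E_f \ge \binom{\kappa}{2}$, where $E_f$ is the total number of filled edge-gadgets. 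Since P1 forbids tokens on colored nodes, every token of $T$ lies inside a gadget, and token conservation gives $V_e (\kappa-1) = 2 E_f$.

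The second step is a distance and budget-tightness argument. A short case analysis of the adjacencies in~$H$ shows that any vertex-gadget node has neighbors only among its gadget-mates, its teal vertex block node, and some green connectors; edge-gadget nodes behave symmetrically. Consequently, the distance in~$H$ between any vertex-gadget node and any edge-gadget node is at least two, and every length-two path between them passes through a green connector. In the discovery token sliding model, tokens can be routed independently (multiple tokens may coexist on a vertex mid-transformation), so the minimum number of slides from $S$ to $T$ equals the minimum-weight perfect matching between $S \setminus T$ and $T \setminus S$ under shortest-path distances in $H$. The set $S \setminus T$ consists of the $V_e \cdot n(\kappa-1)$ nodes of the emptied vertex-gadgets, and $T \setminus S$ consists of the $E_f \cdot 2n$ nodes of the filled edge-gadgets, giving a lower bound of $2 V_e n (\kappa-1)$ on the number of slides. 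Together with $b = 2n\kappa(\kappa-1)$, this forces $V_e \le \kappa$, hence $V_e = \kappa$ and $E_f = \binom{\kappa}{2}$, with exactly one emptied vertex-gadget per vertex-block and exactly one filled edge-gadget per edge-block; this establishes claims (ii) and (iii). Because the budget is met with equality, every moved token must travel along a shortest path of length exactly two and hence through a green connector, establishing (i).

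The main subtlety is the careful reading of Condition~P4: the bi-implication is used in its forward direction, and it is crucial that the ``distance at most two'' is measured inside the restricted subgraph $H'$, so that the propagation stays confined to a single gadget rather than leaking to other gadgets through the block or connector nodes. Once this observation is in place, the rest of the argument closely parallels \Cref{lem:logic-solution-discovery-modulator-stars-hardness-implications-of-conditions}, with path gadgets playing the role previously played by the stars.
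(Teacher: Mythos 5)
Your proposal is correct and follows essentially the same route as the paper's proof: Condition~P4 gives the all-or-nothing property of each path gadget, P2/P3 give at least one emptied vertex-gadget and one filled edge-gadget per block, the fact that edge-gadget nodes are at distance at least two from $S$ (with green connectors as the only possible intermediate nodes) yields the $2$-slides-per-token lower bound, and tightness against $b=2n\kappa(\kappa-1)$ forces exactness and claims (i)--(iii). The only cosmetic difference is that you argue via a global token-conservation count and a matching formulation, whereas the paper counts $4n$ slides per edge-block directly; the substance is the same.
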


\begin{proof}
By Condition P4, one node in a path contains a token if and only if all nodes in that path contain tokens.
From conditions (P3) and (P4), for each edge-block, at least one path edge-gadget must be completely filled with tokens.
Thus, exactly $2n$ tokens must be placed on one path edge-gadget.

Since the nodes of edge-gadgets are not adjacent to any nodes in $S$, each token that ends up on an edge-gadget requires at least two token slides. 
So, at least $4n$ token slides are needed for each edge-block.
Since $b = 2n(\kappa-1)\kappa = 4n\binom{\kappa}{2}$ and $4n$ token slides are needed for all $\binom{\kappa}{2}$ edge-blocks, every token that moves slides exactly twice and every edge-block gets to fill exactly one of its path edge-gadgets, which proves (ii). 
Since green connector nodes are the only nodes adjacent to both nodes of vertex-gadgets and nodes of edge-gadgets, all moved tokens must pass through green connector nodes, which proves (i).

From conditions (P2) and (P4), for each vertex-block, at least one path vertex-gadget must be completely token-free. 
Since each path vertex-gadget initially contains $n(\kappa-1)$ tokens, exactly this many tokens must slide from one vertex-gadget, which proves (iii).
\end{proof}

\begin{lemma}\label{lem:logic-solution-discovery-modulator-paths-hardness-backward}
If $((H,\mathcal{C}), S, b, \phi)$ is a yes-instance of \FOD, then $(G, \kappa)$ is a yes-instance of \textsc{Multicolored Clique}. 
\end{lemma}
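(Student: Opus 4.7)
The plan is to mirror the backward direction for the modulator to stars case (Lemma~\ref{lem:logic-solution-discovery-modulator-stars-hardness-backward}), leveraging the fact that the adjacencies between the path-gadget nodes and the green connector nodes are \emph{identical} to those in the star construction; only the internal wiring of each gadget differs, and that internal wiring is controlled by Condition~(P4).

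First, let $T$ be a solution with a transformation sequence of length at most $b$. By \Cref{lem:logic-solution-discovery-modulator-paths-hardness-implications-of-conditions}, every moving token slides exactly twice and passes through a green connector node; moreover, for each edge-block for $\{i,j\}$ exactly one path edge-gadget is completely filled (so $2n$ tokens arrive there), and in each vertex-block for $i$ exactly one path vertex-gadget is completely emptied (so $n(\kappa-1)$ tokens leave it). Call these distinguished gadgets the \emph{selected} vertex- and edge-gadgets, and let $u_i$ denote the vertex whose vertex-gadget is selected in the vertex-block for $i$, and $e_{i,j}=u'v'$ the edge whose edge-gadget is selected in the edge-block for $\{i,j\}$.

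Next, I fix a pair $i<j$ and argue that $e_{i,j}=u_iu_j$, which forces $u_iu_j\in E(G)$. The only green connectors adjacent to the selected edge-gadget are the four connectors for $\{i,j\}$, and the only green connectors adjacent to the nodes of the selected vertex-gadgets that lie in the groups corresponding to $V_j$ (resp.\ $V_i$) are again these same four connectors. Since each moved token slides in exactly two steps vertex-gadget $\to$ connector $\to$ edge-gadget, the $2n$ tokens reaching the edge-gadget for $e_{i,j}$ must originate from the $n$ nodes of the selected vertex-gadget for $u_i$ in the $V_j$-group together with the $n$ nodes of the selected vertex-gadget for $u_j$ in the $V_i$-group. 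Writing $\iota(u_i)=x$, $\iota(u_j)=y$, $\iota(u')=x'$, $\iota(v')=y'$, the first-slide counts at each connector are forced: $\text{index}(i,\{i,j\})$ receives exactly $x$ tokens, $\text{remainder}(i,\{i,j\})$ receives $n-x$, $\text{index}(j,\{i,j\})$ receives $y$, and $\text{remainder}(j,\{i,j\})$ receives $n-y$. For the second slide to fill all $2n$ nodes of the edge-gadget for $e_{i,j}$, the edge-gadget must offer exactly $x'$, $n-x'$, $y'$, $n-y'$ unoccupied adjacencies at those four connectors respectively. Hence $x=x'$ and $y=y'$, so $u_i=u'$ and $u_j=v'$, and the edge $u_iu_j$ exists in $G$.

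Finally, applying this to every pair $i<j$ yields that $\{u_1,\ldots,u_\kappa\}$ is a multicolored clique in $G$, so $(G,\kappa)$ is a yes-instance of \textsc{Multicolored Clique}. The only step that requires any care beyond transcribing the star proof is the very first one, namely that the ``all-or-nothing'' behaviour of path gadgets (selected vertex-gadgets are \emph{entirely} emptied, selected edge-gadgets are \emph{entirely} filled) is guaranteed; but this is exactly the content of Condition~(P4), which propagates the token/no-token status along each path-gadget since consecutive gadget nodes lie at distance two in $H[V\setminus(C_2\cup C_4\cup C_5)]$. With that propagation in hand, the index-matching argument of the preceding paragraph goes through verbatim.
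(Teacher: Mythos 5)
Your proposal is correct and follows essentially the same route as the paper: the paper's proof of this lemma simply invokes the star-case backward argument (Lemma~\ref{lem:logic-solution-discovery-modulator-stars-hardness-backward}) with \Cref{lem:logic-solution-discovery-modulator-paths-hardness-implications-of-conditions} in place of the star implication lemma, noting that Condition~(P4) gives the all-or-nothing behaviour of path gadgets. Your write-up just makes the index-matching count at the four connectors explicit, which matches the paper's reasoning.
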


\begin{proof}
The proof follows the same argument as in \Cref{lem:logic-solution-discovery-modulator-stars-hardness-backward}, using \Cref{lem:logic-solution-discovery-modulator-paths-hardness-implications-of-conditions} instead of \Cref{lem:logic-solution-discovery-modulator-stars-hardness-implications-of-conditions}. The key observation is that Condition P4 ensures that entire paths are either filled or emptied, maintaining the same token movement pattern as in the star case.
\end{proof}

This completes the proof of \Cref{thm:logic-solution-discovery-modulator-paths-hardness}.

\section{Hardness for Twin Cover Number}\label{sec:twincover}
In this section, we prove the following theorem.

\thmTC

To prove \Cref{thm:logic-solution-discovery-twincover-hardness}, we reduce from the \textsc{Planar Arc Supply} problem.
In this problem, we are given a planar digraph $D$ with $|V(D)| + |A(D)| = \kappa$, no loops, and no double arcs. 
Each node $u \in V(D)$ has a demand $\delta(u)$ and each arc $uv\in A(D)$ has a list $L_{uv}= \{ (x^1_{uv},y^1_{uv}), (x^2_{uv},y^2_{uv}), \ldots, (x^\ell_{uv},y^\ell_{uv})\}$ of integer pairs, called the \emph{supply pairs} of $uv$. 
We let $\mathcal{L}$ represent the supply lists of all arcs and $\lambda : A(D) \rightarrow \mathcal{L}$ be a function that maps each arc in $D$ to a list in $\mathcal{L}$.
Informally, we are asked to pick from the list of each arc $uv \in A(D)$ a supply pair $(x^i_{uv},y^i_{uv})$ and send a supply of $x^i_{uv}$ to $u$ and a supply of $y^i_{uv}$ to $v$ such that for every vertex $v$, the sum of the supplies from the arcs incident to $v$ satisfies exactly its demand.  
More formally, the task is to decide whether there is a function $\rho_x: A(D)\to\mathbb{N}$ and a function $\rho_{y}: A(D)\to\mathbb{N}$ such that for every arc $uv\in A(D)$ we have $(\rho_{x}(uv),\rho_{y}(uv))\in L_{uv}$ and for every vertex $u\in V(D)$ we have:
\[
   \delta(u)=\sum_{v\in N^{+}(u)} \rho_{x}(uv)\;+\;
                     \sum_{v\in N^{-}(u)} \rho_{y}(vu).
\]

The problem is \W[1]-hard when parameterized by $\kappa$~\cite[Section 3.2]{DBLP:conf/iwpec/BodlaenderLP09} even when the integers are given in unary, and vertices have either degree two or degree four.

Let $(D,\delta,\lambda)$ be an instance of \textsc{Planar Arc Supply}. 
At a high level, the instance $((H, \mathcal{C}), S, b, \phi)$ of \FOD\ will have a \emph{demand clique} for each vertex (initially with no tokens), a \emph{reservoir clique} for each arc (initially full of tokens), and two \emph{supply cliques} for each tuple in each arc's supply list in $D$ (see \Cref{fig:logic-solution-discovery-twincover-hardness-a}).
The key insight of the reduction is that token movements in $H$ will encode the selection of supply pairs from the lists in $\mathcal{L}$.

The twin cover structure acts as a ``control skeleton'' that connects the different cliques: we create a twin cover node for each original vertex and two twin cover nodes for each arc in $D$, ensuring that the resulting graph has a bounded twin cover number.
Each vertex $u \in V(D)$ gets a corresponding twin cover \emph{vertex node} connected to its demand clique of size $\delta(u)$, while each arc $uv \in A(D)$ gets two twin cover \emph{arc nodes}, one $u$\emph{-arc node for} $uv$ and one $v$\emph{-arc node for} $uv$.
The $u$-arc node for $uv$ is connected to supply cliques, each encoding the first component of a distinct supply pair (the values of $x$) in $L_{uv}$.
These supply cliques are also connected to the vertex node for $u$ and we refer to them as the \emph{$u$-supply cliques for $uv$}.
The $v$-arc node for $uv$ is connected to supply cliques, each encoding the second component of a distinct supply pair (the values of $y$) in~$L_{uv}$.
These supply cliques are also connected to the vertex node for~$v$ and we refer to them as the \emph{$v$-supply cliques for $uv$}.
Both arc nodes for $uv$ are connected to the reservoir clique for $uv$.

Each supply clique contains two types of nodes: \emph{supply nodes} that initially contain tokens (whose number equals the corresponding component value in the supply pair) and empty \emph{index nodes} (whose number encodes the index of the tuple within the supply list).
The \FO-formula ensures that for each arc, we must ``activate'' exactly one supply pair by expressing that for each arc node, exactly one of the adjacent supply cliques empties its supply nodes and fills its index nodes with tokens. 
The \FO-formula additionally specifies that the reservoir cliques are emptied of tokens and the demand cliques are full of tokens.

The crux of our reduction lies in a numerical encoding scheme (see \Cref{fig:logic-solution-discovery-twincover-hardness-b}). 
Each reservoir clique for an arc $uv \in A(D)$ contains $\Sigma^2$ tokens (where $\Sigma$ is larger than the sum of all demand and supply values), while the number of index nodes in the $u$-supply cliques for $uv$ and the number of index nodes in the $v$-supply cliques for $uv$ uses a base-$\Sigma$ representation: for the $i$th supply pair of the arc~$uv$, the corresponding $u$-supply clique for $uv$ has $\Sigma^2 - \Sigma \cdot (i-1)$ index nodes and the corresponding $v$-supply clique for $uv$ has $\Sigma \cdot (i-1)$ index nodes.
This creates a useful mathematical property: if forced to split between only two supply cliques, one connected to the arc node for $u$ and one connected to the arc node for $v$, the $\Sigma^2$ tokens from a reservoir clique can only fill the index nodes of a matching pair of supply cliques (corresponding to a tuple in $L_{uv}$) whose index node counts sum up to exactly $\Sigma^2$.

This numerical constraint, combined with our tight budget of only two token slides per displaced token, creates a forced choreography of token movements.
Tokens must flow into a demand clique for a vertex $u \in V(D)$ from at most one $u$-supply clique for each incident arc to $u$.
These token slides will amount to two times the sum of all demands in the instance. 
Additionally, with these budget and numerical constraints, the tokens in a reservoir clique must leave and first fill the index nodes of exactly one pair of supply cliques that correspond to a tuple in $L_{uv}$.
Once the index nodes are filled, the tokens on the supply nodes of these activated cliques can be displaced and must move to their corresponding demand cliques.
Any deviation from this pattern either violates the formula or exceeds the budget.
Thus, valid token movements exist if and only if we can select supply pairs whose values exactly satisfy all vertex demands and $(D, \delta, \lambda)$ is a yes-instance of \textsc{Planar Arc Supply}.

In the following description of an instance $((H, C), S, b, \phi)$ of \FOD, we use specific colors both for greater clarity and for use in the \FO-formula.

As described above, we set $\Sigma$ to be twice the sum of all integers in the input (that is, the sum of all demands and all integers appearing inside the tuples for all tuples). We form the instance $((H, C), S, b, \phi)$ of \FOD\ as follows. See \Cref{fig:logic-solution-discovery-twincover-hardness-b} for a sketch of the construction.

\subparagraph*{Twin cover.} For each $u \in V(D)$ we create a \emph{vertex node} and for each arc $uv \in A(D)$ we create two \emph{arc nodes}, one \emph{$u$-arc node for $uv$} and one \emph{$v$-arc node for $uv$}.

We assign the following colors to the nodes in the twin cover:
\begin{itemize}
\item purple ($C_1$) to all vertex nodes, and
\item black ($C_2$) to all arc nodes.
\end{itemize}

\subparagraph*{Demand cliques.} For each $u \in V(D)$, we form a \emph{demand clique for $u$} of size $\delta(u)$, whose nodes are adjacent to the vertex node for $u$.

We assign yellow ($C_3$) to all nodes in the demand cliques.

\subparagraph*{Supply cliques.} For each arc $uv$ such that $L_{uv}$ has $\ell$ pairs, we create, for each $i \in [\ell]$, a \emph{$u$-supply clique for $uv$} with $x^i_{uv}$ \emph{supply nodes} and $\Sigma^2 - \Sigma \cdot (i - 1)$ \emph{index nodes}.
We also create, for each $i \in [\ell]$, a \emph{$v$-supply clique for $uv$} with $y^i_{uv}$ supply nodes and $\Sigma \cdot(i - 1)$ index nodes.

We assign the following colors to the nodes in the supply cliques:
\begin{itemize}
\item orange ($C_4$) to all supply nodes, and 
\item teal ($C_5$) to all index nodes.
\end{itemize}

\subparagraph*{Reservoir cliques.} For each arc $uv \in A(D)$, we form a \emph{reservoir clique} for $uv$ that contains~$\Sigma^2$ nodes and make its nodes adjacent to the arc node for $u$ and the arc node for $v$.

We assign brown ($C_6$) to all nodes in the reservoir cliques.
\\

Orange supply nodes and brown nodes of reservoir cliques contain tokens.
Let~$\Delta$ be the sum of the demands of all the vertices in $D$ (equivalently, the total size of all the demand cliques in $H$). 
We set the budget to $2\Delta + |A(D)| \cdot 2\Sigma^2$. 
The first component ($2\Delta$) accounts for moving tokens from supply nodes to demand cliques (each token making two token slides: from its initial supply clique to a twin cover arc node, then to a demand clique). 
The second component ($|A(D)| \cdot 2\Sigma^2$) accounts for the $\Sigma^2$ tokens in each reservoir clique, where each token also makes two token slides to go from the reservoir clique to fill the index nodes in the selected supply cliques pair.

Our twin cover is made up of one vertex node for each vertex $u \in V(D)$ and two arc nodes for each arc $uv \in A(D)$. 
Deleting the nodes in the twin cover leaves a graph with only disjoint cliques.
Additionally, each of the nodes in the twin cover is adjacent to all or none of the nodes in each clique.
Thus, the twin cover number is at most $\kappa + 2\binom{\kappa}{2}$.

\tikzset{
  halfandhalf/.style n args={3}{
    draw,
    minimum size=#1,
    circle,
    path picture={
      \pgfmathsetlengthmacro{\rad}{0.5*#1}
      \begin{scope}
        \clip (0,0) circle (\rad);
        \path[fill=#2,draw=none] (-\rad,0) rectangle (\rad,\rad);
      \end{scope}
      \begin{scope}
        \clip (0,0) circle (\rad);
        \path[fill=#3,draw=none] (-\rad,-\rad) rectangle (\rad,0);
      \end{scope}
    }
  }
}

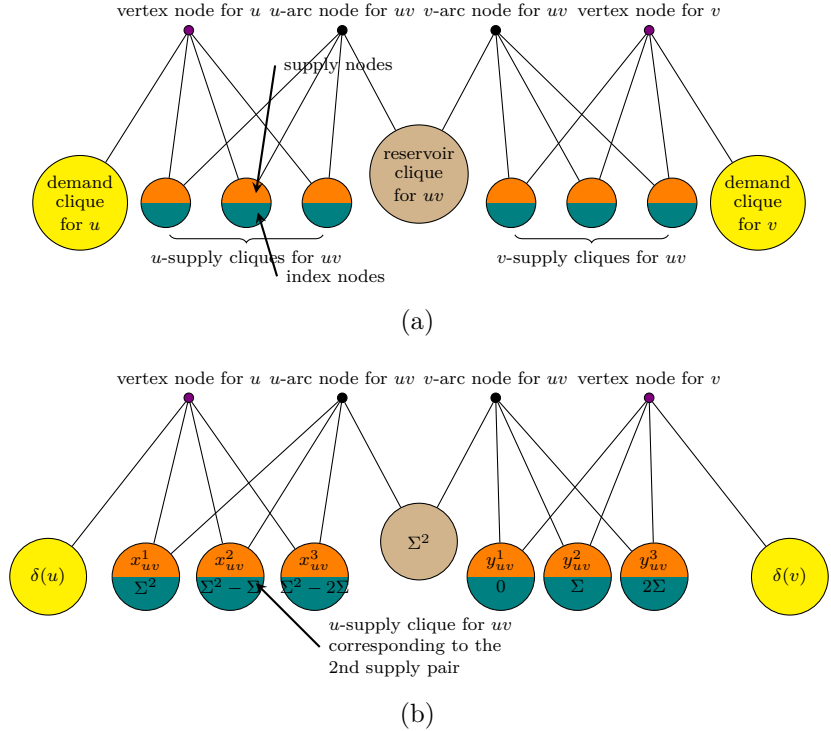
\begin{figure}[h]
    \centering
    \begin{subfigure}{\textwidth}
    \centering
    \begin{tikzpicture}[>=stealth,
                    every node/.style={font=\scriptsize},
                    vertex/.style={circle,draw=black,inner sep=1.5pt,font=\scriptsize},
                    scale=0.85,transform shape]
  \node[vertex,fill=violet,label=above: vertex node for $u$] (wu) at (0,1.5) {};
  \node[vertex,fill=black,label=above: $u$-arc node for $uv$] (wuv) at (2.4,1.5) {};
  \node[vertex,fill=black,label=above: $v$-arc node for $uv$] (wuvstar) at (4.8,1.5) {};
  \node[vertex,fill=violet,label=above: vertex node for $v$] (wv) at (7.2,1.5) {};
  \node[vertex,circle,draw,fill=yellow,minimum size=1.2cm] (Cu) at (-1.7,-1.2) {\shortstack{demand\\clique\\for $u$}};
  \node[vertex,circle,draw,fill=yellow,minimum size=1.2cm] (Cv) at (8.9,-1.2) {\shortstack{demand\\clique\\for $v$}};
  \node[vertex,circle,draw,fill=brown,minimum size=1.2cm] (Cuv) at (3.6,-0.75) {\shortstack{reservoir\\clique\\for $uv$}};
  \draw (wu) -- (Cu);
  \draw (wv) -- (Cv);
  \draw (wuv) -- (Cuv);
  \draw (wuvstar) -- (Cuv);
  \def\t{3}       
  \def\diam{22pt}
  \def\shift{0.25*\diam}
  \def\shiftBelow{0.8*\diam}
  \foreach \i in {1,...,\t} {
      \pgfmathsetmacro{\pos}{0.15 + (\i-1)*0.35}
      \node (DiscU\i)
            [halfandhalf={\diam}{orange}{teal}]
            at ($(-0.9,-1.2)!{\pos}!(2.7,-1.2)$) {};
      \draw (wu) -- (DiscU\i);
      \draw (wuv) -- (DiscU\i);
      
      \node (DiscV\i)
            [halfandhalf={\diam}{orange}{teal}]
            at ($(4.5,-1.2)!{\pos}!(8.1,-1.2)$) {};
      \draw (wv) -- (DiscV\i);
      \draw (wuvstar) -- (DiscV\i);
    }
    \draw[<-,thick] ($(DiscU2)+(0.15,0.15)$) -- ++(0.3,2) node[right,font=\scriptsize] {supply nodes};
    \draw[<-,thick] ($(DiscU2)+(0.18,-0.15)$) -- ++(0.3,-1) node[right,font=\scriptsize] {index nodes};
    
    \draw[decorate,decoration={brace,amplitude=3pt,mirror}] (-0.3,-1.7) -- (2.1,-1.7)
        node[midway,below=3pt,font=\scriptsize] {$u$-supply cliques for $uv$};
    \draw[decorate,decoration={brace,amplitude=3pt,mirror}] (5.1,-1.7) -- (7.5,-1.7)
        node[midway,below=3pt,font=\scriptsize] {$v$-supply cliques for $uv$};
\end{tikzpicture}
    \caption{}
    \label{fig:logic-solution-discovery-twincover-hardness-a}
    \end{subfigure}
    
    \vspace{0.3cm}
    
    \begin{subfigure}{\textwidth}
    \centering
    \begin{tikzpicture}[>=stealth,
                    every node/.style={font=\scriptsize},
                    vertex/.style={circle,draw=black,inner sep=1.5pt,font=\scriptsize},
                    scale=0.85,transform shape]
  \node[vertex,fill=violet,label=above: vertex node for $u$] (wu) at (0,1.5) {};
  \node[vertex,fill=black,label=above: $u$-arc node for $uv$] (wuv) at (2.4,1.5) {};
  \node[vertex,fill=black,label=above: $v$-arc node for $uv$] (wuvstar) at (4.8,1.5) {};
  \node[vertex,fill=violet,label=above: vertex node for $v$] (wv) at (7.2,1.5) {};
  \node[circle,draw,fill=yellow,minimum size=1.2cm] (Cu) at (-2.2,-1.3) {$\delta(u)$};
  \node[circle,draw,fill=yellow,minimum size=1.2cm] (Cv) at (9.4,-1.3) {$\delta(v)$};
  \node[circle,draw,fill=brown,minimum size=1.2cm] (Cuv) at (3.6,-0.75) {$\Sigma^2$};
  \draw (wu) -- (Cu);
  \draw (wv) -- (Cv);
  \draw (wuv) -- (Cuv);
  \draw (wuvstar) -- (Cuv);
  \def\t{3}       
  \def\diam{30pt}
  \def\shift{0.25*\diam}
  \def\shiftBelow{0.8*\diam}
  \foreach \i/\j/\k in {1/{0}/{\Sigma^2},2/{\Sigma}/{\Sigma^2-\Sigma},3/{2\Sigma}/{\Sigma^2-2\Sigma}} {
      \pgfmathsetmacro{\pos}{0.12 + (\i-1)*0.3}
      \node (DiscU\i)
            [halfandhalf={\diam}{orange}{teal}]
            at ($(-1.2,-1.3)!{\pos}!(3.2,-1.3)$) {};
      \node[font=\scriptsize, inner sep=0pt]
            at ($(DiscU\i)+(0,\shift)$) {$x^{\i}_{uv}$};
      \node[font=\scriptsize, inner sep=0pt]
            at ($(DiscU\i)+(0,-0.15)$) {$\k$}; 
      \draw (wu) -- (DiscU\i);
      \draw (wuv) -- (DiscU\i);
      
      \node (DiscV\i)
            [halfandhalf={\diam}{orange}{teal}]
            at ($(4.4,-1.3)!{\pos}!(8.4,-1.3)$) {};
      \node[font=\scriptsize, inner sep=0pt]
            at ($(DiscV\i)+(0,\shift)$) {$y^{\i}_{uv}$};
      \node[font=\scriptsize, inner sep=0pt]
            at ($(DiscV\i)+(0,-0.15)$) {$\j$};
      \draw (wv) -- (DiscV\i);
      \draw (wuvstar) -- (DiscV\i);
    }
    \draw[<-,thick] ($(DiscU2)+(0.4,-0.1)$) -- ++(1,-1) node[right,font=\scriptsize,align=left] {$u$-supply clique for $uv$\\corresponding to the\\$2$nd supply pair};
\end{tikzpicture}
    \caption{}
    \label{fig:logic-solution-discovery-twincover-hardness-b}
    \end{subfigure}
    \caption{Part of the colored graph $(H,\mathcal{C})$ formed by the reduction corresponding to the twin cover number. 
    It shows the part of the twin cover and the cliques that correspond to two vertices $u, v \in V(D)$ and the edge $uv \in A(D)$ with $|L_{uv}| = 3$.
    A line from a vertex or arc node to a clique means that the node is adjacent to all nodes of the clique.
    The nodes of the demand clique are yellow and those of the reservoir clique are brown.  
    Supply nodes are orange, index nodes are teal, vertex nodes are purple, and arc nodes are black.
    (a) Since $|L_{uv}| = 3$, we have three $u$-supply cliques for $uv$ and three $v$-supply cliques for $uv$.
    (b) Each $u$-supply clique for $uv$ corresponding to the $i$th supply pair contains $\Sigma^2 - \Sigma \cdot (i - 1)$ teal ($C_5$) index nodes and $x^i_{uv}$ orange ($C_4$) supply nodes.
    Each $v$-supply clique for $uv$ corresponding to the $i$th supply pair contains $\Sigma \cdot (i - 1)$ teal index nodes and $y^i_{uv}$ orange supply nodes.}
    \label{fig:logic-solution-discovery-twincover-hardness}
\end{figure}

To complete the construction of the \FOD\ instance $((H, C), S, b, \phi)$, we form the \FO-formula $\phi(X)$ to express that $X$ satisfies the following conditions:

T1. Each yellow (demand clique) node contains a token. 

T2. The brown (reservoir clique) nodes do not contain tokens. 

T3. For each pair of purple (vertex) and black (arc) nodes, exactly one adjacent (supply) clique containing at least one orange (supply) node must be such that all its teal (index) nodes contain tokens and all its orange (supply) nodes do not. For all other adjacent (supply) cliques of this type, their orange (supply) nodes contain tokens, and their teal (index) nodes do not. 
(That is, for every arc $uv \in A(D)$ there exists a unique $u$-supply clique for $uv$ and a unique $v$-supply clique for $uv$ that are modified: all tokens are removed from the supply nodes, and every index node is filled with a token.)

T4. The purple (vertex) and black (arc) nodes do not contain tokens.

The construction of these expressions in \FO\ is given in Appendix~\ref{appsubsec:twincover} to not disrupt the flow of the paper.
It is easy to see that this reduction can be performed in polynomial time.
We start by presenting the forward direction of the proof of~\Cref{thm:logic-solution-discovery-twincover-hardness}.

\begin{lemma}
If $(D, \delta, \lambda)$ is a yes-instance of \textsc{Planar Arc Supply}, then $((H, C),  S, b, \phi)$ is a yes-instance of \FOD.    
\end{lemma}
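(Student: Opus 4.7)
The plan is to translate a supply assignment for $(D,\delta,\lambda)$ directly into a token sliding schedule on $(H,\Cmc)$, using the supply pairs to decide which supply clique is ``activated'' at each arc. Let $(\rho_x,\rho_y)$ be a solution to the \textsc{Planar Arc Supply} instance, and for each arc $uv\in A(D)$ let $i_{uv}\in[\,|L_{uv}|\,]$ be the index with $(\rho_x(uv),\rho_y(uv))=(x^{i_{uv}}_{uv},y^{i_{uv}}_{uv})$. I will call the $i_{uv}$-th $u$-supply clique for $uv$ and the $i_{uv}$-th $v$-supply clique for $uv$ the \emph{chosen} supply cliques for the arc $uv$; all other supply cliques for $uv$ will be left untouched.

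The transformation has two phases. In the \emph{supply phase}, for every arc $uv$ I slide the $x^{i_{uv}}_{uv}$ tokens from the supply nodes of the chosen $u$-supply clique for $uv$ through the vertex node for $u$ into the demand clique for $u$ (two slides per token), and similarly the $y^{i_{uv}}_{uv}$ tokens from the chosen $v$-supply clique for $uv$ to the demand clique for $v$. Since $\rho_x,\rho_y$ realize all demands, the total number of tokens placed on the demand clique for a fixed vertex $u$ is exactly
\[
\sum_{v\in N^{+}(u)}\rho_x(uv)+\sum_{v\in N^{-}(u)}\rho_y(vu)=\delta(u),
\]
so the demand cliques are filled exactly and no supply or vertex node retains a token at the end. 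The total cost of this phase is $2\Delta$. In the \emph{reservoir phase}, for each arc $uv$ I empty the reservoir clique for $uv$ as follows: slide $\Sigma^2-\Sigma(i_{uv}-1)$ tokens through the $u$-arc node for $uv$ onto the index nodes of the chosen $u$-supply clique for $uv$, and the remaining $\Sigma(i_{uv}-1)$ tokens through the $v$-arc node for $uv$ onto the index nodes of the chosen $v$-supply clique for $uv$. The two chosen cliques have exactly $\Sigma^2-\Sigma(i_{uv}-1)$ and $\Sigma(i_{uv}-1)$ index nodes, so the assignment uses two slides per reservoir token and exactly fills the available index nodes, contributing $2\Sigma^2$ per arc for a phase cost of $|A(D)|\cdot 2\Sigma^2$.

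Summing the two phases gives exactly the budget $b=2\Delta+|A(D)|\cdot 2\Sigma^2$, and since the discovery token sliding model permits multiple tokens on the same vertex during intermediate steps, the two phases can be executed in any order without collision. It remains to verify that the final configuration $T$ satisfies $\phi$: condition~(T1) holds because each demand clique for $u$ has been filled to its capacity $\delta(u)$; condition~(T2) holds because every reservoir clique has been emptied; condition~(T3) holds because, for each arc node, the chosen supply clique for that arc (uniquely) had its supply nodes emptied and its index nodes filled, while all other supply cliques attached to the same arc node were untouched; and condition~(T4) holds because every vertex node and every arc node is used only as a transit point, and the multiplicities have been chosen so that no token remains on it after both phases. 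Hence $T$ witnesses that $((H,\Cmc),S,b,\phi)$ is a yes-instance of \FOD. The only mild subtlety is bookkeeping: one must check that the chosen-supply cliques, which act both as a source of supply tokens and as a destination of reservoir tokens, receive and release the right tokens, but this is automatic because the supply nodes and the index nodes inside a clique are distinct nodes.
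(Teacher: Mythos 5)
Your proposal is correct and follows essentially the same argument as the paper: select the supply clique pair indexed by the chosen supply tuple for each arc, slide the supply tokens to the demand cliques (two slides each, $2\Delta$ total), slide the reservoir tokens onto the index nodes of the chosen pair ($\Sigma^2-\Sigma(i-1)$ and $\Sigma(i-1)$, two slides each, $2\Sigma^2$ per arc), and verify conditions (T1)--(T4). The routing and budget accounting match the paper's proof, so no further comparison is needed.
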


\begin{proof}
Let $\rho_x$ and $\rho_y$ be a solution of $(D, \delta, \lambda)$. For each arc $uv \in A(D)$, let $(x^i_{uv}, y^i_{uv})$ be such that $\rho_x(uv) = x^i_{uv}$ and $\rho_y(uv) = y^i_{uv}$.

Let us now describe a solution of $((H, C), S, b, \phi)$. 
We slide the $x^i_{uv}$ tokens from the orange supply nodes of the $u$-supply clique for $uv$ corresponding to the $i$th supply pair for $uv$ toward the yellow nodes of the demand clique for $u$ and the $y^i_{uv}$ tokens from the orange supply nodes of the $v$-supply clique for $uv$ corresponding to the $i$th supply pair for $uv$ toward the yellow nodes of the demand clique for $v$. In total, for all arcs $uv$, this consumes $2\Delta$ token slides.
We then slide $\Sigma^2 - \Sigma \cdot(i - 1)$ tokens from the reservoir clique for $uv$ to the teal index nodes of the $u$-supply clique for $uv$ corresponding to the $i$th supply pair for $uv$ and $\Sigma \cdot(i-1)$ tokens to the teal index nodes of the $v$-supply clique for $uv$ corresponding to the $i$th supply pair for $uv$.
This consumes $|A(D)| \cdot 2\Sigma^2$ token slides.

Given that $\rho_x$ and $\rho_y$ are a solution, Condition T1 is satisfied.
Since $\Sigma^2-\Sigma \cdot(i-1)+\Sigma \cdot(i-1) = \Sigma^2$, Condition T2 is also satisfied. 
Condition T3 is satisfied since we chose the supply cliques for $uv$ to correspond to one supply tuple in the supply list for $uv$ and moved the tokens at the nodes of the reservoir clique for $uv$ to fill the teal index nodes of the chosen supply cliques and moved the tokens at the orange supply nodes out of those chosen supply cliques.
Finally, no purple vertex or black arc node is in $T$ and Condition T4 is satisfied.
Thus, we find that $((H, C), S, b, \phi)$ is a yes-instance, as needed.
\end{proof}

The following lemma shows the backward direction of the proof of~\Cref{thm:logic-solution-discovery-twincover-hardness}.

\begin{lemma}
If $((H, C), S, b, \phi)$ is a yes-instance of \FOD, then $(D, \delta, \lambda)$ is a yes-instance of \textsc{Planar Arc Supply}.    
\end{lemma}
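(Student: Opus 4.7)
The plan is to extract from a solution $T$ to the \FOD\ instance a choice of supply pairs witnessing that $(D,\delta,\lambda)$ is a yes-instance of \textsc{Planar Arc Supply}. The argument proceeds in three phases: identifying candidate pair indices from $T$, forcing per-arc agreement of those indices via a tight-budget argument, and then reading off the vertex demand equations from T1.

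For the first phase, T3 supplies, for each arc $uv\in A(D)$, two indices $i^*_{uv}$ and $j^*_{uv}$---namely, the indices of the unique $u$-supply and $v$-supply cliques of $uv$ whose teal index nodes carry tokens and whose orange supply nodes are empty in $T$. A direct conservation equation $|S|=|T|$, using T1--T4 to tally tokens in every clique, rearranges to
\[
  \sum_{uv\in A(D)}\bigl(x^{i^*_{uv}}_{uv}+y^{j^*_{uv}}_{uv}\bigr) \;=\; \Delta \;+\; \Sigma\sum_{uv\in A(D)}\bigl(j^*_{uv}-i^*_{uv}\bigr).
\]
Since $\Sigma$ was chosen to exceed twice the sum of all integers in the input, both $\Delta$ and the left-hand side are strictly below $\Sigma/2$, so the integer $K:=\sum_{uv}(j^*_{uv}-i^*_{uv})$ must equal $0$, and the global identity $\sum_{uv}(x^{i^*_{uv}}_{uv}+y^{j^*_{uv}}_{uv})=\Delta$ follows.

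The main obstacle is upgrading $K=0$ to the per-arc equalities $i^*_{uv}=j^*_{uv}$ needed to select a single supply pair from each $L_{uv}$. For this I would carry out a distance analysis in $H$: every reservoir token for $uv$ requires at least two slides to leave, and within two slides can reach only the supply-clique nodes of $uv$; every demand-clique node for $u$ is at distance $\geq 2$ from any initial orange supply node, and at distance $\geq 4$ from any reservoir node. A direct count then shows that merely emptying all $|A(D)|\Sigma^2$ reservoir tokens (by T2) and filling all $\Delta$ demand-clique nodes (by T1) already costs at least $2|A(D)|\Sigma^2+2\Delta=b$ slides. Any deviation---a reservoir token displacing a non-chosen orange token (the original orange would have to relocate, at an extra cost of two slides), a reservoir token travelling further than two slides, or a demand node filled by a token originating outside an incident supply clique---would add at least two extra slides and is therefore forbidden. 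Consequently each reservoir token of $uv$ terminates in exactly two slides on an index node of a chosen supply clique of the same arc $uv$.

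Because the total capacity of those chosen index nodes for $uv$ equals $(\Sigma^2-\Sigma(i^*_{uv}-1))+\Sigma(j^*_{uv}-1)=\Sigma^2+\Sigma(j^*_{uv}-i^*_{uv})$, accommodating all $\Sigma^2$ reservoir tokens of $uv$ forces $j^*_{uv}\geq i^*_{uv}$ for every arc; combined with $K=0$ this yields $i^*_{uv}=j^*_{uv}$ per arc. Setting $\rho_x(uv):=x^{i^*_{uv}}_{uv}$ and $\rho_y(uv):=y^{i^*_{uv}}_{uv}$ thus selects a genuine supply pair from $L_{uv}$. Finally, the $\delta(u)$ tokens populating the demand clique for $u$ must originate from chosen orange supply nodes at distance two via the vertex node for $u$, which are exactly the chosen $u$-supply cliques of arcs leaving $u$ and the chosen $v$-supply cliques of arcs entering $u$; summing their contributions reproduces the demand equation $\delta(u)=\sum_{v\in N^+(u)}\rho_x(uv)+\sum_{v\in N^-(u)}\rho_y(vu)$, certifying that $(\rho_x,\rho_y)$ solves $(D,\delta,\lambda)$.
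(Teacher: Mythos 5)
Your global token-conservation identity ($|S|=|T|$ giving $\sum_{uv}(x^{i^*_{uv}}_{uv}+y^{j^*_{uv}}_{uv})=\Delta+\Sigma K$, hence $K=0$ since $\Sigma$ dominates all other quantities) is correct and is a genuinely different ingredient from the paper, which works entirely per arc. The gap is in the step you then rely on to get the per-arc inequality $j^*_{uv}\ge i^*_{uv}$, namely that ``each reservoir token of $uv$ terminates in exactly two slides on an index node of a chosen supply clique.'' Two of the deviations you dismiss do not in fact add any extra slides to the global count. First, a reservoir token that travels four slides to a demand node contributes $4$ instead of $2$, but it thereby serves as one of the $\Delta$ demand fillers, so one fewer two-slide orange filler is needed; with $r$ such tokens the total is $4r+2(\Delta-r)+2(|A(D)|\Sigma^2-r)=2\Delta+2|A(D)|\Sigma^2=b$, i.e.\ no overshoot. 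Second, a reservoir token may park in two slides on an orange node of a \emph{non-chosen} supply clique of the same arc while the displaced orange token makes the two slides to a demand node; this swap keeps T1--T4 satisfied for those cliques and costs zero extra slides. So budget tightness, as you invoke it, does not by itself force the choreography you need.

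These escapes are exactly what would allow $i^*_{uv}>j^*_{uv}$, and excluding them requires an additional counting argument that you do not supply: for instance, if $i^*_{uv}>j^*_{uv}$ then at least $\Sigma$ reservoir tokens of $uv$ cannot end on chosen index nodes, yet the only other final positions reachable without uncompensated slides are orange nodes of non-chosen supply cliques for $uv$ and demand nodes, which together number less than $\Sigma$ by the choice of $\Sigma$ --- a pigeonhole you would have to make explicit. The paper instead argues both directions per arc: all chosen index nodes must be filled by reservoir tokens of the same arc (any other filler incurs uncompensated slides), giving $i\ge j$, and the $\Sigma(i-j)$ excess tokens are bounded by the at most $\Delta<\Sigma$ tokens that may still move after the reservoirs are emptied, giving $i\le j$. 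Your $K=0$ identity could elegantly replace one half of such an argument once a correct per-arc inequality is established, but as written both the derivation of $j^*_{uv}\ge i^*_{uv}$ and the final claim that demand nodes are filled only from chosen incident supply cliques rest on the flawed ``two extra slides'' accounting, so the proof has a real gap.
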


\begin{proof}
Assume $((H, C), S, b, \phi)$ is a yes-instance of \FOD, and let $\mathcal{T}$ be a transformation from the initial set of tokens to a solution.
Then, at least $|A(D)| \cdot 2\Sigma^2$ token slides are needed to empty the brown nodes of the reservoir cliques and satisfy conditions (T2) and (T4).
Since the order of the moves of different tokens does not impact the validity of a solution, we can assume that these moves are performed first in the transformation and let us call $S'$ the resulting configuration of tokens.
Note that all nodes in $S'$ are in supply cliques.

Since the total budget is $2\Delta + |A(D)| \cdot 2\Sigma^2$, there exists a transformation from~$S'$ to a solution in at most $2\Delta$ token slides.
Since no token of $S'$ is incident to a demand clique, at least $2\Delta$ token slides are needed in order to fill the $\Delta$ yellow nodes of demand cliques and satisfy Condition T1.
Thus, from $S'$, exactly $\Delta$ tokens have to move and require exactly two token slides each to fill the demand cliques and these are the only token slides that remain.

The final solution satisfies conditions (T3) and (T4). 
Let us consider the supply cliques to which the tokens initially in the reservoir cliques went.
By Condition~T3, the final configuration is modified compared to the initial configuration only on one $u$-supply clique for~$uv$ and one $v$-supply clique for~$uv$ for every arc $uv \in A(D)$. 
Thus, for every arc $uv \in A(D)$, there exist one chosen $u$-supply clique for~$uv$ (say, corresponding to the $i$th supply pair for $uv$) and one chosen $v$-supply clique for~$uv$ (say, corresponding to the $j$th supply pair for $uv$) such that more than $\Delta$ tokens of the reservoir clique for $uv$ slide and stay on (as at most $\Delta$ of $\Sigma^2$ tokens can slide toward the demand cliques for~$u$ and $v$).

Moreover, since tokens of a solution lie on different nodes, there are at most $\Delta$ tokens that can move from the chosen $u$-supply clique for $uv$ and the chosen $v$-supply clique for $uv$ to demand cliques.
Since all teal index nodes in the chosen $u$-supply clique for~$uv$ and the chosen $v$-supply clique for~$uv$ have to contain tokens in the final configuration and since all those tokens must come from the reservoir clique for $uv$, $\Sigma^2 - \Sigma \cdot(i-1)$ of those tokens move to the chosen $u$-supply clique for $uv$ and $\Sigma \cdot (j - 1)$ move to the chosen $v$-supply clique for $uv$. 
So we should have $\Sigma^2 \geq \Sigma^2 - \Sigma \cdot (i - 1) + \Sigma \cdot (j - 1)$ and thus $i \geq j$. 
Moreover, the total number of tokens that can move after $S'$ is bounded by $\Delta$, and the number of tokens that move from the reservoir clique for $uv$ to the chosen supply cliques is $\Sigma^2 - (\Sigma^2 - \Sigma \cdot(i - 1) + \Sigma \cdot(j - 1)) = \Sigma \cdot(i - 1) - \Sigma \cdot(j - 1) \leq \Delta$. 
For $\Sigma$ to be smaller than or equal to $\Delta$, $i \le j$.
Thus, $i = j$ and all the tokens that move from the reservoir clique for $uv$ go to teal index nodes of the chosen $u$-supply clique for $uv$ and the chosen $v$-supply clique for $uv$.

Condition T3 ensures that the tokens initially on the orange supply nodes of the chosen $u$-supply clique for $uv$ and the chosen $v$-supply clique for $uv$ have to move.
Since every token that moves after $S'$ can move exactly twice to go to the unique demand clique at distance two from it, all the tokens on the orange supply nodes of the chosen $u$-supply clique for $uv$ have to go to the nodes of the demand clique for $u$ and all the tokens on the orange supply nodes of the chosen $v$-supply clique for $uv$ have to go to the nodes of the demand clique for $v$.
This argument holds for every arc in the instance.

Since tokens lie on different nodes of the final solution, for every $u \in A(D)$, the sum of the tokens going to the demand clique of $u$ from the $u$-supply cliques for the arcs incident to $u$ cannot be larger than $\delta(u)$. 
To conclude, let us prove that the sum has to be equal to $\delta(u)$.
Assume by contradiction that some demand is not satisfied, i.e., (T1) is not satisfied for a vertex $u$ after moving the tokens on the orange supply nodes.
To transform the current set into a solution, additional tokens from a $u$-supply clique for some arc incident to $u$ should go to the demand clique for $u$.
By Condition~T3, such tokens cannot be initially on an orange supply node of another $u$-supply clique. 
Additionally, as argued before such tokens cannot be tokens initially on a brown node in a reservoir clique as these tokens fill the teal index nodes and cannot slide further.
So the demand of each vertex $u \in V(D)$ has to be exactly satisfied from the orange supply nodes of the chosen $u$-supply cliques for incident arcs, which completes the proof.
\end{proof}

\section{Hardness of \MSOD\ for Bandwidth}\label{sec:bandwidth}
This section is dedicated for the proof of the following theorem.

\thmBW

We use a reduction from the \textsc{Planar Arc Supply} problem, starting with a slight modification to form the instance $(D,\delta,\lambda)$.
To ensure that all lists are of equal length, while maintaining the validity of the supply constraints, we add as many copies of the first tuple as needed (zero or more) to increase the length to $t$.
We then create for each vertex $u \in V(D)$ a vertex-gadget $H_u$ and for each arc $uv \in A(D)$ an arc-gadget $H_{uv}$. 

At a high level, we can view each gadget as a ``ladder'', with two long paths forming the \emph{side rails} and shorter paths forming the \emph{rungs} that connect the side rails.
We refer to the ``top'' and ``bottom'' of a side rail, consistent with how one might draw an image of a ladder.
The two side rails of an arc-gadget $H_{uv}$ correspond to the two endpoints $u$ and $v$ of the arc $uv$. 
In a vertex-gadget, the \emph{isolated side rail} has no connections outside the vertex-gadget, whereas the \emph{connected side rail} has connections to side rails of arc-gadgets associated with (at most four) incident arcs. 
When two nodes in a vertex-gadget (one on the isolated side rail and one on the connected side rail) are joined by a rung, we call them \emph{rail neighbors}.

To represent supply pairs, an arc-gadget $H_{uv}$ contains $t$ \emph{sub-ladders}, one for each supply tuple in $L_{uv}$.
At the bottom of the sub-ladder for supply tuple $(x,y)$ is a rung, above which the side rail corresponding to $u$ contains $x$ tokens, and the side rail corresponding to $v$  contains $y$ tokens; all tokens are located on \emph{token-bearing nodes}.
The \MSO-formula ensures that each arc-gadget empties exactly one sub-ladder.
It also specifies that for each vertex $u$, a token be moved to each of the $\delta(u)$ \emph{demand nodes} located at the top of the connected side rail of the vertex-gadget.  In a yes-instance of \textsc{Planar Arc Supply},  tokens can be moved from the side rails of arc-gadget sub-ladders associated with the chosen supply pairs into the corresponding vertex-gadgets, and then up to the demand nodes.

Also specified in the \MSO-formula, the \emph{rail neighbor condition} requires that for any node $g$ on the connected side rail that is the neighbor of a token-bearing node $h$, there must be a token on either $h$ or the \emph{mirror} of $h$ (the rail neighbor of $g$, located on the isolated side rail).
Each vertex-gadget contains a total of $t+1$ sub-ladders, one for each supply pair, as well as one at the bottom of the ladder, containing $\delta(u)$ \emph{reservoir nodes} that each hold a token.
Thus, when the tokens on a side rail of a sub-ladder of an arc-gadget are emptied into the vertex-gadget and moved up to the demand nodes, the rail neighbor condition is satisfied by moving tokens to the mirrors of those token-bearing nodes from the reservoir nodes.

Crucially, no matter which sub-ladder is selected from an arc-gadget, we incur the costs of moving its tokens upward to the demand nodes in the vertex-gadgets and moving tokens from the reservoirs in the vertex-gadgets to the mirrors of now-empty token-bearing nodes.
Since, per shifted token, the two costs sum up to the length of a vertex-gadget side rail, the total cost will be the same no matter which sub-ladder of the arc-gadget we choose.
The budget restricts tokens to slide in the described manner, so that each vertex-gadget for a vertex $u \in V(D)$ receives exactly $\delta(u)$ tokens and does so by taking in the tokens (i.e. supplies) of the selected sub-ladders (i.e. selected supply pairs) of connected arc-gadgets.

To restrict the ways tokens can slide, we introduce \emph{spacer nodes}, designed to increase the lengths of certain subpaths within gadgets. 
There are spacer nodes along the side rails of arc-gadgets in order for them to ``match'' the lengths of the side rails of vertex-gadgets, and there are also spacer nodes along rungs.
The length of side rails is chosen to ensure that the side rail of each sub-ladder has enough space to hold and receive tokens.
We set $\Sigma$ to be three times the sum of all integers in the input (that is, the sum of all demands and, for all tuples, all integers appearing inside the tuples), and to account for there being as many as four neighbors of a vertex, we set the length of a side rail of a sub-ladder to be $4\Sigma$.

In the following description of an instance $((H, \mathcal{C}), S, b, \phi)$ of \MSOD, we use specific colors both for greater clarity and for use in the \MSO-formula; moreover, for brevity, we often refer to nodes simply by color rather than both color and function.

\subparagraph*{Vertex-gadget.} Each vertex-gadget $H_u$ consists of two paths of length $4\Sigma(t+1)$ (the \emph{isolated side rail} and the \emph{connected side rail}), connected by $4\Sigma(t+1)$ paths of length four (the \emph{rungs}); the three intermediate nodes in each rung are all gray \emph{spacer nodes}. 
The $t+1$ subgraphs formed by splitting each side rail into subpaths of length~$4\Sigma$ are \emph{sub-ladders},  
numbered from $1$ at the top to~$t+1$ at the bottom.

The $\delta(u)$ nodes at the top of the connected side rail are yellow \emph{demand nodes} (with all other connected side rail nodes being purple) and the $\delta(u)$ nodes at the bottom of the isolated side rail are brown \emph{reservoir nodes} (with all other isolated side rail nodes being teal). 
The bottom-most sub-ladder of the vertex-gadget (sub-ladder~$t+1$) contains the brown reservoir nodes and has no connections outside the vertex-gadget.
The \MSO-formula implies that for each vertex-gadget, none of the nodes in sub-ladder~$t+1$ contain tokens.
We will show that tokens leave sub-ladder~$t+1$ of a vertex-gadget for vertex $u$ through the bottom-most isolated side rail node of sub-ladder $t$ of $H_u$, which we call the \emph{transit node for $u$}.

Since a vertex $u \in V(D)$ has degree at most four, a vertex-gadget is connected to at most four arc-gadgets.
To distinguish among the nodes involved in the four connections, we assign each a \emph{type} ($a$, $b$, $c$, or $d$).
We can view the $4\Sigma$ nodes in the side rail of a sub-ladder as being the interleaving of four paths of length $\Sigma$, one of each type. 
Moving from bottom to top, each side rail consists of nodes of types $a$, $b$, $c$, $d$, $a$, $b$, $c$, $d$, and so on.  
Initially, there is a token on each brown reservoir node, for a total of $\delta(u)$ tokens for $H_u$.
\Cref{fig:logic-solution-discovery-vertex-gadget-bandwidth-hardness} illustrates a vertex-gadget, depicting sub-ladders, demand nodes, and reservoir nodes.

\subparagraph*{Arc-gadget.} The arc-gadget $H_{uv}$ contains one side rail for $u$ and one for $v$.
To represent the supply pairs, an arc-gadget $H_{uv}$ contains one \emph{sub-ladder} for each supply tuple $(x,y)$ in $L_{uv}$ (for a total of $t$ sub-ladders since unlike in vertex-gadgets, here there are no reservoir nodes). 
The sub-ladders are formed by splitting each side rail into subpaths of length $4\Sigma$. 
As in the vertex-gadgets, we number the sub-ladders from~$1$ at the top to $t$ at the bottom.
Sub-ladder $i$ corresponds to the supply pair $i$ in the instance of \textsc{Planar Arc Supply}.
In each sub-ladder, the bottom-most node in each side rail is a black \emph{rung node}.
The \emph{rung} connecting two rung nodes is a path of length six, with five intermediate spacer nodes. 

To match the lengths of the paths between nodes of each type in the vertex-gadget, in a side rail of a sub-ladder, there are three gray spacer nodes between the rung node and the first orange \emph{token-bearing node}, and three gray spacer nodes between each consecutive pair of orange nodes.
The total number of orange token-bearing nodes in a side rail of a sub-ladder will equal the number in the corresponding supply pair for that endpoint.
All remaining nodes in the side rails of the sub-ladder are spacer nodes.
Tokens are initially placed on all token-bearing nodes.
\Cref{fig:logic-solution-discovery-arc-gadget-bandwidth-hardness} shows several sub-ladders in an arc-gadget.
\\

We complete our construction of $(H, \mathcal{C})$ by connecting vertex-gadgets and arc-gadgets.
For each vertex \(u\in V(D)\), we assign a distinct type ($a$, $b$, $c$, or $d$) to each neighbor of $u$. 
If neighbor $v$ of $u$ is assigned type $a$, then there are edges only between (some) nodes of type $a$ in $H_u$ and nodes in $H_{uv}$.
For each pair of sub-ladders with the same number, one in $H_u$ and one in $H_{uv}$, we add an edge between the black rung node on the side rail for $u$ in $H_{uv}$ and the bottom-most node of type $a$ in the sub-ladder on the connected side rail in $H_u$.
Moving bottom-up in both sub-ladders, we connect the orange nodes in $H_{uv}$ one by one to distinct nodes of type $a$ in the sub-ladder on the connected side rail in $H_u$. 
\Cref{fig:logic-solution-discovery-graph-bandwidth-hardness} sketches attachments among $H_{uv}$, $H_u$, and $H_v$, where $v$ is assigned type~$d$ in the neighborhood of $u$, and $u$ is assigned type $c$ in the neighborhood of $v$.

We let $\Delta$ be the sum of all demands and then set $b = \Delta + \sum_{u \in V(D)} \delta(u) \cdot (4\Sigma(t+1) - \delta(u))$, sufficient for sliding $\Delta$ tokens from orange nodes onto and then up the corresponding vertex-gadgets and then sliding tokens from the brown nodes to the teal mirrors of now empty orange nodes.
Each token that slides from an orange node~$h$ onto a purple node $g$ in a vertex-gadget $H_u$, requires $4\Sigma(t+1) - \delta(u)$ token slides to slide the token towards a yellow node and slide a token from the brown nodes to the teal mirror of $h$.

We complete $((H, \mathcal{C}), S, b)$ with an \MSO-formula $\phi(X)$ indicating that the set $X$ satisfies the following conditions (see Appendix~\ref{appsubsec:bandwidth} for the construction in \MSO):

B1. The gray, purple, and brown nodes do not contain tokens.
    
B2. Each yellow node contains a token. 

B3. A teal node contains a token if and only if an orange node is at distance five from it and does not contain a token. 
(That is, the rail neighbor condition is satisfied.)
    
B4. If an orange node contains a token, then the orange nodes at distance four from it also do.
Otherwise, they do not. 
(That is, for each sub-ladder, orange nodes on one side rail either all contain tokens or none do.)

B5. For each connected component formed of nodes of color gray, black, and orange, there exist exactly two orange nodes not containing tokens, each at distance four from a black node, such that the black nodes are at a distance six from each other. 
No other black nodes at distance four from an orange node not containing a token exist in the component.
(That is, for each arc-gadget, we empty the orange nodes each at distance four from the black rung nodes of exactly one sub-ladder. Combined with~(B4), this condition implies that for each arc-gadget, we empty exactly one sub-ladder.)

It is easy to see that this reduction can be performed in polynomial time.
We show that the bandwidth of $H$ is upper bounded by a computable function of $\kappa$, the number of vertices and edges in $D$, and then present both directions of the proof of \Cref{thm:logic-solution-discovery-bandwidth-hardness}.

\begin{lemma}\label{lem:logic-solution-discovery-bandwidth-hardness-paramter}
$bw(H) \le 66\kappa$. 
\end{lemma}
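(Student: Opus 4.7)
The plan is to construct an explicit linear ordering $f\colon V(H)\to\{1,\ldots,|V(H)|\}$ that processes all $\kappa$ gadgets of $H$ in parallel, one horizontal level at a time, and then to bound $|f(u)-f(v)|\le 66\kappa$ for every edge $uv\in E(H)$ by counting nodes per level.

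First, I would fix an arbitrary enumeration $G_1,\ldots,G_\kappa$ of the gadgets of $H$ (using $|V(D)|+|A(D)|=\kappa$). Each vertex-gadget $H_u$ carries a natural notion of \emph{level}, indexed $1$ to $4\Sigma(t+1)$ from top to bottom along its side rails; each arc-gadget $H_{uv}$ has levels $1$ through $4\Sigma t$, aligned at the top with the vertex-gadget levels. Within one gadget at one level I fix a canonical intra-level order: for a vertex-gadget, the isolated side-rail node, the three rung spacer nodes, then the connected side-rail node; for an arc-gadget, the side-rail node for $u$, then (only at rung levels) the five rung spacer nodes, then the side-rail node for $v$. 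I define $f$ by concatenating level blocks for $\ell=1,2,\ldots,4\Sigma(t+1)$: block $\ell$ lists, in the order $G_1,\ldots,G_\kappa$, the nodes contributed by each gadget at level $\ell$ in the fixed intra-level order (gadgets that do not reach level $\ell$ simply contribute nothing).

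Next, I would bound the width of each level block. A vertex-gadget contributes exactly $5$ nodes per level (two side-rail nodes plus three rung spacers from the rung at that level); an arc-gadget contributes at most $7$ nodes per level ($2$ at non-rung levels, $7$ at each of its $t$ rung levels). Hence each level block has width at most $5|V(D)|+7|A(D)|\le 7\kappa$. I would then classify the edges of $H$: rung edges within one gadget lie entirely inside a single level block; side-rail edges inside one gadget join nodes in consecutive level blocks; cross-gadget edges join either a black rung node of $H_{uv}$ with the bottom-most type-$a$ node of the corresponding sub-ladder in $H_u$, or an orange token-bearing node with a matched type-$a$ node higher up. A direct check shows that in both cases the two endpoints sit in levels that differ by at most $3$, because the stride of $4$ along the arc-gadget side rail between consecutive orange nodes (and from the rung node to the first orange node) is matched by the stride of $4$ between consecutive type-$a$ nodes on the connected side rail, so the offset is determined once at the bottom of each sub-ladder and stays bounded throughout. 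Each cross-gadget edge therefore crosses at most $4$ consecutive level blocks, giving span at most $4\cdot 7\kappa=28\kappa\le 66\kappa$, and the within-block and consecutive-block edges have span at most $7\kappa$ and $2\cdot 7\kappa$, respectively.

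The main obstacle is verifying the claim in the previous paragraph that every cross-gadget edge lies within $O(1)$ consecutive level blocks. This reduces to computing the vertical offset between the black rung node at the bottom of a sub-ladder of $H_{uv}$ and the bottom-most type-$a$ node of the corresponding sub-ladder in $H_u$, and then checking that the shared stride of $4$ propagates this constant offset to all orange/type-$a$ pairings built ``bottom-up'' inside the sub-ladder. The constant $66$ is chosen with generous slack to absorb the small off-by-one errors arising from the interleaving of the four types $a,b,c,d$ along the connected side rail, from the fact that vertex-gadgets have one more sub-ladder than arc-gadgets (so the final $4\Sigma$ levels see only vertex-gadget contributions), and from a potential few-position mismatch within a single level block.
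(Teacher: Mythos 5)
Your proposal is correct and takes essentially the same approach as the paper: both construct the ordering by processing all $\kappa$ gadgets in parallel, one horizontal stratum at a time from top to bottom, bound the per-stratum width by $O(\kappa)$, and verify that every edge (rung, side-rail, or cross-gadget, the latter thanks to the matching stride of four between orange nodes and type-$a$ nodes in aligned sub-ladders) spans only a bounded number of consecutive strata. The only difference is granularity — the paper groups levels into segments of length four so each edge lies within the same or adjacent segments of width at most $33\kappa$ (giving $66\kappa$), while you use unit levels of width at most $7\kappa$ with edges spanning a few levels, which yields the same (in fact slightly better) bound.
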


\begin{proof}
We describe a linear ordering of the vertices of $H$ that achieves a bandwidth that is at most~$66\kappa$. 
First, we use the term \emph{segment} to refer to the $\Sigma(t+1)$ subgraphs formed by splitting each side rail for each vertex-gadget and each arc-gadget into subpaths of length four.
We number the subpaths, with segment $1$ at the top and segment $\Sigma(t+1)$ ($\Sigma t$ for an arc-gadget) at the bottom.
In an arc-gadget, each segment containing black rung nodes also contains the rung connecting them.
Other segments of the arc-gadget do not contain rungs.

For $i \in [\Sigma t]$, we add to the ordering the nodes of segment $i$ for vertex-gadget~$H_u$ for every vertex $u \in V(D)$ and arc-gadget $H_{uv}$ for every arc $uv \in A(D)$.
Then, for  $i \in [\Sigma(t+1)] - [\Sigma t]$, we add to the ordering the nodes of segment $i$ for every vertex-gadget $H_u$.
Each node in $V(H)$ is inserted exactly once.  
Moreover, for each $i \in [\Sigma(t+1)]$ we add at most \(33\kappa\) nodes to the ordering (for each vertex-gadget, four rungs of three gray nodes each, four purple nodes, and four teal nodes and for each arc-gadget, at most eight side rail nodes and five nodes of a rung). 
Each node inserted at index~\(i\) is adjacent only to nodes inserted at indices \(i-1\) or \(i+1\), hence at distance at most $66\kappa$ away in the ordering;
by the definition of bandwidth, $bw(H) \le 66\kappa$.
\end{proof}

\begin{lemma}\label{lem:logic-solution-discovery-bandwidth-hardness-forward}
If $(D,\delta,\lambda)$ is a yes-instance of \textsc{Planar Arc Supply}, then $((H,\mathcal{C}), S, b, \phi)$ is a yes-instance of \MSOD. 
\end{lemma}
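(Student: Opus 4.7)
The plan is to turn the \textsc{Planar Arc Supply} witness $(\rho_x,\rho_y)$ into a sliding sequence realizing a configuration $T$ with $H\models\phi(T)$ in exactly $b$ slides. For each arc $uv\in A(D)$, let $i_{uv}$ be the index of the supply tuple selected by the witness, so that the orange nodes on sub-ladder $i_{uv}$ of $H_{uv}$ carry $\rho_x(uv)$ tokens on the $u$-side rail and $\rho_y(uv)$ tokens on the $v$-side rail. These are exactly the sub-ladders that will be ``emptied'': their orange tokens are routed into the adjacent vertex-gadgets and then up to the yellow demand nodes, and brown reservoir tokens are routed up to the teal mirrors of the emptied orange nodes (so that the rail-neighbor condition B3 is restored).

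I carry this out in three phases. \emph{Phase 1} slides each chosen orange token one step across its unique connecting edge onto the adjacent purple node on the matching sub-ladder of the connected side rail of the relevant vertex-gadget; because distinct arcs incident to a vertex $u$ are assigned distinct types $a,b,c,d$ and distinct orange nodes of one arc are paired with distinct purple nodes, there are no collisions, and this phase costs exactly $\Delta=\sum_u\delta(u)$ slides. \emph{Phase 2} runs inside each $H_u$: its $\delta(u)$ newly arrived tokens now sit at purple positions $p_1<\cdots<p_{\delta(u)}$ counted from the top of the connected side rail, and I slide them upward, processing the topmost token first so that relative order is preserved, until the $j$-th token occupies the $j$-th yellow demand node. \emph{Phase 3}, also inside $H_u$, slides the $\delta(u)$ brown reservoir tokens upward along the isolated side rail, pairing the topmost brown with the highest target teal mirror $p_1$ and so on; processing top-first again avoids collisions and fills exactly the mirrors of orange nodes vacated in phase~1.

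A brief algebraic calculation shows that the combined cost of phases 2 and 3 inside $H_u$ telescopes to exactly $\delta(u)\bigl(4\Sigma(t+1)-\delta(u)\bigr)$, \emph{independently} of the specific entry positions $p_j$; hence the grand total matches the budget $b=\Delta+\sum_u \delta(u)\bigl(4\Sigma(t+1)-\delta(u)\bigr)$ precisely. It remains to verify that $T$ satisfies $\phi$. Conditions B1 (gray, purple, brown empty) and B2 (all yellow demand nodes filled) are immediate from the described phases. B4 holds because on every arc-gadget side rail, the oranges of a sub-ladder are either all emptied (when the sub-ladder is the selected one) or all retain their token. B5 holds because, within each arc-gadget, the unique pair of emptied oranges sitting at distance four from a black rung node are the two bottommost oranges of sub-ladder $i_{uv}$ (one per side rail), whose corresponding black rung nodes are joined by the length-six cross-rung of that sub-ladder. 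Finally, B3 holds because a teal node is filled in phase~3 exactly when its unique distance-five orange partner is emptied in phase~1, while every other teal node either has no distance-five orange partner or still sees it occupied.

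The main obstacle is the cost accounting inside vertex-gadgets: phases 2 and 3 must sum to exactly $\delta(u)(4\Sigma(t+1)-\delta(u))$ per vertex-gadget, and this equality must hold \emph{regardless} of which sub-ladders the witness selected. The algebraic cancellation that achieves this is precisely the device that turns \textsc{Planar Arc Supply} feasibility---namely the equation $\sum_{v\in N^+(u)}\rho_x(uv)+\sum_{v\in N^-(u)}\rho_y(vu)=\delta(u)$---into \MSOD\ feasibility by matching the number of tokens flowing into $H_u$ with the number of yellow demand nodes. Given the explicit slide schedule, the absence of collisions and the verification of B1--B5 on $T$ are then routine.
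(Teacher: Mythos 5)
Your proposal is correct and follows essentially the same route as the paper's proof: slide the selected sub-ladders' orange tokens one step onto their purple neighbors ($\Delta$ slides), then move those tokens up to the yellow demand nodes while brown reservoir tokens fill the teal mirrors, and verify B1--B5. The only cosmetic difference is the cost accounting: the paper pairs each entry with the closest brown node and farthest yellow node so that each operation costs exactly $4\Sigma(t+1)-\delta(u)$, whereas you let the per-vertex total telescope to $\delta(u)\bigl(4\Sigma(t+1)-\delta(u)\bigr)$; both give the same budget.
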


\begin{proof}
For an arc $uv \in A(D)$, let $i$ be the index of the supply tuple in $L_{uv}$ that sends its supply to the vertices $u$ and $v$ in $D$.
We perform the following token slides:

\begin{enumerate}
    \item We slide the tokens at the orange nodes of sub-ladder $i$ onto their purple neighboring nodes in vertex-gadgets $H_u$ and $H_v$, totalling $\Delta$ token slides for all arcs.
    \item For each vertex-gadget and for each purple node $g$ belonging to the gadget and now having a token, we slide the closest available token from the brown nodes of the gadget towards the teal rail neighbor of $g$ and slide the token on $g$ toward the farthest empty yellow node.
    Since in Step 1 a vertex-gadget for vertex $u \in V(D)$ receives $\delta(u)$ tokens, we perform this step $\delta(u)$ times and require $4\Sigma(t+1) - \delta(u)$ token slides each time, totaling $\sum_{u \in V(D)} \delta(u) \cdot (4\Sigma(t+1) - \delta(u))$ token slides for all vertices.
\end{enumerate}

After these token slides, we can show all conditions are satisfied. For each vertex $u \in V(D)$, the yellow nodes contain tokens and the brown nodes do not, as $H_u$ receives exactly $\delta(u)$ tokens from the incident arc-gadgets onto its purple nodes (Condition~B2).
No token is moved to a gray, purple, or brown node (Condition~B1).
Condition~B3 is satisfied by the token slides from Step 2 and since a token was placed on a teal node only if it has an token-free orange node at distance five.
For each arc-gadget, we slide the tokens on all orange nodes of exactly one sub-ladder of the gadget toward the vertex-gadgets (conditions (B4) and (B5)).
The number of token slides sums to $b = \Delta + \sum_{u \in V(D)} \delta(u) \cdot (4\Sigma(t+1) - \delta(u))$.
Thus, $((H,\mathcal{C}), S, b, \phi)$ is a yes-instance of \MSOD.
\end{proof}

To prove the backward direction, we first use the construction and the constraints on the formula to show that any solution can be transformed into a solution of a restricted form. We then make use of the restrictions on the solution to demonstrate that a solution to a yes-instance $((H,\mathcal{C}), S, b, \phi)$ of \MSOD\  can be used to form a solution to $(D, \delta, \lambda)$.

We first define a \emph{minimal solution} as one that has the minimum number of token slides and of all solutions that have the minimum number of token slides, one with the minimum number of tokens that slide. 
Since any solution must satisfy conditions (B1) through (B5), tokens are moved from selected orange nodes and all brown nodes to selected teal nodes and all yellow nodes. In the absence of unneeded token slides, we obtain the following:

\begin{claim}\label{obs:cor-73}
In a minimal solution, each token that slides does so from an orange or brown node to a teal or yellow node.
\end{claim}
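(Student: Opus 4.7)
The plan is to combine a tight counting argument over the structure of solutions with a re-routing argument that exploits the tiebreaker clause of minimality. First, I use conditions (B1)--(B5) to pin down the structure of any final configuration $T$ satisfying $\phi$: (B1) excludes tokens on gray, purple, and brown nodes; (B2) forces every yellow node into $T$; (B4) combined with (B5) forces that for each arc-gadget exactly one sub-ladder is emptied (both sides) and that every other sub-ladder is fully occupied on its orange nodes; and (B3) forces the teal nodes in $T$ to be precisely the unique rail-neighbors (the orange node at distance five) of the emptied orange nodes. A count then yields that these three contributions---all yellow nodes, one teal per emptied orange node, and all orange nodes in non-emptied sub-ladders---sum to exactly $|S|=\Delta+(\text{total orange})$, matching the initial token count. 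Hence $T$ occupies \emph{only} yellow, teal, and non-emptied orange nodes; in particular, no token in $T$ lies on a black node (or on any color outside these three).

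Since the only initial token positions are orange and brown nodes, the starting endpoint of every sliding token is automatically orange or brown, so only the endpoint remains to be controlled. By the structural count, a sliding token must end on a yellow, teal, or orange node; I then rule out the orange case. Assume for contradiction that in some minimal solution a sliding token $t$ ends at an orange node $h$. By the count, $h$ lies in a non-emptied sub-ladder and therefore carried a token $t_h$ initially; since only one token can occupy $h$ in $T$, the token $t_h$ itself must slide along some trajectory $\pi_h$ to a vertex $d\neq h$. We re-route as follows: keep $t_h$ stationary at $h$, and let $t$ follow its original trajectory until it arrives at $h$ and then continue along $\pi_h$ to reach $d$. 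Because the discovery token-sliding model permits several tokens to co-occupy a vertex during the transformation, this merged schedule is legal: when $t$ arrives at $h$ it simply shares $h$ with $t_h$ momentarily before sliding onward. The resulting transformation has the same final set $T$ and the same total number of slides, but strictly one fewer sliding token, contradicting the second (tiebreaker) minimality criterion.

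The main obstacle is the tightness of the counting step, which hinges on each teal node having a \emph{unique} orange node at distance five in $H$. This uniqueness follows from the way rungs of length four are added inside each vertex-gadget together with the type-based one-to-one attachment of orange nodes in a sub-ladder to purple nodes on the connected rail; any alternative path of length five between a teal and an orange node is ruled out by the lengths of the rungs in both gadget kinds and by the absence of alternative short connections between the rails. Once the counting is tight, the re-routing argument is self-contained and uses only the permissibility of transient multi-occupancy in the discovery model together with straightforward re-scheduling of slides.
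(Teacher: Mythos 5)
Your proof is correct and takes essentially the same route as the paper, which presents the claim as an immediate consequence of conditions (B1)--(B5) together with minimality (``in the absence of unneeded token slides''): the conditions force sliding tokens to leave only orange and brown nodes and to end only on teal and yellow nodes, and the tiebreaker on the number of sliding tokens excludes a sliding token terminating on an occupied orange node. Your counting step ruling out black nodes and the re-routing exchange simply make this explicit; the only case you leave implicit (a sliding token whose trajectory returns to its starting node, i.e.\ $t=t_h$, or $t_h$ ending back at $h$) is immediately excluded by minimality of the number of slides.
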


To better describe the movement of tokens, we define the \emph{supernode} for a vertex $u \in V(D)$ to consist of the union of the following four sets of nodes:
\begin{itemize}
\item{} \emph{the orange nodes of the supernode}, that is, each orange node that is token-free in $T$ and adjacent to a node of $H_u$;  
\item{} \emph{the teal nodes of the supernode}, that is, each teal node that is the mirror of an orange node of the supernode;
\item{} \emph{the yellow nodes of the supernode}, that is, all yellow nodes of $H_u$; and
\item{} \emph{the brown nodes of the supernode}, that is, all brown nodes of $H_u$.
\end{itemize}  

We can view the movement of a token from one node to another as either an \emph{external arc} between supernodes (if the nodes are in different supernodes) or an \emph{internal arc} within a supernode (if the nodes are in the same supernode). We show that there cannot be any external arcs (\Cref{lem:no-external}), and then that the movements within supernodes are restricted to follow certain types of paths (\Cref{lem:logic-solution-discovery-bandwidth-hardness-backward}), 
which can be used form a solution to $(D, \delta, \lambda)$.

The conditions ensure that the numbers of tokens removed from the brown and orange nodes of the supernode are equal to the numbers of tokens added to the yellow and teal nodes of the same supernode.
Thus, if a supernode ``loses'' a token through an outgoing external arc, it must also ``gain'' a token through an incoming external arc:

\begin{claim}\label{obs:balanced}
In any supernode, the number of arcs leaving orange nodes is equal to the number of arcs arriving at teal nodes, and the number of arcs leaving brown nodes is equal to the number of arcs arriving at yellow nodes.
\end{claim}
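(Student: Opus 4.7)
The plan is to establish both equalities by a simple token-conservation argument: each of the four node classes in a supernode has a pre-determined number of tokens entering or leaving it, and the four counts match up in pairs thanks to the bijection between a supernode's orange nodes and its teal mirrors.

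Fix a supernode associated with a vertex $u \in V(D)$. First I would count arcs leaving the supernode's orange nodes. Every token-bearing orange node is initially in $S$, and by the definition of ``orange nodes of the supernode'' each such node is token-free in $T$; hence exactly one token departs from each of them, giving exactly $|\text{orange nodes of the supernode}|$ outgoing arcs. Next I would count arcs arriving at the supernode's teal nodes. Teal nodes are never in $S$, so each starts empty, and by the definition of the supernode's teal nodes together with Condition~(B3) every such teal node must hold a token in $T$; thus each receives exactly one incoming arc, giving exactly $|\text{teal nodes of the supernode}|$ incoming arcs. The first equality then follows from the structural fact that the mirror map -- sending an orange node to the teal endpoint of the unique rung passing through its adjacent purple rail node in $H_u$ -- is a bijection between the supernode's orange and teal nodes, since, by construction, distinct orange nodes of an arc-gadget are attached to distinct purple nodes on the connected side rail of $H_u$, each of which has a unique rail neighbor on the isolated side rail.

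The brown--yellow balance is obtained by the same scheme: each brown reservoir node of $H_u$ is in $S$ and, by (B1), has no token in $T$, contributing exactly one outgoing arc; each yellow demand node of $H_u$ is initially empty and, by (B2), occupied in $T$, contributing exactly one incoming arc; and by construction $H_u$ contains exactly $\delta(u)$ brown reservoir nodes and exactly $\delta(u)$ yellow demand nodes, so both totals equal $\delta(u)$. I do not anticipate any real obstacle: the argument is essentially just bookkeeping at each node, and the only small point to verify is that the mirror map really is a bijection between the supernode's orange and teal halves, which is immediate from the ladder's one-to-one rung correspondence.
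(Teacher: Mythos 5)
Your proof is correct and takes essentially the same route as the paper: the claim is presented there as an immediate consequence of conditions (B1)--(B3) (and (B2) for the brown--yellow part) together with the definition of a supernode, i.e., exactly the token-conservation bookkeeping you spell out, with the orange--teal balance coming from the mirror bijection and the brown--yellow balance from both sets having size $\delta(u)$. Your explicit check that the mirror map is injective is fine; just note it needs distinctness of purple attachment nodes also \emph{across} different arc-gadgets incident to $u$, which holds because distinct neighbors of $u$ are assigned distinct types $a,b,c,d$.
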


We make use of \Cref{obs:balanced} to form sequences of arcs.
An orange-teal arc where the orange and teal nodes are of the same supernode but not mirrors is called a \emph{bad internal arc}.
The arc \emph{succeeding} a brown-teal or an orange-teal arc (where the orange and teal nodes are not mirrors) is the arc leaving from the teal node's mirror.
An arc \emph{succeeding} a brown-yellow or orange-yellow arc is any external arc leaving the supernode containing the yellow node.

Before proving \Cref{lem:no-external}, we further transform a minimal solution to obtain a few useful properties.
We use simple modifications that change neither the validity of a solution nor the total number of required token slides.
We call a path traversed by a token between an orange node and a node of another color in the same supernode \emph{friendly} if it first passes through the orange node's purple neighbor.
Similarly, we call a path traversed by a token to a teal node \emph{friendly} if it last passes through the teal node's purple neighbor and connecting rung. 

\begin{claim}\label{obs:logic-solution-discovery-bandwidth-hardness-backward-order}
  None of the following modifications changes the validity of a solution or the number of token slides:
  \begin{enumerate}
    \item{} reordering token slides;
    \item{} making a path traversed by a token between an orange node and a node of another color in the same supernode friendly;
    \item{} making a path traversed by a token to a teal node after passing through an arc-gadget friendly; or
    \item{} swapping the destinations of two tokens that pass through the same node.
  \end{enumerate}
\end{claim}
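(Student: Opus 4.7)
The plan is to address each of the four modification types in turn, since they are largely independent.

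For item 1 (reordering), I would argue that in the discovery token sliding model (where multiple tokens may coexist on a vertex during the transformation), the validity of a solution depends only on the resulting final configuration and the total slide count. Thus, any permutation of the slide sequence that remains consistent with each token's walk (i.e., each individual slide is still between adjacent nodes along its token's trajectory) yields a valid solution of the same length. Since $\phi$ is evaluated only on the final set of occupied nodes, the resulting transformation still satisfies $\phi$.

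For items 2 and 3, the argument relies on a structural property of the construction: each orange node of the supernode for $u$ is adjacent to exactly one purple node in $H_u$ (its unique gateway into the vertex-gadget), and each teal node in the supernode can be reached from the arc-gadget side only through its mirror's purple neighbor and the connecting rung. Hence, any walk taken by a token between an orange node and a node of another color in the same supernode must at some point traverse the orange node's purple neighbor; likewise, any walk that arrives at a teal node via an arc-gadget must at some point traverse that teal node's purple mirror and rung. Making such a walk \emph{friendly} simply reorders the slides so that these mandatory steps occur first (item 2) or last (item 3); the multiset of traversed edges is unchanged, hence so is the number of token slides. Validity is preserved because, as in item 1, only the endpoints of each token's walk matter for the final configuration.

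For item 4 (swapping destinations), I would use the indistinguishability of tokens. Suppose tokens $t_1$ and $t_2$ both visit a common node $w$ during the transformation. Define a new assignment of walks in which $t_1$ follows its original trajectory up to $w$ and then continues along $t_2$'s remaining walk, and symmetrically for $t_2$. The multiset of slides is identical, so the slide count is unchanged, and the set of nodes ending up occupied is also identical. Because $\phi$ sees only vertex occupancy and not token identity, the resulting transformation remains a valid solution.

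The main obstacle is the structural argument underlying items 2 and 3: one must verify from the explicit gadget definition that the ``unique gateway'' claim holds (namely, that no alternate shorter connection between an orange node and the interior of its supernode's vertex-gadget exists via a different purple node or via a detour through another arc-gadget). This follows from the fact that each orange node in $H_{uv}$ has exactly one neighbor in $H_u$ (its designated purple node of the correct type) and that distinct arc-gadgets attach to disjoint type classes of purple nodes in $H_u$, but the verification must be stated carefully to rule out roundabout alternatives.
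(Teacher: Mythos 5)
Your items (1) and (4) are fine: reordering is harmless because only the final configuration is tested against $\phi$ and intermediate multiple occupancy is allowed in the discovery sliding model, and swapping destinations is the standard walk-splicing argument at the common node, which leaves the multiset of slides and the final occupied set unchanged. (For the record, the paper states this claim without an explicit proof, so the comparison is against the reasoning its construction is designed to support.)

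The genuine gap is in your justification of items (2) and (3). You assert that any walk of a token between an orange node and a differently colored node of the same supernode \emph{must} traverse that orange node's unique purple neighbor, and likewise that any walk arriving at a teal node from an arc-gadget must traverse that teal node's own purple rail neighbor and its rung; from this you conclude that making the walk friendly is ``simply a reordering'' with an unchanged multiset of edges. This structural premise is false. Although each orange node has exactly one neighbor in $H_u$, the token sitting on it can first slide along the arc-gadget's side rail (through the gray spacer nodes) to another orange node or to the black rung node of its sub-ladder, each of which is \emph{also} adjacent to a purple node of $H_u$; so the walk may enter the vertex-gadget through a different gateway and never touch the original orange node's purple neighbor. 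Similarly, a token can cross from the connected to the isolated side rail over any rung and then travel along the isolated rail to the teal target, bypassing the target's own rung. Consequently, making a walk friendly is a genuine \emph{rerouting}, not a permutation of the same edges, and the claim that the slide count is unchanged has to come from the length-matching built into the construction: consecutive orange nodes are separated by three spacers, exactly matching the distance~$4$ between consecutive same-type purple nodes on the connected rail, and all rungs have the same length between parallel rails, so the friendly detour (enter immediately at the orange node's purple neighbor, or cross only at the target's rung) has exactly the same length as the original route. Your own closing remark shows you sensed this issue, but the ``unique gateway'' fact you invoke does not close it; the correct argument must replace ``mandatory traversal'' with this equal-length rerouting computation.
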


We define a \emph{canonical solution} to be a minimal solution $(T, \vec{T})$ for the instance $((H,\mathcal{C}), S,$ $b, \phi)$ of \MSOD\ such that:
  \begin{enumerate}
    \item{} for each vertex $u \in V(D)$, exactly $\delta(u)$ tokens slide from the brown nodes of the vertex-gadget for $u$ to the transit node for $u$, with these token slides appearing first in $\vec{T}$; 
    \item{} if there is an external arc to a teal node and an orange-teal arc leaving from its mirror, that arc is an external arc; 
    \item{} there is no external arc to a teal node and an internal orange-yellow arc leaving from its mirror; and 
    \item{} if we have an internal orange-teal arc, the teal mirror of the arc's orange node is part of an internal brown-teal arc.
  \end{enumerate}

\pagebreak
\begin{lemma}\label{lem:logic-solution-discovery-bandwidth-hardness-backward-firstslides}
If $((H,\mathcal{C}), S, b, \phi)$ is a yes-instance of \MSOD, then it has a canonical solution $(T, \vec{T})$.
\end{lemma}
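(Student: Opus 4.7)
The plan is to start from an arbitrary minimal solution $(T_0,\vec{T}_0)$ (which exists because we assume a yes-instance) and transform it into a canonical one by applying only the operations guaranteed by Claim~\ref{obs:logic-solution-discovery-bandwidth-hardness-backward-order}, each of which preserves validity, the total number of slides, and the number of moving tokens. I will establish the four canonical properties one at a time in the order they are listed; each subsequent step will be designed so as not to disturb the properties already achieved.

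First, I would prove Property~1. Fix a vertex-gadget $H_u$. By (B1) every brown reservoir node of $H_u$ must be vacated, and by Claim~\ref{obs:cor-73} every such token slides to a teal or yellow node. The brown region has only two classes of exits: a single slide along the isolated side rail to the transit node for $u$, or a sequence of at least four slides along a rung through three gray spacer nodes to a purple node on the connected side rail. Using Claim~\ref{obs:balanced}, together with the tight budget $b = \Delta + \sum_u \delta(u)(4\Sigma(t+1)-\delta(u))$, a routing-cost comparison shows that any brown token routed through a rung forces strictly more slides than the budget allows, since all yellow nodes and all mirrors of emptied orange nodes still need to be fed. Hence all $\delta(u)$ brown tokens of $H_u$ must pass through the transit node as their initial move, and using modification~(1) I reorder $\vec{T}_0$ so that these $\delta(u)$ brown-to-transit slides appear first for every $u$, establishing Property~1.

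For Properties~2 and~3, I would first apply modifications~(2) and~(3) to normalize all relevant paths to be friendly, then use local swaps. Suppose an external arc arrives at a teal node $t_1$, and let $o_1$ be its orange mirror. If $o_1$ sends its token to another internal teal $t_2$ (Property~2 violated), then after friendly normalization the incoming external token and $o_1$'s outgoing token pass through a common node (the purple rail neighbor of $t_1$), so modification~(4) swaps their destinations, making $o_1$'s arc external while the total slide count is unchanged. The same style of swap handles Property~3 when the outgoing arc from $o_1$ is an internal orange-yellow arc. For Property~4, if an internal orange-teal arc $o\to t$ with $t$ not the mirror of $o$ exists, then the mirror of $o$ must still be filled in $T$ by (B3); the only tokens available to fill it, by Claim~\ref{obs:cor-73} and Claim~\ref{obs:balanced}, come from brown nodes, so an internal brown-teal arc to the mirror of $o$ must already be present in the transformed solution.

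The hardest part will be Property~1: one must carefully argue that the budget is tight enough to forbid every non-trivial detour through a rung, while simultaneously tracking the interaction between the $\delta(u)$ brown tokens inside $H_u$ and the incoming tokens arriving at $H_u$ from its incident arc-gadgets. The remaining swap arguments for Properties~2--4 are local and largely routine, but I would need to perform them in an order that does not undo Property~1 (in particular, no modification should insert slides before the initial brown-to-transit slides, which means the swaps of Step~2 and Step~3 must be confined to the suffix of $\vec{T}_0$ following the reordered prefix).
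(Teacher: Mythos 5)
Your overall skeleton (start from a minimal solution, then repeatedly apply the four length-preserving modifications of Claim~\ref{obs:logic-solution-discovery-bandwidth-hardness-backward-order} to enforce the four canonical conditions one by one) is the same as the paper's, and your treatment of Conditions~2 and~3 (make the two paths friendly, note they share the purple rail neighbor of the teal node, swap destinations) coincides with the paper's argument. The problems are with Conditions~1 and~4. For Condition~1 you replace the paper's argument by a budget argument, claiming that the tight budget already forces every brown token of $H_u$ to pass through the transit node in any minimal solution. This is both unjustified at this point and unnecessary. The tight per-delivery accounting you invoke (``all yellow nodes and all mirrors of emptied orange nodes still need to be fed'') is exactly the accounting carried out in \Cref{lem:logic-solution-discovery-bandwidth-hardness-backward}, and it is only valid after one knows that no external arcs exist (\Cref{lem:no-external}) and that the number of emptied orange nodes attached to $H_u$ equals $\delta(u)$; neither is available when you are still constructing the canonical solution, and a minimal solution may a priori route tokens across supernodes, so your ``routing-cost comparison'' is not a proof as it stands. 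Moreover, the conclusion you aim for is stronger than what Condition~1 asserts: a minimal solution can send a brown token across a rung inside sub-ladder $t+1$ at no extra cost, so it need not pass through the transit node. The paper's (much simpler) argument is that by (B1) and (B3) no token may remain in sub-ladder $t+1$, so each brown token must leave it, and its path can be \emph{rerouted} at equal length so that it goes up the isolated side rail through the transit node before crossing a rung; then modification~(1) moves these slides to the front. You are missing this rerouting idea.

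For Condition~4 your argument is also incomplete. You claim that, by Claims~\ref{obs:cor-73} and~\ref{obs:balanced}, the only tokens that can fill the teal mirror of the orange node of an internal orange-teal arc come from brown nodes, so an internal brown-teal arc ``must already be present.'' Neither claim rules out the two other possibilities: the mirror could be fed by an external arc (this is excluded only because it would contradict Condition~2, which you must have already established), or by an \emph{internal} orange-teal arc coming from a different orange node across a rung (a bad internal arc). The latter case really occurs a priori and is why the paper needs a further iterated swap: make both orange-teal paths friendly, note they pass through the purple rail neighbor of the mirror, swap destinations, and repeat until the mirror is fed by an internal brown-teal arc. Without handling this case, Condition~4 is not established.
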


\begin{proof}
Given conditions (B1) and (B3), we know that the tokens on the brown nodes of the vertex-gadget for $u$ for a vertex $u \in V(D)$ do not end up on any node of the sub-ladder $t+1$ for the vertex-gadget for $u$. 
Thus, these tokens will slide (or can be rerouted to slide without increasing the number of token slides) at least towards the transit node for $u$.
It also follows from \Cref{obs:logic-solution-discovery-bandwidth-hardness-backward-order}, Point 1, that the token slides that appear first in $\vec{T}$ are the token slides of the $\delta(u)$ tokens from the brown nodes of the vertex-gadget for $u$ to the transit node for $u$, for each $u \in V(D)$.
We have now satisfied Condition 1 of the canonical solution.

Suppose we have an external arc to a teal node and a bad orange-teal internal arc leaving from its mirror. 
By \Cref{obs:logic-solution-discovery-bandwidth-hardness-backward-order}, points (2) and (3), the paths traversed by the associated tokens can be made friendly. 
Since the paths traverse the same purple rail neighbor of the teal node, by \Cref{obs:logic-solution-discovery-bandwidth-hardness-backward-order}, Point 4, the destinations of the two tokens can be swapped. 
We repeat the argument until we find an external arc to a teal node with no bad orange-teal internal arc leaving from its mirror or an external arc to a yellow node (Condition 2).

If there is an external arc to a teal node and an internal orange-yellow arc leaving from its mirror, 
by \Cref{obs:logic-solution-discovery-bandwidth-hardness-backward-order}, points (2) and (3), the paths traversed by tokens can be made friendly. 
Since the paths traverse the same purple rail neighbor of the teal node, by \Cref{obs:logic-solution-discovery-bandwidth-hardness-backward-order}, Point 4, we can swap the destinations of the two tokens, satisfying Condition 3. 

Finally, suppose we have an internal orange-teal arc. 
Condition 2 is violated if the teal mirror~$tm$ of this arc's orange node is part of an external arc.
If $tm$ is part of an internal orange-teal arc, by \Cref{obs:logic-solution-discovery-bandwidth-hardness-backward-order}, points (2) and (3), we can make the paths traversed by the tokens of both internal orange-teal arcs friendly. 
Since both friendly paths traverse the same purple node (the rail neighbor of $tm$), by \Cref{obs:logic-solution-discovery-bandwidth-hardness-backward-order}, Point 4, we can swap the destinations of the two tokens. 
We repeat the argument until the teal node of the resulting internal orange-teal arc is part of an internal brown-teal arc, thus satisfying Condition 4.
\end{proof}

We show in \Cref{lem:no-external} that any cycle of external arcs can be removed, resulting in a smaller number of token slides, and hence contradicting the minimality of a canonical solution. We do so by starting with an external arc and using it to build a cycle.
The following claim is a direct consequence of \Cref{obs:balanced}.

\begin{claim}\label{obs:arctocycle}
If there is a single external arc in a canonical solution, there must be a cycle with one or more external arcs.
\end{claim}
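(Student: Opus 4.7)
The plan is to form the \emph{external-arc digraph} $M$ whose vertices are the supernodes and whose arcs are the external arcs of the canonical solution, and to show that the external in-degree equals the external out-degree at every vertex of $M$. The statement then follows from the standard fact that a finite directed multigraph with equal in- and out-degree at every vertex must contain a directed cycle as soon as it has any arc at all.

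The first step is to establish the balance identity at a single supernode $V$. By \Cref{obs:cor-73} every arc has an orange or brown source and a teal or yellow target, and the definition of a supernode (together with the rail-neighbor identification of orange and teal nodes) places both endpoints in well-defined supernodes, so each arc is unambiguously either internal to one supernode or external between two supernodes. Applying \Cref{obs:balanced} at $V$ yields ``arcs leaving orange at $V$ equals arcs arriving at teal at $V$'' and ``arcs leaving brown at $V$ equals arcs arriving at yellow at $V$''. Expanding each of the four sides into its internal and external parts and summing the two identities, the four kinds of internal arcs of $V$ (orange-to-teal, orange-to-yellow, brown-to-teal and brown-to-yellow within $V$) each appear exactly once on each side of the sum and cancel, leaving the clean equality
\[
(\text{external-orange-out at }V) + (\text{external-brown-out at }V) = (\text{external-teal-in at }V) + (\text{external-yellow-in at }V),
\]
that is, the external in-degree of $V$ in $M$ equals its external out-degree.

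The final step is a walk argument. Starting a walk at any supernode from which an external arc departs, every time we visit a vertex other than the start we enter once and, by the balance at that vertex, still have an unused out-edge available, so the walk can always be extended; by finiteness of $M$ the walk must eventually revisit some vertex, and the portion of the walk between the two visits is a directed cycle in $M$, which corresponds to a cycle of external arcs. Since the hypothesis supplies at least one external arc, such a walk exists and the desired cycle is obtained. I do not anticipate a serious obstacle; the only delicate point is the bookkeeping in the balance identity, which amounts to checking that each internal arc contributes to exactly the matching internal term on each side of the sum of the two equalities from \Cref{obs:balanced}.
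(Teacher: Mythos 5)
Your proof is correct and follows the route the paper intends: the paper states this claim as a direct consequence of \Cref{obs:balanced}, and you simply make that explicit by summing the two balance equalities so the internal arcs cancel, obtaining equal external in- and out-degree at every supernode, and then extracting a directed cycle of external arcs by the standard walk argument. No discrepancy with the paper's (implicit) argument.
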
  

We capture the movement of tokens between supernodes using an auxiliary graph~$D^\pm$ containing supernodes and the external arcs between them.  
We show that no external arcs are possible by demonstrating that any cycle in $D^\pm$ can be transformed into a smaller solution.

We construct cycles in $D^\pm$ by considering the colors of the nodes (of supernodes) from and to which the token is moving; by \Cref{obs:cor-73} each arc is orange-teal, orange-yellow, brown-teal, or brown-yellow. 
Using a proof by contradiction, we show that if a canonical solution contains an external arc, we can find in $D^\pm$ a cycle consisting entirely of orange-teal external arcs or a cycle built out of smaller pieces. 
Specifically, we call a sequence of external arcs a \emph{brown-yellow segment} if the first arc in the sequence starts at (i) a brown node or (ii) an orange node whose teal mirror is part of an internal brown-teal arc, or (iii) an orange node whose teal mirror is part of an internal orange-teal arc whose teal node is part of an internal brown-teal arc, and the last arc in the sequence ends at a yellow node; in the sequence, each arc is followed by a succeeding arc.

\begin{claim}\label{obs:brownyellow}
If we can find a single brown-yellow segment in $D^\pm$, by \Cref{obs:arctocycle} we must find a cycle formed of one or more brown-yellow segments.
\end{claim}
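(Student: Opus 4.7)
The plan is to iteratively construct a chain of successive brown-yellow segments and appeal to finiteness to close the chain into a cycle, with \Cref{obs:arctocycle} supplying the final cycle-extraction step. Starting from the given brown-yellow segment $\sigma_1$, I would produce a succeeding brown-yellow segment $\sigma_2$, then $\sigma_3$, and so on, tracking in $D^{\pm}$ the sequence of supernodes these segments link.

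The pivotal step is showing that each brown-yellow segment admits a successor. A segment $\sigma_i$ ends at a yellow node in some supernode $S_i$, and \Cref{obs:balanced} applied to $S_i$ forces the number of arcs leaving brown nodes of $S_i$ to equal the number of arcs entering yellow nodes of $S_i$; since the terminal arc of $\sigma_i$ is one such entering arc, at least one arc must leave a brown node of $S_i$. If this arc is external, it starts $\sigma_{i+1}$ directly under case~(i) of the brown-yellow segment definition. If it is internal, I would trace it inside $S_i$: an internal brown-yellow arc is handled by recursion on the arc entering its yellow destination, while an internal brown-teal arc triggers a second application of \Cref{obs:balanced} to the orange/teal side of $S_i$, combined with canonical-solution conditions~2--4 from \Cref{lem:logic-solution-discovery-bandwidth-hardness-backward-firstslides}. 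Those conditions pin down exactly the internal-arc pattern at the teal mirror (an internal brown-teal arc alone, or coupled with an internal orange-teal arc on the mirror) that accompanies an outgoing external arc from an orange node of $S_i$, yielding an external arc fitting condition~(ii) or~(iii) and hence starting $\sigma_{i+1}$.

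Once the successor map on brown-yellow segments is established, the argument finishes by finiteness: $D^{\pm}$ has only finitely many supernodes, so iterating the map must revisit some supernode, and the subsequence between two revisits is a closed walk in $D^{\pm}$ composed entirely of brown-yellow segments. \Cref{obs:arctocycle} converts this closed walk into a cycle of external arcs, and by construction this cycle decomposes into brown-yellow segments. I expect the delicate step to be the internal-arc case of the successor construction: one has to show that the chain of internal arcs cannot loop indefinitely inside $S_i$ without emitting an external arc whose origin satisfies (i)--(iii). The canonical-solution conditions were engineered with exactly this invariant in mind, so the case analysis should terminate cleanly, but bookkeeping the interplay between Conditions~2, 3, and~4 at shared teal mirrors will require care.
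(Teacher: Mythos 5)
The paper offers no explicit proof of this claim—it is asserted as a consequence of \Cref{obs:arctocycle}—and your chaining-plus-finiteness plan is the natural elaboration of what is intended, so the overall route is fine. However, one step of your successor construction fails as stated. You claim that an internal brown-teal arc at the terminal supernode, together with \Cref{obs:balanced} and canonical conditions 2--4 of \Cref{lem:logic-solution-discovery-bandwidth-hardness-backward-firstslides}, yields an external arc from an orange node fitting case (ii) or (iii). That per-arc implication is false: conditions 2 and 3 constrain the arc leaving the mirror of a teal node only when that teal node receives an \emph{external} arc. Here the teal node receives an internal brown-teal arc, so its orange mirror may legally send its token internally to a yellow node of the same supernode (and after one further internal orange-teal hop, which condition 4 permits exactly once, the next orange mirror may do the same). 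Such a chain dies inside the supernode without ever emitting a qualifying external departure, so a single internal brown-teal arc does not ``trigger'' the desired external arc. The delicate point is also not ``looping indefinitely''---each teal node receives exactly one arc and condition 4 forbids two consecutive internal orange-teal arcs, so the chains are short and acyclic---but chains terminating at internal yellow arrivals.

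The repair is an aggregate count rather than a per-arc argument: each of the $\delta(u)$ brown departures of the terminal supernode starts a chain of internal arcs (brown-teal, possibly followed by one internal orange-teal, by condition 4) that ends either at a qualifying external departure, i.e.\ one of cases (i), (ii), (iii), or at an internal yellow arrival; these chains are pairwise disjoint because every teal and yellow node receives exactly one arc and every orange node sends exactly one. If no chain produced a qualifying external departure, then all $\delta(u)$ yellow arrivals of that supernode would be internal, contradicting the fact that the terminal arc of $\sigma_i$ is an external arrival at a yellow node. With that fixed, you should also make explicit the small forward argument you leave implicit: the sequence of succeeding external arcs started at the qualifying node really does end at a yellow node (and hence is a brown-yellow segment), because conditions 2 and 3 force the arc leaving the mirror of an externally hit teal node to be external, and arcs cannot repeat since each teal node has a unique incoming arc; finiteness then terminates the segment, and your finiteness-of-supernodes step closes the cycle as intended, matching how the claim is used in \Cref{lem:no-external}.
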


\noindent For both types of cycles, we show that the token slides can be modified to produce a solution with a smaller number of slides, contradicting the assumption that the solution is canonical.

We are now ready to prove the following lemma:

\begin{lemma}\label{lem:no-external}
No canonical solution contains an external arc.
\end{lemma}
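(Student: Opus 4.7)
The plan is a proof by contradiction: suppose $(T, \vec{T})$ is a canonical solution that contains an external arc. Starting from this arc, I would iteratively follow the succession relation in $D^\pm$. By the canonical conditions and \Cref{obs:balanced}, every time an arc lands at a teal node $t$, the orange mirror of $t$ in the same supernode must be the source of an outgoing arc (since teal in-arcs and orange out-arcs balance in each supernode), and every time an arc lands at a yellow node, the supernode contains at least one more outgoing external arc (since brown out-arcs and yellow in-arcs balance). The chain therefore never terminates, and by finiteness of $D^\pm$ it must loop back. Analyzing the succession rule, one of two structures must emerge: a cycle consisting entirely of external orange-teal arcs, or a brown-yellow segment which by \Cref{obs:brownyellow} extends to a cycle of brown-yellow segments.

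In the first case, I would shortcut each external arc of the cycle by replacing it with the internal arc from the same orange source to that source's own teal mirror inside its supernode. By \Cref{obs:balanced}, the multiset of orange nodes that become token-free and the multiset of teal nodes that become token-filled inside each supernode is preserved under this swap, so conditions (B3)-(B5) continue to hold while (B1) and (B2) are untouched; the result is still a valid solution. However, an external orange-teal arc must traverse a full rung and portions of two side rails, whereas the internal orange-to-mirror route inside a single vertex-gadget uses only the rung between the two rails through the rail neighbor. Summing over the cycle yields a solution with strictly fewer token slides, contradicting minimality.

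In the second case, I would treat each brown-yellow segment in the cycle individually. The three clauses in the definition of a brown-yellow segment were tailored so that canonical conditions (2)-(4) allow one to trace any initial orange source back to a genuine brown reservoir token through a chain of internal brown-teal and orange-teal arcs at mirror positions. Using this, I would reroute the brown token at the head of each segment to a yellow node inside its own supernode, and locally replace every intermediate teal filling that previously came from an external arc by the paired internal brown-teal arc that the canonical condition guarantees at the mirror. As in the first case, this preserves the set of token-free oranges, token-filled teals, emptied browns and filled yellows inside each supernode, hence (B1)-(B5), while strictly shortening the external portion of the sliding sequence and contradicting minimality.

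The main obstacle is the consistent bookkeeping in the second case: when one removes an external incoming arc at a teal node $t$ in a supernode $S$, one must simultaneously promote an internal brown-teal candidate at $t$ to a genuine brown-to-yellow internal route elsewhere in $S$ without creating a conflict, and this must be done in parallel along an entire brown-yellow segment of the cycle. The argument will hinge on the three-clause definition of a brown-yellow segment matching exactly the three layers of internal chaining permitted by the canonical conditions, so that the per-supernode local swaps compose into a single globally valid rerouting whose total cost is strictly smaller than $\vec{T}$.
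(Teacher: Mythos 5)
Your proposal follows essentially the same route as the paper: the same dichotomy (a cycle of external orange--teal arcs versus a cycle of brown--yellow segments obtained via the succession relation and the balance claim), the same internal rerouting (each orange token to its own teal mirror, each brown token to a yellow node of its own supernode), and the same contradiction with minimality of a canonical solution. The bookkeeping obstacle you flag in the second case is resolved in the paper by observing that the cycle itself already consists of matched brown--yellow pairs and matched mirror pairs (since every external arc into a teal node is succeeded by an external arc out of its mirror, using canonical conditions (2)--(4) to trace orange--yellow arcs back to a brown source), so the rerouted solution reaches exactly the same final configuration with strictly fewer slides.
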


\begin{proof}
  
Suppose to the contrary that a canonical solution contains an external arc. We consider two possible cases:


\noindent{\bf Case 1:} There is a cycle in $D^\pm$ formed entirely of orange-teal arcs.

  Since the arc succeeding an external orange-teal arc starts at the teal node's orange mirror and is external, the cycle contains matched pairs of mirrors.
  We create a new solution in which, for each pair, the token slides from the orange node to its mirror.
  Since we no longer require tokens to move along rungs of an arc-gadget (as was required in the solution that forms the cycle in $D^\pm$), we have reduced the number of token slides while preserving the number of tokens that slide. 
  Doing so violates the assumption that the solution was canonical, as it was not minimal.

\noindent{\bf Case 2:}  There is no cycle in $D^\pm$ formed entirely of orange-teal arcs.

  It suffices to demonstrate that we can use any of the other types of external arcs (brown-yellow, brown-teal, or orange-yellow) to form a brown-yellow segment. 
  By \Cref{obs:brownyellow} we can then conclude that we have a cycle formed of brown-yellow segments (the cycle gives us matched pairs where each pair consists of a brown and a yellow node of a supernode in the cycle). 
  Since in the cycle, each external arc to a teal node is followed by an external arc leaving from its mirror, the cycle also only contains matched pairs of mirrors.
  We can then form a solution using a smaller number of token slides by, for each brown-yellow pair, sending the token on the brown node to the matched yellow node in the same supernode, and for each pair of matched mirrors, send the token on the orange node to its matched teal mirror. 
  Since we have decreased the number of token slides (by avoiding the traversal of rungs of at least one arc-gadget), the solution was not canonical.
  
  We show that in each remaining case we can obtain a brown-yellow segment.

\noindent{\bf Case 2a:} There is a brown-yellow arc in $D^\pm$.

  In this case, the single arc is a brown-yellow segment.

\noindent{\bf Case 2b:} There is a brown-teal arc in $D^\pm$.

  If there is a brown-teal arc in $D^\pm$, then the succeeding arc starts at the orange mirror of the teal node.
  If the succeeding arc is an orange-yellow arc, we either have a brown-yellow segment (if the brown and yellow nodes are in different supernodes) or have completed a cycle (if the brown and yellow nodes are in the same supernode).
  Otherwise, we add the orange-teal arc, and continue adding succeeding orange-teal arcs until we eventually reach an arc with a yellow node (and hence form either a brown-yellow segment or a cycle). 

\noindent{\bf Case 2c:}  There is an orange-yellow arc in $D^\pm$. 

    If there is an orange-yellow arc in $D^\pm$, it is the succeeding arc of either a brown-teal arc or an orange-teal arc (possible internal).
    As the case where we have a brown-teal external arc was covered in Case 2b, we consider the others. 
    Consider an orange-teal external arc, and observe that it too is the succeeding arc of either a brown-teal arc or an orange-teal arc, which can be internal. 
    We continue the process of moving backward through external arcs until we find an internal orange-teal arc or an internal brown-teal arc.
    If we have an internal brown-teal arc, we form either a brown-yellow segment or a cycle.
    If we have an internal orange-teal arc, in a canonical solution, the teal mirror of its orange node is part of an internal brown-teal arc. 
    Thus, we also form either a brown-yellow segment or a cycle.
\end{proof}  

We are now ready to complete our proof of the backward direction.

\begin{lemma}\label{lem:logic-solution-discovery-bandwidth-hardness-backward}
If $((H,\mathcal{C}), S, b, \phi)$ is a yes-instance of \MSOD, then \mbox{$(D,\\ \delta, \lambda)$} is a yes-instance of \textsc{Planar Arc Supply}.
\end{lemma}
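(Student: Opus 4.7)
The plan is to start with a canonical solution $(T,\vec{T})$ for $((H,\mathcal{C}), S, b, \phi)$ guaranteed by \Cref{lem:logic-solution-discovery-bandwidth-hardness-backward-firstslides} and to invoke \Cref{lem:no-external}, so every arc of $\vec{T}$ is confined to a single supernode. By conditions~(B4) and~(B5), for each arc $uv\in A(D)$ exactly one sub-ladder $i_{uv}$ of $H_{uv}$ is emptied, so I would set $\rho_x(uv):=x^{i_{uv}}_{uv}$ and $\rho_y(uv):=y^{i_{uv}}_{uv}$; the pair-membership condition $(\rho_x(uv),\rho_y(uv))\in L_{uv}$ is then immediate from the construction. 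Writing $k_u$ for the number of orange nodes in the supernode of $u$, the absence of external arcs gives
\[
k_u \;=\; \sum_{v\in N^{+}(u)}\rho_x(uv)\;+\;\sum_{v\in N^{-}(u)}\rho_y(vu),
\]
so the lemma reduces to showing that $k_u=\delta(u)$ for every $u\in V(D)$.

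The heart of the proof is a tight budget accounting. Each orange token that enters $H_u$ uses exactly one slide through its unique edge to a purple neighbour, contributing $\sum_u k_u$ entry slides in total. All remaining slides happen inside the vertex-gadgets, where the $k_u$ tokens now on purple nodes together with the $\delta(u)$ tokens on brown nodes must be routed so that, by~(B2) and~(B3), all $\delta(u)$ yellow nodes and exactly the $k_u$ teal mirrors of the emptied orange nodes end up occupied. I plan to prove the per-supernode lower bound
\[
\mathrm{cost}_u \;\ge\; k_u \;+\; \delta(u)\bigl(4\Sigma(t+1)-\delta(u)\bigr) \;+\; 4\,|k_u-\delta(u)|
\]
via two ingredients. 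First, a centre-of-mass computation along the two parallel rails shows that the sum of destination positions minus the sum of source positions equals $\delta(u)\bigl(4\Sigma(t+1)-\delta(u)\bigr)$ regardless of where orange enters, because the orange sources on the connected rail and the teal destinations on the isolated rail occupy mirrored positions and cancel; by the triangle inequality the total vertical cost is at least this signed sum. Second, flow conservation between the connected and isolated rails forces at least $|k_u-\delta(u)|$ full rung-crossings, each costing $4$ slides.

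Summing this lower bound over all $u$ and comparing with the global budget $b=\Delta+\sum_u \delta(u)\bigl(4\Sigma(t+1)-\delta(u)\bigr)$ yields
\[
4\sum_u |k_u-\delta(u)| \;\le\; \Delta-\sum_u k_u.
\]
Since $\sum_u |k_u-\delta(u)| \ge \bigl|\sum_u k_u - \Delta\bigr|$ by another application of the triangle inequality, the two inequalities jointly force $\sum_u k_u=\Delta$ and $k_u=\delta(u)$ for every $u$, which is precisely the demand equation, so $(\rho_x,\rho_y)$ is a valid assignment for $(D,\delta,\lambda)$. The main obstacle I anticipate is the separation step in the per-supernode bound: showing that for any walk-based assignment of sources to destinations one can additively split the cost into a vertical-displacement part along the two rails and a crossing part counted in whole rungs, even for ``mixed'' routings in which an orange$\to$yellow move and a brown$\to$teal move share the intermediate spacers of a single rung in opposite directions. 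I would handle this token-by-token, arguing that the horizontal slides contributed by a token on each rail are at least the absolute vertical displacement the token performs there, while every completed rail-crossing is independently charged its four slides.
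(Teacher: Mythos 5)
Your proposal is correct and shares the paper's skeleton (canonical solution via \Cref{lem:logic-solution-discovery-bandwidth-hardness-backward-firstslides}, confinement to supernodes via \Cref{lem:no-external} and \Cref{obs:cor-73}, and conditions (B4)--(B5) to read off one supply pair per arc), but it executes the decisive budget step by a genuinely different counting. The paper argues that filling a single yellow node of $H_u$ costs at least $4\Sigma(t+1)-\delta(u)+1$ slides and that this minimum is attained only by the specific brown-to-teal / orange-to-yellow pattern, then concludes (somewhat tersely) that the exact budget forces the number of emptied orange nodes attached to $H_u$ to be $\delta(u)$. You instead prove an aggregate per-supernode lower bound $\mathrm{cost}_u \ge k_u + \delta(u)\bigl(4\Sigma(t+1)-\delta(u)\bigr) + 4\lvert k_u-\delta(u)\rvert$, exploiting that orange entry points and their teal mirrors sit at equal heights (so their contributions cancel in the net vertical displacement) and that the flow imbalance between the two rails forces $\lvert k_u-\delta(u)\rvert$ rung crossings at cost $4$ each; summing against $b$ and applying the triangle inequality then pins $k_u=\delta(u)$ for every $u$. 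What your route buys is an explicit, quantitative reason why $k_u=\delta(u)$ is forced (the paper leaves this implicit in the "only manner" claim); what it costs is exactly the separation step you flag: the additive split into entry, vertical, and crossing slides needs either a $1$-Lipschitz height potential extended to the arc-gadget rails (which works because consecutive orange nodes and their attachment points are both spaced four apart, and rungs and attachment edges are level sets) or, more simply, the friendly-path normalization from \Cref{obs:logic-solution-discovery-bandwidth-hardness-backward-order}, which also repairs your slightly-too-strong assertion that each entering orange token uses \emph{exactly} one entry slide (for a lower bound you should either normalize to this or only claim at least one).
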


\begin{proof}
Let $(T, \vec{T})$ be a canonical solution.
Each token on an orange or brown node of a supernode is destined for a teal or yellow node of the same supernode (\Cref{lem:no-external}), and no other token slides in~$\vec{T}$ (\Cref{obs:cor-73}). 
Thus, for each vertex $u \in V(D)$, the tokens on the orange nodes of supernode $u$ must first slide to a purple node (as all the nodes in $T$ are inside the vertex-gadgets). 
From \Cref{obs:logic-solution-discovery-bandwidth-hardness-backward-order}, we can assume that the tokens on these orange nodes slide first to their purple rail neighbors using one token slide each.

For each vertex $u \in V(D)$, we analyze the token slides within supernode $u$ needed to fill the yellow nodes.
Consider a single token on a brown node that is closest to the transit node for $u$.
If this token is destined for the token-free yellow node farthest from the bottom-most purple rail node of sub-ladder $1$ of $H_u$, it will require $4\Sigma(t+1) - \delta(u) + 4$ token slides.

Consider the following alternative token movement pattern.
A token on a brown node that is closest to the transit node is destined for a token-free teal node $y$.
The token on the purple node~$x$ (with $y$ as a rail neighbor), which came from the orange node adjacent to $x$, is destined for the token-free yellow node thatis farthest from the bottom-most purple node of sub-ladder $1$ of $H_u$.
This movement pattern requires $4\Sigma(t+1) - \delta(u) + 1$ token slides to get one token to a yellow node.
This is the minimum number of token slides possible to get a token to a yellow node from an orange or brown node (the only nodes from which a token can slide into a yellow node, as a consequence of \Cref{lem:no-external} and \Cref{obs:cor-73}), and the only manner in which this minimum can be achieved.

Given that $b = \Delta + \sum_{u \in V(D)} \delta(u) \cdot (4\Sigma(t+1) - \delta(u))$ and for each $u \in V(D)$, $\delta(u)$ tokens must slide into the yellow nodes of supernode $u$ incurring each at minimum $4\Sigma(t+1) - \delta(u) + 1$ token slides, we see that tokens that end up on the yellow nodes must slide using the described token movement pattern.
Consequently, for each vertex $u \in V(D)$, the number of orange nodes of supernode $u$ is $\delta(u)$.
Since $T$ satisfies conditions (B4) and (B5), for each arc-gadget $H_{uv}$ corresponding to an arc $uv \in A(D)$ incident to $u$ and $v$, the orange token-bearing nodes of exactly one sub-ladder are token-free in $T$. 
Thus, the orange nodes correspond to exactly one selected integer $i \in [t]$ for each arc-gadget representing an arc incident to $u$. 
Hence, for each arc, exactly one index $i \in [t]$ is selected and its tuple supplies exactly its first value to the tail vertex and its second value to the head vertex, and this supply equals the demand of the vertices. 
Thus, $(D,\delta,\lambda)$ is a yes-instance of \textsc{Planar Arc Supply}. 
\end{proof}

\appendix
\section{Constructing the formulas}
\subsection{Modulator to Stars and Modulator to Paths Numbers}\label{appsubsec:modulators}
Let us first write a formula $\phi(X)$ for conditions (S1), (S2), and (S3).
We assume the following predicate, $\text{Col}(x):= \lor_{i\in[5]} C_i(x)$, which holds if $x$ is a colored vertex in $(H,\mathcal{C})$ of~\Cref{sec:logic-solution-discovery-modulator-hardness} is available.
We now express each of the conditions:
\begin{flalign*}
\text{S1$(X)$:=} \quad & \forall t \in V(H) : \Bigg( C_3(t) \rightarrow \Bigg[ 
    \exists p \in V(H) : \bigg( 
        E(t,p) \wedge 
        \forall z \in V(H) : \Big( 
            \big(E(p,z)   
            \\ 
            & \qquad\qquad\qquad
            \wedge \neg \text{Col}(z)\big) \rightarrow \neg X(z) 
        \Big) 
    \bigg) \wedge 
    \forall q \in V(H): \bigg( 
        \big(E(t,q) \wedge 
            \\
    & \qquad\qquad\qquad
    (q \ne p)\big) \rightarrow  \Big( \forall z \in V(H): 
    \big(E(q,z)
            \wedge \neg \text{Col}(z) \big) 
            \\
    & \qquad\qquad\qquad
    \rightarrow X(z) 
        \Big) 
    \bigg) \Bigg] 
\Bigg), &
\end{flalign*}

\begin{flalign*}
\text{S2$(X)$:=} \quad & \forall t \in V(H): \Bigg( C_4(t) \rightarrow \Bigg[ 
    \exists p \in V(H): \bigg( 
        E(t,p) \wedge 
        \forall z \in V(H) : \Big( 
            \big( E(p,z) 
            \\
    & \qquad\qquad\qquad
           \wedge  \neg \text{Col}(z) \big) \rightarrow  X(z) 
        \Big) 
    \bigg) \wedge 
    \forall q \in V(H) : \bigg( 
        \big(E(t,q) \wedge 
        \\
    & \qquad\qquad\qquad
    (q \ne p)\big) \rightarrow  \Big( \forall z \in V(H) : 
        \big( 
            E(q,z) \wedge \neg \text{Col}(z) \big) \\
    & \qquad\qquad\qquad
    \rightarrow \neg X(z) 
        \Big) 
    \bigg) \Bigg] 
\Bigg), &
\end{flalign*}

\begin{flalign*}
\text{S3$(X)$:=} \quad & \forall x \in V(H): \Big( \text{Col}(x) \rightarrow \neg X(x) \Big), &
\end{flalign*}

\noindent and finally set $\phi(X):=$ S1$(X)$ $\land$ S2$(X)$ $\wedge$ S3$(X)$.

For a formula $\phi(X)$ with conditions (P1,) (P2), (P3), and (P4), we also assume that the predicate $\text{Dist2NoC345}(x,z)$ is available. $\text{Dist2NoC345}(x,z)$, which means that $z$ is a vertex in the distance-$2$ neighborhood of $x$ and is reachable from $x$ through a vertex not in $C_3$, $C_4$, or $C_5$, is written as: $\exists y: \text{Edge}(x,y) \wedge  \text{Edge}(y,z) \wedge \neg C_3(y) \wedge \neg C_4(y) \wedge \neg C_5(y)$. Now we write

\begin{flalign*}
\text{P1$(X)$:=} \quad & \forall t \in V(H): \Big( C_3(t) \rightarrow 
    \exists p \in V(H): \big( E(t,p) \wedge \neg \text{Col}(p) \wedge \neg X(p) \big) 
\Big), &
\end{flalign*}

\begin{flalign*}
\text{P2$(X)$:=} \quad & \forall t \in V(H): \Big( C_4(t) \rightarrow 
    \exists p \in V(H) : \big( E(t,p) \wedge \neg \text{Col}(p) \wedge X(p) \big) 
\Big), &
\end{flalign*}

\begin{flalign*}
\text{P3$(X)$:=} \quad & \forall x \in V(H) : \Bigg( \neg \text{Col}(x) \rightarrow \Big[ \Big(
    X(x) \rightarrow 
        \forall z \in V(H) : \big( 
            \text{Dist2NoC345}(x,z) 
    \\
    & \qquad\qquad\qquad
    \rightarrow  X(z) 
    \big) \Big)  \wedge 
    \Big(
    \neg X(x) \rightarrow 
        \forall z \in V(H) : 
            \\
    & \qquad\qquad\qquad
             \big( \text{Dist2NoC345}(x,z)  \rightarrow \neg X(z) 
        \big) \Big)
\Big] \Bigg), &
\end{flalign*}

\noindent and set P4$(X)$:= S3$(X)$ and $\phi(X):=$ P1$(X)$ $\land$ P2$(X)$ $\wedge$ P3$(X)$ $\wedge$ P4$(X)$.

\subsection{Twin Cover}\label{appsubsec:twincover}
Let us write a formula $\phi(X)$ for conditions (T1), (T2), (T3), (T4), and (T5).
We assume the following predicate, $\text{Col}(x) := \lor_{i\in[6]} C_i(x)$, which holds if $x$ is a colored vertex in $(H,\mathcal{C})$ of~\Cref{sec:twincover} is available.
We now express each of the conditions:
\begin{flalign*}
\text{T1$(X)$:=}\quad
& \forall x\in V(H):\big( C_4(x)\;\rightarrow\; X(x) \big), &
\end{flalign*}

\begin{flalign*}
\text{T2$(X)$:=}\quad
& \forall x \in V(H):\Bigg( C_5(x)\;\rightarrow\;\Big[ \Big(
        \big( \exists u\in V(H): C_2(u)\wedge E(x,u) \big)\;\wedge\;
        \\ &\quad \big( \exists v\in V(H): C_3(v)\wedge 
        E(x,v) \big) \Big)
        \;\;\rightarrow\;\;
        \neg X(x)
\Big] \Bigg), &
\end{flalign*}

\begin{flalign*}
\text{T3$(X)$:=}\quad
& \forall x\in V(H):\Bigg( C_2(x)\;\rightarrow\;\Bigg[ \exists p\in V(H):\bigg(
        C_6(p)\wedge E(x,p)\wedge  
\\
&\qquad
        \forall z \in V(H): \,\Big(\big[\big(C_6(z)\wedge E(p,z)\big)\rightarrow X(z)\big]\;\wedge\;
        \big[\big(C_5(z)\wedge E(p,z)\big)
        \\
        &\qquad 
        \rightarrow \neg X(z)\big]\Big)\bigg) \wedge
        \forall q\in V(H):\bigg(
            C_6(q)\wedge E(x,q) \wedge \neg E(p,q)  
            \\
        &\qquad  
      \wedge (q\neq p) \rightarrow \forall z \in V(H): \Big(\big[\big(C_6(z)\wedge E(q,z)\big)\rightarrow \neg X(z)\big]\;\wedge\;
        \\
        &\qquad  
        \big[\big(C_5(z)\wedge E(q,z)\big)
        \rightarrow X(z)\big]\Big)
        \bigg)
\Bigg] \Bigg), &
\end{flalign*}

\begin{flalign*}
\text{T4$(X)$:=}\quad
& \forall x\in V(H):\Bigg( C_3(x)\;\rightarrow\;\Bigg[ \exists p\in V(H):\bigg(
        C_6(p)\wedge E(x,p)\wedge 
\\
&\qquad
      \forall z \in V(H): \,\Big(\big[\big(C_6(z)\wedge E(p,z)\big)\rightarrow X(z)\big]\;\wedge\;
        \big[\big(C_5(z)\wedge E(p,z)\big)
        \\
        &\qquad
        \rightarrow \neg X(z)\big]\Big)\bigg)\wedge
        \forall q\in V(H):\bigg(
            C_6(q)\wedge E(x,q)\wedge \neg E(p,q)  
            \\
        &\qquad  
      \wedge (q\neq p) \rightarrow \forall z \in V(H): \Big(\big[\big(C_6(z)\wedge E(q,z)\big)\rightarrow \neg X(z)\big]\;\wedge\;
        \\
        &\qquad  
        \big[\big(C_5(z)\wedge E(q,z)\big)
        \rightarrow X(z)\big]\Big)
        \bigg)
\Bigg] \Bigg), &
\end{flalign*}

\begin{flalign*}
\text{T5$(X)$:=}\quad
& \forall x\in V(H):\Big( \big(C_1(x) \lor C_2(x) \lor C_3(x)\big) \;\rightarrow\; \neg X(x) \Big), &
\end{flalign*}

\noindent and set $\phi(X):=$ T1$(X)$ $\land$ T2$(X)$ $\wedge$ T3$(X)$ $\wedge$ T4$(X)$ $\wedge$ T5$(X)$.

\subsection{Bandwidth}\label{appsubsec:bandwidth}
Let us write a formula $\phi(X)$ for conditions (B1), (B2), (B3), (B4), and (B5).
We assume that the predicates $\mathrm{Col}_{3,6,7}(x)$, $\mathrm{Dist}_k(x,y)$, $\mathrm{Con}_{3,6,7}(x,y)$, and $\mathrm{CTFC6}(x)$ are available. 
\subparagraph*{Color test}
$x$ is a vertex of color gray ($C_3$) or black ($C_6$) or orange ($C_7$) if and only if this holds:
\[
  \mathrm{Col}_{3,6,7}(x):= C_3(x) \lor C_6(x) \lor C_7(x).
\]

\subparagraph*{Distance}
$x$ and $y$ are vertices at distance \(k\in[5]\) if and only if this holds:
\[
  \mathrm{Dist}_k(x,y):= \exists z_1,\ldots,z_{k-1}\;\Bigl( E(x,z_1)\wedge E(z_1,z_2)\wedge\cdots\wedge E(z_{k-2},z_{k-1})\wedge E(z_{k-1},y)\Bigr).
\]

\subparagraph*{Color-restricted connectivity}
The vertices $x$ and $y$ lie in the same connected component formed of vertices of colors $C_3$, $C_6$, and $C_7$ only if and only if this holds:
\[
\begin{aligned}
\mathrm{Con}_{3,6,7}(x,y) :=& (x=y)\;\lor\;
\bigg(
   \mathrm{Col}_{3,6,7}(x)\wedge\mathrm{Col}_{3,6,7}(y)\;\wedge \neg\exists S \subseteq V(H): \Bigl(
        S(x)\wedge\\
        &
        \neg S(y)\;\wedge\;\forall z \in V(H): \bigl(S(z)\rightarrow 
        \mathrm{Col}_{3,6,7}(z)\bigr)\;\wedge \forall u, v \in V(H):\bigl(
              S(u)\\\
        &
              wedge\mathrm{Col}_{3,6,7}(u)\wedge E(u,v)\wedge\mathrm{Col}_{3,6,7}(v)
              \;\rightarrow\; S(v)
        \bigr)
   \Bigr)
\bigg).
\end{aligned}
\]

\subparagraph*{Token-free \(C_7\) vertex four steps from a \(C_6\) vertex}
$x$ is a token-free orange ($C_7$) vertex that is at distance four to a black ($C_6$) vertex denoted $y$ if and only if this holds: 
\[
  \mathrm{CTFC6}(x,y):= C_3(x)\wedge\neg X(x)\wedge C_1(y)\wedge \mathrm{Dist}_4(x,y).
\]

Now we set $\phi(X):=$ B1$(X)$ $\land$ B2$(X)$ $\wedge$ B3$(X)$ $\wedge$ B4$(X)$ $\wedge$ B5$(X)$ where

\begin{flalign*}
\text{B1}(X) &:= 
  \forall x \in V(H): \Bigl(\bigl(C_3(x) \lor C_1(x) \lor  C_5(x)\bigr)\;\rightarrow\;\neg X(x)\Bigr), &
\end{flalign*}

\begin{flalign*}
\text{B2}(X) &:= 
  \forall x \in V(H): \bigl(C_2(x)\;\rightarrow\;X(x)\bigr), &
\end{flalign*}

\begin{flalign*}
\text{B3}(X) &:= 
\forall x \in V(H): \Bigl[
      C_4(x)\;\rightarrow\;
      \Bigl(
          X(x)\;\leftrightarrow\;
          \exists y \in V(H):
          \bigl[
              \mathrm{Dist}_5(x,y)\;\wedge\;C_7(y) \\
        &\qquad \;\wedge\;\neg X(y)
          \bigr]
      \Bigr)
  \Bigr], &
\end{flalign*}

\begin{flalign*}
\text{B4}(X) &:= 
  \forall x \in V(H): \Bigl[C_7(x)\;\rightarrow\;
      \forall y \in V(H): \Bigl(\bigl[\mathrm{Dist}_4(x,y)\wedge C_7(y)\bigr]\;\rightarrow\;\\ &\qquad
          \bigl[\,X(x)\leftrightarrow X(y)\,\bigr]\Bigr)\Bigr], &
\end{flalign*}

\begin{flalign*}
\text{B5}(X) &:= 
  \forall x \in V(H): \bigg[
      \Big(C_6(x) \lor C_7(x)\Big)
      \;\rightarrow \exists y_1 ,y_2, u_1, u_2 \in V(H): \bigg( 
      \\
  &\qquad
        \mathrm{CTFC6}(y_1,u_1)\;\wedge\;
        \mathrm{CTFC6}(y_2,u_2)\;\wedge\;
        y_1 \neq y_2\;\wedge 
        u_1 \neq u_2\;\wedge\;
        \\[4pt]
  &\qquad
            \mathrm{Dist}_6(u_1,u_2)\;\wedge 
            \mathrm{Con}_{3,6,7}(x,y_1)\;\wedge\;
            \mathrm{Con}_{3,6,7}(x,y_2)\;\wedge\;
            \mathrm{Con}_{3,6,7}(x,u_1)\; \\[4pt]
 &\qquad
            \wedge\; \mathrm{Con}_{3,6,7}(x,u_2) \;\wedge 
            \forall v \in V(H):\Bigl(
            \mathrm{Con}_{3,6,7}(x,v)\;\wedge\;
            C_6(v)\;\wedge\;\\[4pt]
 &\qquad
            \exists w \in V(H):\bigl(
                \mathrm{CTFC6}(w,v)\;
            \bigr)
            \;\rightarrow\;
            (v = u_1 \lor v = u_2)
        \Bigr)
    \bigg)
  \bigg]. &
\end{flalign*}

\bibliographystyle{plain}
\bibliography{ref}

\tikzset{cut mark/.style={
    densely dashed,    
    line width=.5pt,
    shorten <=2pt,    
    shorten >=2pt       
}}

\tikzset{
  vbrace/.style={decorate,
                 decoration={brace, amplitude=4pt}, 
                 thick}
}

 \begin{figure}[p]
  \centering
  \begin{tikzpicture}[>=stealth, x=2cm, y=0.1cm]
    \path[use as bounding box] (-0.6,10) rectangle (1.6,-130);
    \draw[thick] (0,20) -- (0,0);
    \draw[thick] (0,-60) -- (0,-65);
    \draw[thick] (0,-70) -- (0,-85);
    \draw[thick] (0,-90) -- (0,-105);
    \draw[thick] (0,-110) -- (0,-120);
    \draw[thick] (1,20) -- (1,0);
    \draw[thick] (1,-60) -- (1,-65);
    \draw[thick] (1,-70) -- (1,-85);
    \draw[thick] (1,-90) -- (1,-105);
    \draw[thick] (1,-110) -- (1,-120);
    \foreach \n in {-20,...,0} {
      \draw (0,-\n) -- (1,-\n);
      \foreach \f in {0.25,0.5,0.75}
        \fill[gray] (\f,-\n) circle[radius=0.04cm];
    }
    \foreach \n in {60,...,65} {
      \draw (0,-\n) -- (1,-\n);
      \foreach \f in {0.25,0.5,0.75}
        \fill[gray] (\f,-\n) circle[radius=0.04cm];
    }
    \foreach \n in {70,...,85} {
      \draw (0,-\n) -- (1,-\n);
      \foreach \f in {0.25,0.5,0.75}
        \fill[gray] (\f,-\n) circle[radius=0.04cm];
    }
    \foreach \n in {90,...,105} {
      \draw (0,-\n) -- (1,-\n);
      \foreach \f in {0.25,0.5,0.75}
        \fill[gray] (\f,-\n) circle[radius=0.04cm];
    }
    \foreach \n in {110,...,120} {
      \draw (0,-\n) -- (1,-\n);
      \foreach \f in {0.25,0.5,0.75}
        \fill[gray] (\f,-\n) circle[radius=0.04cm];
    }
    \draw[dotted,violet,thick] (0,-85) -- (0,-90);
    \draw[dotted,violet,thick] (0,-106) -- (0,-109);
    \draw[dotted,violet,thick] (0,-65) -- (0,-70);
    \draw[dotted,violet,thick] (0,0) -- (0,-60);
    \foreach \n in {-19,...,0}    \fill[violet]      (0,-\n) circle[radius=0.04cm];
    \foreach \n in {60,...,65}    \fill[violet]      (0,-\n) circle[radius=0.04cm];
    \foreach \n in {70,...,85}    \fill[violet]      (0,-\n) circle[radius=0.04cm];
    \foreach \n in {110,...,120}   \fill[violet]      (0,-\n) circle[radius=0.04cm];
    \foreach \n in {90,...,105}   \fill[violet]      (0,-\n) circle[radius=0.04cm];
   
    \draw[dotted,teal,thick] (1,-85) -- (1,-90);
    \draw[dotted,teal,thick] (1,-106) -- (1,-109);
    \draw[dotted,teal,thick] (1,-65) -- (1,-70);
    \draw[dotted,teal,thick] (1,0) -- (1,-60);
    \foreach \n in {-20,...,0}    \fill[teal]  (1,-\n) circle[radius=0.04cm];
    \foreach \n in {60,...,65}    \fill[teal]      (1,-\n) circle[radius=0.04cm];
    \foreach \n in {70,...,85}    \fill[teal]  (1,-\n) circle[radius=0.04cm];
    \foreach \n in {110,...,114}   \fill[teal]  (1,-\n) circle[radius=0.04cm];
    \foreach \n in {90,...,105}   \fill[teal]  (1,-\n) circle[radius=0.04cm];
    \foreach \n in {-20,...,-15}   \fill[yellow] (0,-\n) circle[radius=0.04cm];
    \foreach \n in {115,...,120}   \fill[brown] (1,-\n) circle[radius=0.04cm];
    \node[anchor=north west,font=\scriptsize,color=green!60!black] at (-1.3,15) {node of type $d$};
    \draw[->,color=green!60!black] (-0.5,15) -- (0,20);
    \draw[decorate,decoration={brace,amplitude=3pt,mirror},color=green!60!black] 
        (-0.1,20) -- (-0.1,15);
    \draw[->,color=green!60!black] (-1.3,18) -- (-0.2,18);
    \node[anchor=east,font=\scriptsize,color=green!60!black] at (-1.35,18) {demand nodes};
    \node[anchor=north west,font=\scriptsize,color=green!60!black] at (-1.3,-105)  {transit node for $u$};
    \draw[->,color=green!60!black] (-0.5,-105) -- (0,-100);
    \node[anchor=north west,font=\scriptsize,color=green!60!black] at (-1.3,-125)  {node of type $a$};
    \draw[->,color=green!60!black] (-0.5,-125) -- (0,-120);
    \node[anchor=north west,font=\scriptsize,color=green!60!black] at (-1.3,-110)  {node of type $b$};
    \draw[->,color=green!60!black] (-0.5,-114) -- (0,-119);

    \draw[decorate,decoration={brace,amplitude=3pt},color=green!60!black] 
        (1.1,-115) -- (1.1,-120);
    \draw[->,color=green!60!black] (2,-117.5) -- (1.25,-117.5);
    \node[anchor=west,font=\scriptsize,color=green!60!black] at (2.05,-117.5) {reservoir nodes};
    \foreach \y in {20,0}{
    \draw[cut mark] (-1.15,\y) -- (2.15,\y);
    }
    \foreach \y in {-60,-80,-100,-120}{
        \draw[cut mark] (-1.15,\y) -- (2.15,\y);
    }
    \foreach \top/\pt in {19/1, -61/{t-1}, -81/{t}, -101/{t+1}}{%
        \draw[vbrace,color=green!60!black] (1.35,\top) -- ++(0,-19)
        node[midway,right=6pt,font=\scriptsize,color=green!60!black] {sub-ladder $\pt$};
    }
  \end{tikzpicture}
  \caption{A sketch of the isolated and connected side rails for some vertex $u \in V(D)$ as discussed in \Cref{thm:logic-solution-discovery-bandwidth-hardness}. 
  Demand nodes at the top of the connected side rail are colored yellow; reservoir nodes at the bottom of the isolated side rail are brown. 
  Reservoir nodes contain tokens.
  The thin dotted lines indicate additional nodes that are omitted for clarity.}
  \label{fig:logic-solution-discovery-vertex-gadget-bandwidth-hardness}
\end{figure}
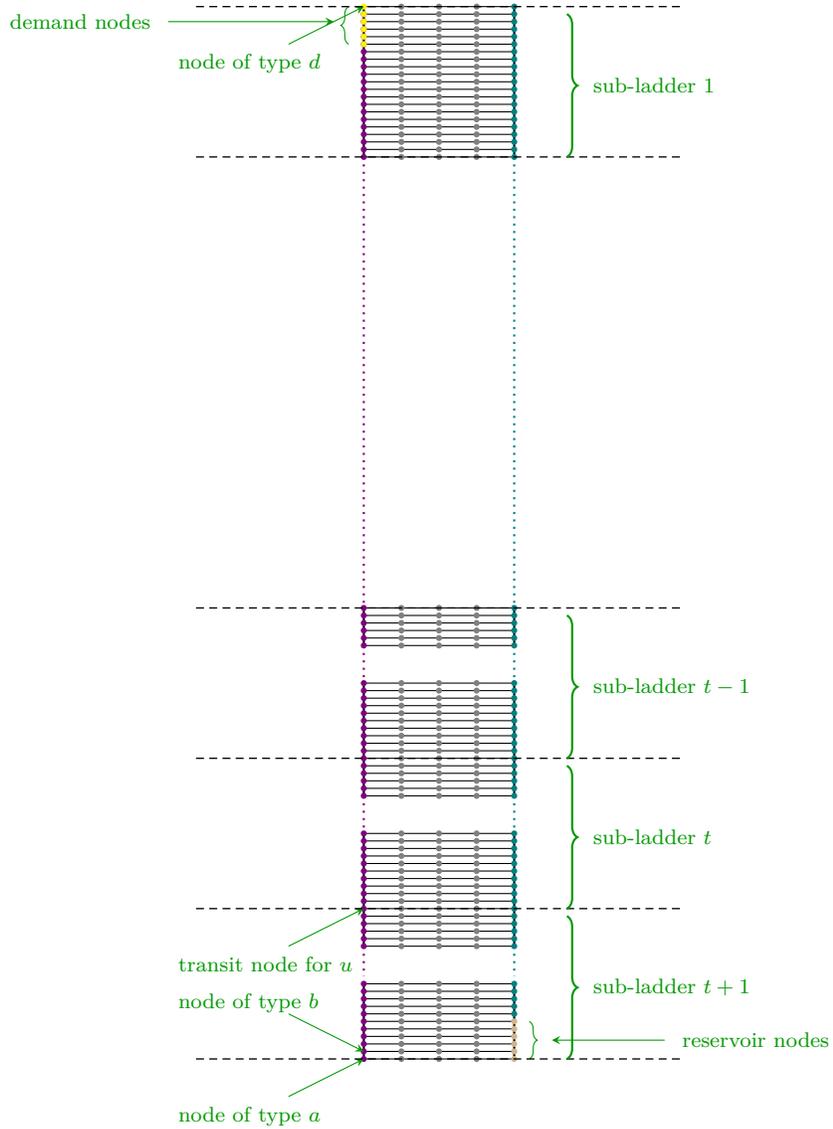   

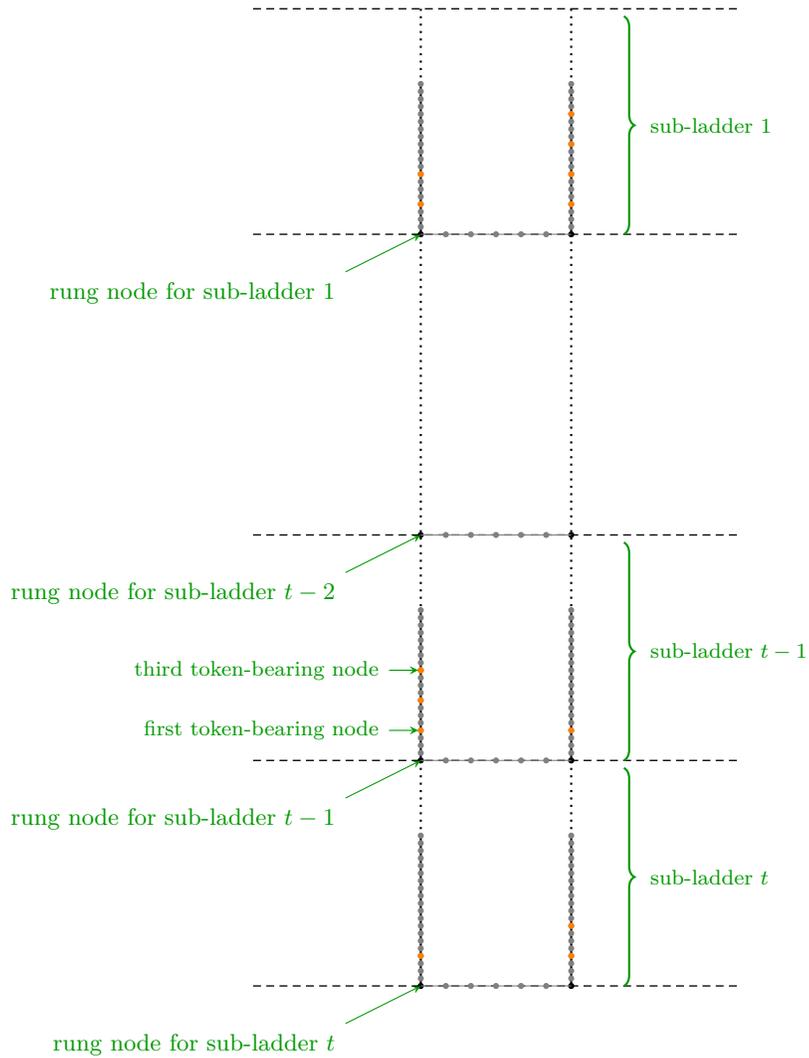
\begin{figure}[h]
  \centering
\begin{tikzpicture}[>=stealth, x=2cm, y=0.1cm]
  \path[use as bounding box] (-2.1,10) rectangle (2.1,-130);
  \draw[thick] (-0.5, 0) -- (-0.5, -20);
  \draw[thick] (-0.5, -70) -- (-0.5, -90);
  \draw[thick] (-0.5,-100) -- (-0.5,-120);
  \draw[dotted,thick] (-0.5,10)  -- (-0.5,-10);
  \draw[dotted,thick] (-0.5,-20)  -- (-0.5,-70);
  \draw[dotted,thick] (-0.5,-90)  -- (-0.5,-100);
  \foreach \y in {0,-1,...,-20}    \fill[gray] (-0.5,\y)   circle[radius=0.04cm];
  \foreach \y in {-70,-71,...,-90}    \fill[gray] (-0.5,\y)   circle[radius=0.04cm];
  \foreach \y in {-100,-101,...,-120} \fill[gray] (-0.5,\y)   circle[radius=0.04cm];
  \foreach \y/\lbl in {-20/{1}, -60/{t-2}, -90/{t-1}, -120/{t}} {
  \fill[black] (-0.5,\y) circle[radius=0.04cm];
  \node[anchor=north east, font=\footnotesize, color=green!60!black] at (-1,\y-5) {rung node for sub-ladder $\lbl$};
  \draw[->,color=green!60!black] (-1,\y-5) -- (-0.5,\y); 
  }
  \foreach \y in {-116, -82, -78, -16, -12}        
    \fill[orange] (-0.5,\y) circle[radius=0.04cm];
  \fill[orange] (-0.5,-86) circle[radius=0.04cm];
  \node (olab-86) [left=12pt,font=\scriptsize,color=green!60!black]
        at (-0.5,-86) {first token-bearing node};
  \draw[->,shorten >=1pt,color=green!60!black] (olab-86.east) -- (-0.5,-86);
  \fill[orange] (-0.5,-78) circle[radius=0.04cm];
  \node (olab-78) [left=12pt,font=\scriptsize,color=green!60!black]
        at (-0.5,-78) {third token-bearing node};
  \draw[->,shorten >=1pt,color=green!60!black] (olab-78.east) -- (-0.5,-78);
  \draw[thick] (0.5, 0) -- (0.5, -20);
  \draw[thick] (0.5, -70) -- (0.5, -90);
  \draw[thick] (0.5,-100) -- (0.5,-120);
  \draw[dotted,thick] (0.5,10)  -- (0.5,-10);
  \draw[dotted,thick] (0.5,-20)  -- (0.5,-70);
  \draw[dotted,thick] (0.5,-90)  -- (0.5,-100);
  \foreach \y in {0,-1,...,-20}    \fill[gray] (0.5,\y)   circle[radius=0.04cm];
  \foreach \y in {-70,-71,...,-90}    \fill[gray] (0.5,\y)   circle[radius=0.04cm];
  \foreach \y in {-100,-101,...,-120} \fill[gray] (0.5,\y)   circle[radius=0.04cm];
  \foreach \y/\lbl in {-20/{-2}, -60/{-1}, -90/{}, -120/{}} {
    \fill[black] (0.5,\y) circle[radius=0.04cm];
    \node (ylab\y) [right=12pt,font=\scriptsize,color=black]
          at (0.5,\y) {};
  }
  \foreach \y in {-116, -16, -12, -8, -4}        
    \fill[orange] (0.5,\y) circle[radius=0.04cm];
  \foreach \y/\lbl/\llbl in {-112/{}/{2}, -86/{-1}/{3}} {
    \fill[orange] (0.5,\y) circle[radius=0.04cm];
    \node (olab\y) [right=12pt,font=\scriptsize,color=orange]
          at (0.5,\y) {};
  }
  \foreach \y in {10,-20,-60,-90,-120}{
    \draw[cut mark] (-1.65,\y) -- (1.65,\y);
    }
    \foreach \top/\pt in {9/1, -61/{t-1}, -91/{t}}{%
        \draw[vbrace,color=green!60!black] (0.85,\top) -- ++(0,-29)
        node[midway,right=6pt,font=\scriptsize,color=green!60!black] {sub-ladder $\pt$};
    }
    \foreach \y in {-20,-60,-90,-120} {
    \draw[gray] (-0.5,\y) -- (0.5,\y);
    \foreach \i in {1,2,3,4,5} {
      \fill[gray] ({-0.5 + \i*1/6},\y) circle[radius=0.04cm];
    }
  }
    
\end{tikzpicture}
  \caption{A sketch of the arc-gadget for some edge $uv \in A(D)$. 
  Only the orange token-bearing nodes contain tokens.
  Here, the $x$ and $y$ values for the first supply tuple for $uv$ are $2$ and $4$, respectively. 
  The values for the supply tuple $t-1$ for $uv$ are $3$ and $1$, respectively, and those for the supply tuple $t$ for $uv$ are $1$ and $2$, respectively.
  }
  \label{fig:logic-solution-discovery-arc-gadget-bandwidth-hardness}
\end{figure}
 
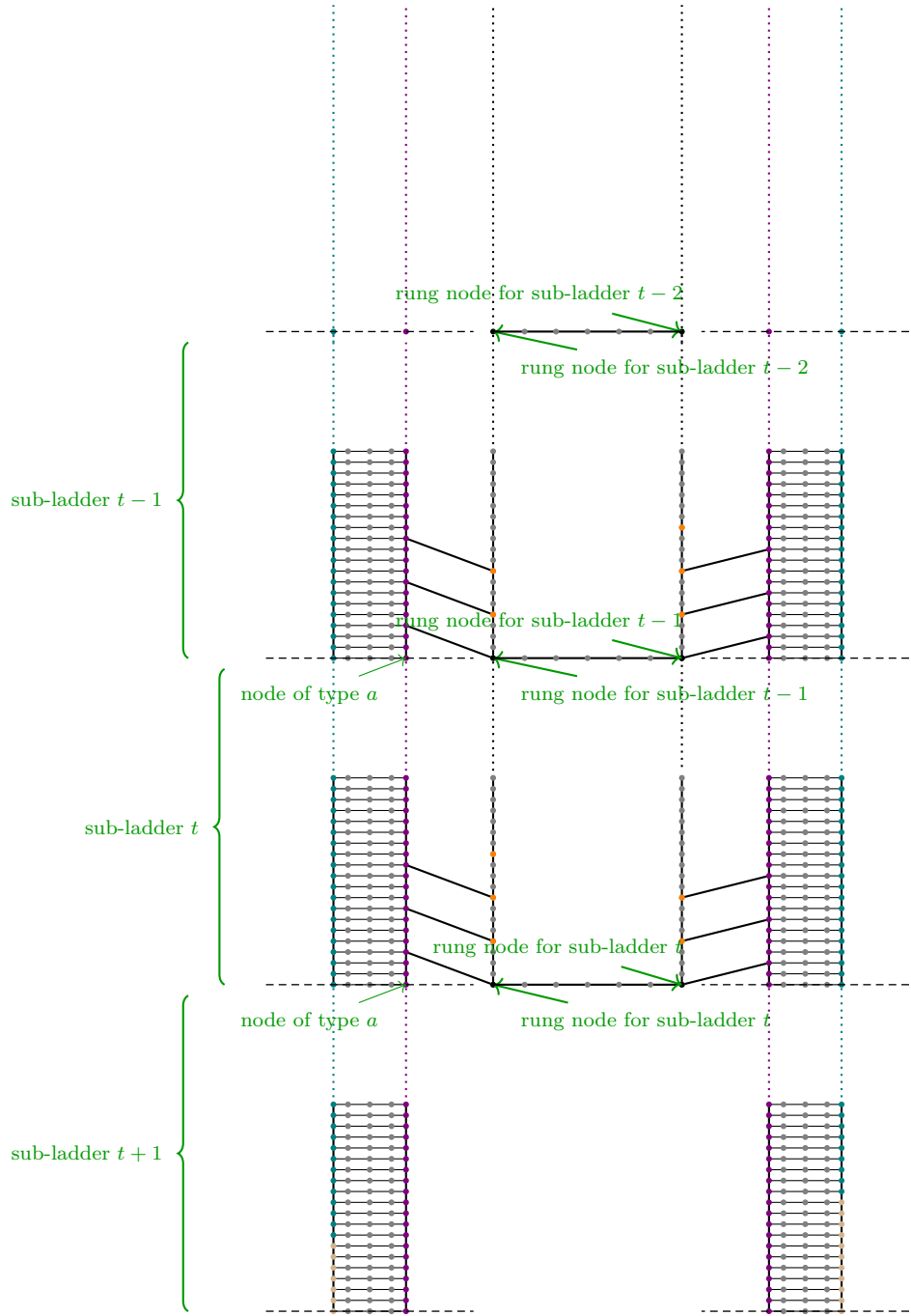
\begin{figure}[h]
  \centering
  \begin{tikzpicture}[
      x=0.15cm,
      y=2cm,
      rotate=90,
      baseline=(current bounding box.center)
    ]
    \draw[thick] (60,0.6) -- (62,0);
    \draw[thick] (64,0.6) -- (66,0);
    \draw[thick] (68,0.6) -- (70,0);
    \draw[thick] (90,0.6) -- (92,0);
    \draw[thick] (94,0.6) -- (96,0);
    \draw[thick] (98,0.6) -- (100,0);
    \foreach \x in {30,...,49}{
      \draw (\x,0) -- (\x,-0.25) -- (\x,-0.5);
      \foreach \y in {-0.10,-0.25,-0.40}
        \fill[gray] (\x,\y) circle[radius=0.04cm];
    }
    \foreach \x in {60,...,79}{
      \draw (\x,0) -- (\x,-0.25) -- (\x,-0.5);
      \foreach \y in {-0.10,-0.25,-0.40}
        \fill[gray] (\x,\y) circle[radius=0.04cm];
    }
    \foreach \x in {90,...,109}{
      \draw (\x,0) -- (\x,-0.25) -- (\x,-0.5);
      \foreach \y in {-0.10,-0.25,-0.40}
        \fill[gray] (\x,\y) circle[radius=0.04cm];
    }
    \draw[thick] (30,0) -- (49,0);
    \draw[thick] (60,0) -- (79,0);
    \draw[dotted,violet,thick] (49,0) -- (60,0);
    \draw[dotted,violet,thick] (79,0) -- (90,0);
    \draw[thick] (90,0) -- (109,0);
    \draw[dotted,violet,thick] (109,0) -- (120,0);
    \draw[dotted,violet,thick] (120,0) -- (150,0);
    \foreach \x in {30,...,49}{
      \fill[violet] (\x,0) circle[radius=0.04cm];
    }
    \foreach \x in {60,...,79}{
      \fill[violet] (\x,0) circle[radius=0.04cm];
    }
    \foreach \x in {90,...,109}{
      \fill[violet] (\x,0) circle[radius=0.04cm];
    }
    \fill[violet] (120,0) circle[radius=0.04cm];
    \foreach \x in {60,...,78}   \draw[thick] (\x,0.6) -- ++(1,0);
    \foreach \x in {90,...,108}  \draw[thick] (\x,0.6) -- ++(1,0);
    \draw[dotted,thick] (80,0.6) -- (90,0.6);
    \draw[dotted,thick] (109,0.6) -- (120,0.6);
    \draw[dotted,thick] (119,0.6) -- (150,0.6);
    \foreach \x in {60,...,79}{
      \fill[gray] (\x,0.6) circle[radius=0.04cm];
    }
    \foreach \x in {90,...,109}{
      \fill[gray] (\x,0.6) circle[radius=0.04cm];
    }
    \draw[thick] (30,-0.5) -- (49,-0.5);
    \draw[thick] (60,-0.5) -- (79,-0.5);
    \draw[dotted,teal,thick] (49,-0.5) -- (60,-0.5);
    \draw[dotted,teal,thick] (79,-0.5) -- (90,-0.5);
    \draw[thick] (90,-0.5) -- (109,-0.5);
    \draw[dotted,teal,thick] (109,-0.5) -- (120,-0.5);
    \draw[dotted,teal,thick] (120,-0.5) -- (150,-0.5);
    \foreach \x in {30,...,40}{
        \fill[brown] (\x,-0.5) circle[radius=0.04cm];
    }
    \foreach \x in {41,...,49}{
        \fill[teal] (\x,-0.5) circle[radius=0.04cm];
    }
    \foreach \x in {60,...,79}{
        \fill[teal] (\x,-0.5) circle[radius=0.04cm];
    }
    \foreach \x in {90,...,109}{
        \fill[teal] (\x,-0.5) circle[radius=0.04cm];
    }
    \fill[teal] (120,-0.5) circle[radius=0.04cm];
    \node[left=12pt,font=\scriptsize]  (lab2a) at (30,0) {};
    \node[left=12pt,font=\scriptsize] (lab1a) at (33,0) {};
    \node[left=12pt,font=\scriptsize] (lab1aa) at (39,0) {};
    \node[below left=12pt,font=\scriptsize] (lab1b) at (60,0) {};
    \node[above right=50pt,font=\scriptsize] (lab1c) at (62,0) {};
    \node[below left=12pt,font=\scriptsize] (labGray90) at (90,0)
      {};
    \foreach \x/\txt in {90/{}}
      {
    \fill[black] (\x,0.6) circle[radius=0.04cm];
        \node[above left=10pt,font=\scriptsize,color=black] (ylab\x)
          at (\x,0.6) {\txt};
      }
    \fill[black] (120,0.6) circle[radius=0.04cm];
    \node[above left=10pt,font=\scriptsize,color=green!60!black] (ylab120)
      at (120,0.4) {rung node for sub-ladder $t-2$};
    \draw[->,green!60!black,thick] (ylab120) -- (120,0.6);
    \foreach \x in {64,68,94,98,102}
      \fill[orange] (\x,0.6) circle[radius=0.04cm];
    \node[left=10pt,font=\scriptsize,color=orange] (olab68)
      at (68,0.6) {};
    \node[left=10pt,font=\scriptsize,color=orange] (olab102)
      at (102,0.6) {};
    \draw[thick] (60,1.9) -- (63,2.5);
    \draw[thick] (64,1.9) -- (67,2.5);
    \draw[thick] (68,1.9) -- (71,2.5);
    \draw[thick] (90,1.9) -- (93,2.5);
    \draw[thick] (94,1.9) -- (97,2.5);
    \draw[thick] (98,1.9) -- (101,2.5);
    \foreach \x in {30,...,49}{
      \draw (\x,2.5) -- (\x,2.75) -- (\x,3);
      \foreach \y in {2.6,2.75,2.9}
        \fill[gray] (\x,\y) circle[radius=0.04cm];
    }
    \foreach \x in {60,...,79}{
      \draw (\x,2.5) -- (\x,2.75) -- (\x,3);
      \foreach \y in {2.6,2.75,2.9}
        \fill[gray] (\x,\y) circle[radius=0.04cm];
    }
    \foreach \x in {90,...,109}{
      \draw (\x,2.5) -- (\x,2.75) -- (\x,3);
      \foreach \y in {2.6,2.75,2.9}
        \fill[gray] (\x,\y) circle[radius=0.04cm];
    }
    \draw[thick] (30,2.5) -- (49,2.5);
    \draw[thick] (60,2.5) -- (79,2.5);
    \draw[dotted,violet,thick] (49,2.5) -- (60,2.5);
    \draw[dotted,violet,thick] (79,2.5) -- (90,2.5);
    \draw[thick] (90,2.5) -- (109,2.5);
    \draw[dotted,violet,thick] (109,2.5) -- (119,2.5);
    \draw[dotted,violet,thick] (120,2.5) -- (150,2.5);
    \foreach \x in {30,...,49}{
        \fill[violet] (\x,2.5) circle[radius=0.04cm];
    }
    \foreach \x in {60,...,79}{
        \fill[violet] (\x,2.5) circle[radius=0.04cm];
    }
    \foreach \x in {90,...,109}{
        \fill[violet] (\x,2.5) circle[radius=0.04cm];
    }
    \fill[violet] (120,2.5) circle[radius=0.04cm];
    \foreach \x in {60,...,78}   \draw[thick] (\x,1.9) -- ++(1,0);
    \foreach \x in {90,...,108}  \draw[thick] (\x,1.9) -- ++(1,0);
    \draw[dotted,thick] (79,1.9) -- (90,1.9);
    \draw[dotted,thick] (109,1.9) -- (120,1.9);
    \draw[dotted,thick] (120,1.9) -- (150,1.9);
    \foreach \x in {60,...,79}{
      \fill[gray] (\x,1.9) circle[radius=0.04cm];
    }
    \foreach \x in {90,...,109}{
      \fill[gray] (\x,1.9) circle[radius=0.04cm];
    }
    \draw[thick] (30,3) -- (49,3);
    \draw[thick] (60,3) -- (79,3);
    \draw[dotted,teal,thick] (49,3) -- (60,3);
    \draw[dotted,teal,thick] (79,3) -- (90,3);
    \draw[thick] (90,3) -- (109,3);
    \draw[dotted,teal,thick] (109,3) -- (120,3);
    \draw[dotted,teal,thick] (119,3) -- (150,3);
    \foreach \x in {30,...,36}{
        \fill[brown] (\x,3) circle[radius=0.04cm];
    }
    \foreach \x in {37,...,49}{
      \fill[teal] (\x,3) circle[radius=0.04cm];
    }
    \foreach \x in {60,...,79}{
      \fill[teal] (\x,3) circle[radius=0.04cm];
    }
    \foreach \x in {90,...,109}{
      \fill[teal] (\x,3) circle[radius=0.04cm];
    }
    \fill[teal] (120,3) circle[radius=0.04cm];
    \node[right=12pt,font=\scriptsize] (mLab1a) at (30,2.5)
      {};
    \node[below right=12pt,font=\scriptsize] (mLabGray90) at (90,2.5)
      {};
    \node[below right=12pt,font=\scriptsize] (mLabGray91) at (60,2.5)
      {};
    \foreach \x/\txt in {60/{rung node for sub-ladder $t$}, 90/{rung node for sub-ladder $t-1$}}
      {
        \fill[black] (\x,1.9) circle[radius=0.04cm];
        \node[below right=10pt,font=\scriptsize,color=green!60!black] (yMirror\x)
          at (\x,1.9) {\txt};
          \draw[->,green!60!black,thick] (yMirror\x) -- (\x,1.9);
      }
    \node[below left=10pt,font=\scriptsize,color=green!60!black] (a60)
          at (60,2.5) {node of type $a$};
          \draw[->,green!60!black] (a60) -- (60,2.5);
    \node[below left=10pt,font=\scriptsize,color=green!60!black] (a90)
          at (90,2.5) {node of type $a$};
          \draw[->,green!60!black] (a90) -- (90,2.5);
    \fill[black] (120,1.9) circle[radius=0.04cm];
    \node[below right=10pt,font=\scriptsize,color=green!60!black] (yMirror120)
      at (120,1.9) {rung node for sub-ladder $t-2$};
    \draw[->,green!60!black,thick] (yMirror120) -- (120,1.9);
    \foreach \x in {64,68,72,94,98}
      \fill[orange] (\x,1.9) circle[radius=0.04cm];
    \node[above right=10pt,font=\scriptsize,color=orange] (oMirror68)
      at (68,1.9) {};
    \foreach \x in {60,90,120}{  
    \draw[thick] (\x,0.6) -- (\x,1.9);
     \foreach \k in {1,...,5}{
      \pgfmathsetmacro{\yy}{0.6 + (1.9-0.6)*\k/6}
      \fill[gray] (\x,\yy) circle[radius=0.04cm];
     }
    }
    \foreach \x/\txt in {60/{rung node for sub-ladder $t$}, 90/{rung node for sub-ladder $t-1$}}
      {
    \fill[black] (\x,0.6) circle[radius=0.04cm];
        \node[above left=10pt,font=\scriptsize,color=green!60!black] (ylab\x)
          at (\x,0.4) {\txt};
          \draw[->,green!60!black,thick] (ylab\x) -- (\x,0.6);
      }
    \foreach \y in {30,60,90,120}{
    \draw[cut mark] (\y,-1) -- (\y,0.5);
    }
    \foreach \y in {30,60,90,120}{
    \draw[cut mark] (\y,3.5) -- (\y,2);
    }
    \draw[vbrace,color=green!60!black] (30,4) -- (59,4)
    node[midway,left=6pt,font=\scriptsize,color=green!60!black] {sub-ladder $t+1$};
    \draw[vbrace,color=green!60!black] (60,3.75) -- (89,3.75)
    node[midway,left=6pt,font=\scriptsize,color=green!60!black] {sub-ladder $t$};
    \draw[vbrace,color=green!60!black] (90,4) -- (119,4)
    node[midway,left=6pt,font=\scriptsize,color=green!60!black] {sub-ladder $t-1$};   
  \end{tikzpicture}
  \caption{A sketch of how the connected side rail for some vertex $u \in V(D)$ and the connected side rail for some vertex $v \in V(D)$ connect to an arc-gadget for arc $uv \in A(D)$.
  Here $v$ is assigned type $d$ in the neighborhood of $u$, and $u$ is assigned type $c$ in the neighborhood of $v$.
  The thin dotted lines indicate additional nodes that are omitted for clarity.}
  \label{fig:logic-solution-discovery-graph-bandwidth-hardness}
\end{figure}

\end{document}